\newcommand*{\mathbold}[1]{{\bm #1}}
\DeclareMathOperator{\tr}{tr}
\newcommand{\cH}{\mathscr{H}}
\newcommand*{\ci}{\mathrm{i}}
\newcommand*{\di}{\mathrm{d}} 
\newcommand{\eps}{\varepsilon}
\newcommand{\cP}{\mathscr{P}(\cH)}
\newcommand{\cPp}{\mathscr{P}_+(\cH)}
\newcommand{\cL}{\mathscr{L}}
\newcommand{\bR}{\mathbb{R}}
\newcommand{\bN}{\mathbb{N}}
\newcommand{\unorm}[1]{|\!|\!| #1 |\!|\!|}
\newcommand{\unormb}[1]{\Big|\!\Big|\!\Big| #1 \Big|\!\Big|\!\Big|}
\renewcommand{\vec}[1]{\mathbold{#1}}
\renewcommand{\a}{\vec{a}}
\renewcommand{\b}{\vec{b}}
\renewcommand{\c}{\vec{c}}
\newtheorem{lemma}{Lemma}
\newtheorem{proposition}[lemma]{Proposition}
\newtheorem{theorem}[lemma]{Theorem}
\newtheorem{corollary}[lemma]{Corollary}
\theoremstyle{plain}
\newtheorem{remark}{Remark}
\begin{document}
\nocite{*}
\title[Generalized Log-Majorization and Multivariate Trace Inequalities]{Generalized Log-Majorization\\
and Multi\-variate Trace Inequalities}

\author{Fumio Hiai}
\address{Tohoku University (Emeritus)}
\email{hiai.fumio@gmail.com}

\author{Robert K\"onig}
\address{Institute for Advanced Study \& \\
Zentrum Mathematik\\
Technische Universit\"at M\"unchen\\
Garching, Germany}
\email{robert.koenig@tum.de} 

\author{Marco Tomamichel}
\address{School of Physics\\
The University of Sydney\\
Sydney, Australia}
\email{marcotom.ch@gmail.com}

\begin{abstract}
We show that recent multivariate generalizations of the Araki-Lieb-Thirring inequality and the Golden-Thompson inequality [Sutter, Berta, and To\-ma\-michel, {\em Comm.  Math. Phys.} (2017)] for Schatten norms hold more generally for all unitarily invariant norms and certain variations thereof. The main technical contribution is a generalization of the concept of log-majorization which allows us to treat majorization with regards to logarithmic integral averages of vectors of singular values.
\end{abstract}
\maketitle

\section{Introduction}
Majorization and log-majorization are powerful and versatile tools for proving trace and norm inequalities (see, e.g.,~\cite{ando89,marshall11,hiaipetz14} for overviews on the topic). A fundamental property of unitarily invariant norms (including Schatten $p$-norms and the trace norm) can roughly be stated as follows (see~\cite[Thm.~IV.2.2]{bhatia97} or~\cite[Prop.~4.4.13]{hiai10b}):
\begin{flushleft}
\hspace{1cm}\parbox{10.2cm}{\em For two matrices $A$ and $B$, the singular values of $A$ are {weakly} majorized by the singular values of $B$ if and only if $\unorm{A} \leq \unorm{B}$ for every unitarily invariant norm $\unorm{\cdot}$.}\hspace*{0.82cm}($\star$)\\[0.2cm]
\end{flushleft}
A natural approach to prove norm inequalities for general unitarily invariant norms  then proceeds as follows: First, the desired inequality is shown for the operator norm where such inequalities often boil down to operator inequalities and are easier to prove. Next, it is shown using antisymmetric tensor power calculus that the operator norm inequality implies log-majorization and thus weak majorization of the eigenvalues. Consequently, the desired inequalities  follow directly from ($\star$).

Let us illustrate this approach with an example (the reader unfamiliar with the notation is referred to Section~\ref{sec:not}). For two positive definite operators $A_1, A_2$ and any $\theta \in (0,1)$,
consider the following special case of the Araki-Lieb-Thirring inequality~\cite{lieb76,araki90}:
\begin{align}
  \left\| \left( A_1^{\frac{\theta}{2}} A_2^{\theta} A_1^{\frac{\theta}{2}}\right)^{\frac{1}{\theta}} \right\| \leq \left\| A_1^{\frac12} A_2 A_1^{\frac12}\right\| , \label{eq:lt}
\end{align} 
where $\|\cdot\|$ denotes the operator norm. Its elementary proof, using operator monotonicity of $t \mapsto t^{\theta}$ as its main ingredient, simply argues that
\begin{align}
  A_1^{\frac12} A_2 A_1^{\frac12} \leq I 
  \implies A_2 \leq A_1^{-1}
  \implies A_2^{\theta} \leq A_1^{-\theta}
  \implies A_1^{\frac\theta2} A_2^{\theta} A_1^{\frac\theta2} \leq I  \,.
\end{align}
Inequality~\eqref{eq:lt} for positive semi-definite operators follows by continuity.
Applying~\eqref{eq:lt} to the antisymmetric powers $\land^k A_1$ and $\land^k A_2$, using Properties~\eqref{it:asymprod}, \eqref{it:asympow} and~\eqref{it:asymeig} of the antisymmetric tensor power discussed in the next section, we find
\begin{align}
  \vec{\lambda}\left( \left( A_1^{\frac{\theta}{2}} A_2^{\theta} A_1^{\frac{\theta}{2}}\right)^{\frac{1}{\theta}} \right) \prec_{\log} \vec{\lambda}\left( A_1^{\frac12} A_2 A_1^{\frac12} \right) .
\end{align}
Here $\vec{\lambda}(A)$ is a vector comprising the eigenvalues of $A$ in decreasing order counting multiplicities and $\prec_{\log}$ denotes log-majorization. 
Since log-ma\-jo\-rization implies {weak} majorization, the relation~($\star$) allows us to lift~\eqref{eq:lt} to arbitrary unitarily invariant norms, including the trace. In fact, log-majorization is stronger than {weak} majorization and thus allows us to derive stronger norm inequalities (see, e.g.,~\cite{hiai10b}).

{The use of the antisymmetric tensor power approach has so far been  restricted to matrix functions made from operations of products, absolute values and powers (see, e.g.,~\cite{araki90} (explained above) and~\cite{AH94,ando94}). In this work we extend the approach to settings with a logarithmic integral average} so that it can be applied to recent multivariate trace inequalities~\cite{sutter16}. For example,~\cite[Thm.~3.2]{sutter16} specialized to the operator norm and three positive semi-definite matrices $A_1, A_2, A_3$ and $\theta \in (0, 1)$ generalizes the Araki-Lieb-Thirring inequality and reads
\begin{align}
    \log \left\| \left| A_1^{\theta} A_2^{\theta} A_3^{\theta} \right|^{\frac{1}{\theta}} \right\|
    \leq \int_{-\infty}^{\infty}  \log \left\| A_1 A_2^{1+\ci t} A_3 \right\|\, \di\beta_{\theta}(t) \,, \label{eq:three1}
\end{align}
where $\di \beta_\theta(t)$ is some probability measure on $\mathbb{R}$. Using the antisymmetric tensor power technique this can be transformed into a log-majorization relation, namely\footnote{The details of this derivation are given in Section~\ref{sec:app}.}
\begin{align} 
   \vec{\lambda}\left( \left| A_1^{\theta} A_2^{\theta} A_3^{\theta} \right|^{\frac{1}{\theta}} \right) \prec_{\log} 
  \exp \int_{-\infty}^{\infty} \log \vec{\lambda} \left( \left| A_1 A_2^{1+\ci t} A_3 \right|  \right)\, \di\beta_{\theta}(t) \,. \label{eq:three2}
\end{align}
However, known results for majorization or log-majorization in the spirit of ($\star$) do not apply to~\eqref{eq:three2} due to the integral average of vectors on the right-hand side.

In Sections~\ref{sec:mainweak} we extend ($\star$) to the case of weak majorization relations where the right-hand side contains an integral average of vectors.
Our first main result, Theorem~\ref{th:mainweak} in Section~\ref{sec:main}, 
 deals with weak log-majorization relations of the form~\eqref{eq:three2}. It establishes that the weak log-majorization relation is equivalent to two other conditions involving unitarily invariant norms, and in particular implies that~\eqref{eq:three1} holds for all unitarily invariant norms and certain variations thereof. 
Our second main result, split into Theorems~\ref{th:main} and~\ref{th:mainadditional} in Section~\ref{sec:logmajorizationintegral}, proves a similar characterization directly for the log-majorization relation in~\eqref{eq:three2} and implies even stronger inequalities for unitarily invariant norms. For the special case where no average is present, Propositions~\ref{pr:charweaklog} and~\ref{pr:charlog} imply new characterizations of weak log-majorization and log-majorization, respectively.
 The implications for multivariate trace inequalities are discussed in Section~\ref{sec:app}. There we present multivariate generalizations of the Araki-Lieb-Thirring inequality,  the Golden-Thompson inequality~\cite{golden65,thompson65} and Lieb's triple matrix inequality~\cite{lieb73a}\,---\,beyond the generalizations recently established in~\cite{sutter16}. We also provide a simplified proof of~\eqref{eq:three1} for arbitrarily many matrices and the operator norm in Appendix~\ref{app:shorter}.

\section{Preliminaries}
\label{sec:not}

\subsection*{Majorization} Let $\cH$ be a Hilbert space of dimension $d := \dim\cH < \infty$, $\cL(\cH)$ the set of linear operators on $\cH$, $\cP$ be the
set of all positive semi-definite operators in $\cL(\cH)$, and $\cPp$ the set of all invertible (positive definite)
operators in $\cP$. For {self-adjoint} $A, B \in \cL(\cH)$, we write $A \geq B$ to indicate that $A - B \in \cP$.

We use bold font $\a = (a_1,\dots,a_d) \in \bR^d$ to denote vectors. Let $\bR^d_+ := \{ \a \in\bR^d : a_1,\dots,a_d \ge 0 \}$. For $\a$,
$\b \in{\bR^d}$ such that $a_1\ge\dots\ge a_d$ and $b_1\ge\dots\ge b_d$,
\emph{weak majorization}, denoted {by} $\a\prec_w\b$, is the relation
\begin{align}
\sum_{i=1}^ka_i\le\sum_{i=1}^kb_i,\qquad k \in [d] \,,
\end{align}
where we used $[d] := \{1, 2, \ldots, d\}$.
\emph{Majorization}, $\a\prec\b$, additionally requires that equality holds for $k=d$.
 {For $\a,\b\in\bR^d_+$ such that  $a_1\ge\dots\ge a_d$ and $b_1\ge\dots\ge b_d$,}
\emph{weak log-majorization}, $\a\prec_{w\log}\b$, is the relation
\begin{align}
\prod_{i=1}^ka_i\le\prod_{i=1}^kb_i,\qquad k \in [d] ,
\end{align}
and \emph{log-majorization}, $\a \prec_{\log} \b$, additionally requires equality for $k = d$.
For any function $f$ on $\bR_+$ we write $f(\a) = (f(a_1), \dots, f(a_d))$
with conventions $\log 0:=-\infty$ and $e^{-\infty}=0$. Moreover, weak majorization $\prec_w$ makes sense even for vectors having entries $-\infty$. With these conventions, it is evident that $\a \prec_{w\log} \b$ if and only if $\log \a \prec_{w} \log \b$. The following relation takes a prominent role (see~\cite[Thm.~II.3.3]{bhatia97} and \cite[Prop.~4.1.4]{hiai10b}): 
\begin{lemma}
\label{lm:convex}
 For any convex function $f: [0,\infty)\to[0,\infty)$, we have $\a \prec \b \implies f(\a) \prec_{w} f(\b)$. Moreover, if $f$ is also non-decreasing, then $\a \prec_w \b \implies f(\a) \prec_{w} f(\b)$.
\end{lemma}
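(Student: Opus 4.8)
The plan is to deduce both implications from the standard dual characterisations of majorization and weak majorization by ``test functions'' (see, e.g., \cite[Ch.~II, III]{bhatia97}, \cite[Sec.~4.1]{hiai10b}, \cite{marshall11}). Throughout, $\a,\b\in\bR^d_+$ are arranged in decreasing order, so that $f(\a),f(\b)$ are defined, and I write $(s)_+:=\max\{s,0\}$. The facts I would invoke are: (i) if $\a\prec\b$ then $\sum_{i=1}^d g(a_i)\le\sum_{i=1}^d g(b_i)$ for every convex $g$ on $[0,\infty)$; (ii) if $\a\prec_w\b$ then the same holds for every \emph{non-decreasing} convex $g$; and (iii) for $\vec{x},\vec{y}\in\bR^d_+$, the relation $\vec{x}\prec_w\vec{y}$ is \emph{equivalent} to $\sum_{i=1}^d (x_i-t)_+\le\sum_{i=1}^d (y_i-t)_+$ for all $t\in\bR$, the non-trivial direction of which follows by choosing $t=y_k$ for $k\in[d]$.

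The only additional ingredient is the elementary observation that, for each fixed $t\in\bR$, the function $g_t\colon[0,\infty)\to[0,\infty)$ defined by $g_t(s):=(f(s)-t)_+$ is convex whenever $f$ is convex, since it is the composition of the non-decreasing convex map $u\mapsto(u-t)_+$ with the convex map $f$; and $g_t$ is moreover non-decreasing whenever $f$ is. With this in hand the proof is immediate. For the first assertion, assume $\a\prec\b$ and apply (i) to $g=g_t$ for every $t\in\bR$; this gives $\sum_{i=1}^d(f(a_i)-t)_+\le\sum_{i=1}^d(f(b_i)-t)_+$ for all $t$, which by (iii) is exactly $f(\a)\prec_w f(\b)$. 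For the second assertion, assume only $\a\prec_w\b$; since $f$ is now non-decreasing, each $g_t$ is non-decreasing and convex, so (ii) yields the same family of inequalities and hence, again by (iii), $f(\a)\prec_w f(\b)$.

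I do not expect a genuine obstacle here; the one point that needs a careful (but short) argument is the convexity-preservation claim for $g_t$, since a convex $f\colon[0,\infty)\to[0,\infty)$ need not be differentiable and need not even be continuous at $0$, so it should be checked directly from the definition of convexity rather than via derivatives. As a robustness check I would also keep in mind the classical alternative route: represent $\a=S\b$ with $S$ doubly stochastic (Birkhoff--von Neumann together with Hardy--Littlewood--P\'olya), apply Jensen's inequality entrywise to get $f(\a)\le S\,f(\b)$, observe $S\,f(\b)\prec f(\b)$, and combine with the fact that an entrywise inequality between positive vectors implies weak majorization; the non-decreasing case then follows by inserting an intermediate vector $\c$ with $\a\le\c$ and $\c\prec\b$ and invoking the convex case for $\c$ and $\b$.
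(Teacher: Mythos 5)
The paper does not prove this lemma at all; it cites it as a textbook fact (\cite[Thm.~II.3.3]{bhatia97}, \cite[Prop.~4.1.4]{hiai10b}), so there is no in-paper argument to compare against. Your test-function proof is correct and complete. The three facts you invoke are sound in the generality you need: (i) is the Hardy--Littlewood--P\'olya/Karamata inequality and holds for \emph{any} convex $g$ on $[0,\infty)$, continuity at $0$ not required, since its proof via $\a=S\b$ with $S$ doubly stochastic uses only finite Jensen; (ii) is Tomi\'c's theorem and follows from (i) plus the existence of $\c$ with $\a\le\c\prec\b$; and the nontrivial direction of (iii) is exactly your $t=y_k$ computation. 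The only genuinely new observation is that $g_t=(f(\cdot)-t)_+$ inherits convexity (and monotonicity) from $f$, and your caution that this must be checked directly from the definition rather than via derivatives is well-placed, since a convex $f:[0,\infty)\to[0,\infty)$ need be neither differentiable nor continuous at the endpoint $0$; the standard composition argument (non-decreasing convex outer function composed with convex inner function) handles this cleanly. The alternative route you sketch, via Birkhoff--von Neumann, entrywise Jensen to get $f(\a)\le Sf(\b)$, the relation $Sf(\b)\prec f(\b)$, and the elementary fact that an entrywise inequality of nonnegative vectors implies weak majorization, is the route more commonly found in the cited textbooks; the two approaches buy essentially the same thing here, with the test-function version making the role of the Ky Fan functionals $\sum_{i\le k}(\cdot-t)_+$ explicit, which matches the spirit of the paper's later proofs (cf.\ the use of $\|\cdot\|_{(k)}$ and functions like $\max\{x+\alpha,0\}$ in the proofs of Theorems~\ref{th:mainweaksimple} and~\ref{th:mainmajsimple}).
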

As a direct consequence when applied to the exponential function, weak log-majorization implies weak majorization. 

\subsection*{Unitarily invariant norms} Let us denote the eigenvalues of a self-adjoint operator~$A \in \cL(\cH)$ in decreasing order counting multiplicities by the vector $\vec{\lambda}(A)=(\lambda_1(A),\dots,\lambda_d(A))$.
Moreover, let $\unorm{\cdot}_{\Phi}$ be a unitarily invariant norm on $\cP$ and $\Phi: \bR^d_+ \to \bR_+$ the
corresponding gauge function so that
\begin{align}
\unorm{L}_{\Phi} = \unorm{|L|}_{\Phi} = \Phi(\vec{\lambda}(|L|)) \,.
\end{align} 
(We refer the reader to~\cite[Sec.~IV]{bhatia97} for an introduction to unitarily invariant norms. The bijective correspondence between symmetric gauge functions on~$\mathbb{R}^d_+$ and unitarily invariant norms on~$\cP$ is due to von Neumann~\cite{vonNeumann77}.) Of particular interest here are Ky Fan norms. For $k \in [d]$, the Ky Fan $k$-norm, $\| \cdot \|_{(k)} : \cL(\cH) \to \bR_+$, is defined as
\begin{align}
   L \mapsto \| L \|_{(k)} := \sum_{i=1}^k \lambda_i(|L|) \,.
\end{align}
In particular, $\|\cdot\|_{(1)}$ is the operator norm $\|\cdot\|$. Another important and familiar one is the Schatten $p$-norm $\|L\|_p:=(\tr |L|^p)^{1/p}$ for $p\geq 1$. In particular, $\|\cdot \|_1$ is the trace norm. The definition of $\|\cdot\|_p$ makes sense even for $0<p<1$ as a quasi-norm. 

The following lemma is a H\"older inequality for the gauge
function $\Phi$ and follows from~\cite[Thm.~IV.1.6]{bhatia97}.

\begin{lemma}\label{lm:hoelder}
For $l \in [m]$ let $\a_l=(a_{l1},\dots,a_{ld})\in\bR^d_+$ and $\beta_l>0$ with
$\sum_{l=1}^m\beta_l=1$. Then
\begin{align}
\Phi\Biggl(\prod_{l=1}^m\a_l^{\beta_l}\Biggr)\le\prod_{l=1}^m\Phi^{\beta_l}(\a_l),
\end{align}
where
\begin{align}
\prod_{l=1}^m\a_l^{\beta_l}:=\Biggl(\prod_{l=1}^ma_{l1}^{\beta_l},\dots,
\prod_{l=1}^ma_{ld}^{\beta_l}\Biggr).
\end{align}
\end{lemma}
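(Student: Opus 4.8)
The plan is to derive the inequality from the weighted arithmetic--geometric mean inequality applied in each coordinate, using only three standard facts about the symmetric gauge function $\Phi$ on $\bR^d_+$: positive homogeneity and subadditivity (both immediate since $\Phi$ is a norm), and monotonicity with respect to the coordinatewise order on $\bR^d_+$, which is the content of~\cite[Thm.~IV.1.6]{bhatia97}.

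First I would dispose of the degenerate cases. If $\Phi(\a_l)=0$ for some $l$, then $\a_l=0$ because $\Phi$ is a norm; since every $\beta_l>0$, the vector $\prod_{l=1}^m\a_l^{\beta_l}$ then has all entries equal to $0$, so both sides of the claimed inequality vanish and there is nothing to prove. Hence I may assume $\Phi(\a_l)>0$ for all $l$ and put $\hat\a_l:=\a_l/\Phi(\a_l)$, so that $\hat\a_l\in\bR^d_+$ and $\Phi(\hat\a_l)=1$ by positive homogeneity. The key step is then the coordinatewise estimate: for each $i\in[d]$, the weighted AM--GM inequality together with $\sum_{l=1}^m\beta_l=1$ gives $\prod_{l=1}^m\hat a_{li}^{\beta_l}\le\sum_{l=1}^m\beta_l\hat a_{li}$, i.e.\ $\prod_{l=1}^m\hat\a_l^{\beta_l}\le\sum_{l=1}^m\beta_l\hat\a_l$ entrywise in $\bR^d_+$. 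Applying monotonicity of $\Phi$ and then subadditivity,
\begin{align*}
\Phi\Bigl(\prod_{l=1}^m\hat\a_l^{\beta_l}\Bigr)\le\Phi\Bigl(\sum_{l=1}^m\beta_l\hat\a_l\Bigr)\le\sum_{l=1}^m\beta_l\,\Phi(\hat\a_l)=\sum_{l=1}^m\beta_l=1 \,.
\end{align*}
Finally, since $\prod_{l=1}^m\hat\a_l^{\beta_l}$ equals $\prod_{l=1}^m\a_l^{\beta_l}$ divided by the positive scalar $\prod_{l=1}^m\Phi^{\beta_l}(\a_l)$, positive homogeneity of $\Phi$ converts the displayed bound into exactly $\Phi\bigl(\prod_{l=1}^m\a_l^{\beta_l}\bigr)\le\prod_{l=1}^m\Phi^{\beta_l}(\a_l)$.

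I do not expect a genuine obstacle here: the lemma is essentially a reformulation of H\"older's inequality and, as the statement indicates, reduces to standard properties of symmetric gauge functions. The only points that deserve a line of care are the degenerate case $\Phi(\a_l)=0$ and making explicit that the passage from the geometric mean to the arithmetic mean inside $\Phi$ is justified by the monotonicity of $\Phi$ on $\bR^d_+$\,---\,precisely the input borrowed from~\cite[Thm.~IV.1.6]{bhatia97}. An alternative route is to prove the case $m=2$ first and induct on $m$, but the coordinatewise argument dispatches arbitrary $m$ in one step and I would prefer it.
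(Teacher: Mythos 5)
Your proof is correct, but it takes a genuinely different route from the one in the paper. The paper cites~\cite[Thm.~IV.1.6]{bhatia97} as exactly the $m=2$ case of the lemma (H\"older's inequality for symmetric gauge functions) and then bootstraps to general $m$ by grouping two of the factors and inducting: $\Phi\bigl(\a_1^{\beta_1}\a_2^{\beta_2}\a_3^{\beta_3}\bigr)=\Phi\bigl(\bigl(\a_1^{\beta_1/(\beta_1+\beta_2)}\a_2^{\beta_2/(\beta_1+\beta_2)}\bigr)^{\beta_1+\beta_2}\a_3^{\beta_3}\bigr)$, apply the $m=2$ case twice, and so on. You instead give a self-contained one-shot argument for all $m$: normalize each $\a_l$, apply the weighted AM--GM inequality entrywise, then use monotonicity and subadditivity of $\Phi$. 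This is cleaner in that it avoids induction and exposes that the lemma is really a normalization of AM--GM plus the norm axioms; the paper's route is shorter once the base case is granted as a black box.

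One caveat: your attribution of the monotonicity of $\Phi$ under the coordinatewise order on $\bR^d_+$ to~\cite[Thm.~IV.1.6]{bhatia97} is off. That theorem is the two-term H\"older inequality $\Phi(\a\b)\le\Phi(\a^p)^{1/p}\Phi(\b^q)^{1/q}$\,---\,precisely what the paper uses as its base case\,---\,not a monotonicity statement. Entrywise monotonicity of a symmetric gauge function is a genuine and standard fact (it follows from the defining properties of symmetric gauge functions, and in Bhatia appears earlier in Section~IV.1, e.g.\ as a consequence of the weak-majorization monotonicity), but you should cite the correct result. The mathematics of your argument is unaffected; only the reference needs fixing.
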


\begin{proof}
The lemma for $m=2$ is~\cite[Thm.~IV.1.6]{bhatia97}. The case $m=3$ is shown as
\begin{align}
\Phi\bigl(\a_1^{\beta_1}\a_2^{\beta_2}\a_3^{\beta_3}\bigr)
&=\Phi\biggl(\Bigl(\a_1^{\frac{\beta_1}{\beta_1+\beta_2}}\a_2^{\frac{\beta_2}{\beta_1+\beta_2}}
\Bigr)^{\beta_1+\beta_2}\a_3^{\beta_3}\biggr) \\
&\le\Phi^{\beta_1+\beta_2}\Bigl(\a_1^{\frac{\beta_1}{\beta_1+\beta_2}}
\a_2^{\frac{\beta_2}{\beta_1+\beta_2}}\Bigr)\Phi^{\beta_3}(\a_3)
\le\Phi^{\beta_1}(\a_1)\Phi^{\beta_2}(\a_2)\Phi^{\beta_3}(\a_3).
\end{align}
The general case can be shown similarly by induction.
\end{proof}

\subsection*{Antisymmetric tensor product}
For $k \in [d]$, let $\cH^{\otimes k}$ be the $k$th tensor power of $\cH$ and let $\cH^{\land k}$ denote the antisymmetric subspace of $\cH^{\otimes k}$.
The $k$th \emph{antisymmetric tensor power}, $\land^k: \cL(\cH) \to \cL(\cH^{\land k})$, 
maps any linear operator $L$ to the restriction of $L^{\otimes k} \in \cL(\cH^{\otimes k})$ to the antisymmetric subspace $\cH^{\land k}$ of $\cH^{\otimes k}$. It satisfies the following rules (see, e.g.,~\cite[Sec.~I.5 and~p.~18]{bhatia97}):
\begin{lemma} 
\label{lm:anti}
 Let $L,K \in \cL(\cH)$ and $A \in \cP$. For any $k \in [d]$, we have:
\begin{enumerate}[(a)]
  \item $(\land^k L)^{\dag} = \land^k (L^{\dag})$,\label{it:asymfirst}
  \item $(\land^k L) (\land^k K) = \land^k (LK)$,\label{it:asymprod}
  \item $(\land^k A)^{z} = \land^k (A^z)$ for all $z \in \mathbb{C}$, and\label{it:asympow}
  \item $\big\| \land^{k} L \big\| = \prod_{i=1}^{k} \lambda_i(|L|)$.\label{it:asymeig} 
\end{enumerate}
\end{lemma}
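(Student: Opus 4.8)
The plan is to deduce all four identities from a single structural observation: the orthogonal projection $P_k$ of $\cH^{\otimes k}$ onto the antisymmetric subspace $\cH^{\land k}$ commutes with $L^{\otimes k}$ for every $L\in\cL(\cH)$. Writing $P_k=\frac{1}{k!}\sum_{\sigma}\operatorname{sgn}(\sigma)\,W_\sigma$, where the sum runs over the permutations $\sigma$ of $\{1,\dots,k\}$ and $W_\sigma\in\cL(\cH^{\otimes k})$ permutes the tensor factors accordingly, one checks by evaluating on product vectors that $W_\sigma L^{\otimes k}=L^{\otimes k}W_\sigma$ for all $\sigma$ and all $L$, hence $P_kL^{\otimes k}=L^{\otimes k}P_k$. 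It follows that $\cH^{\land k}$ reduces $L^{\otimes k}$ and that $\land^k L$ is exactly $P_kL^{\otimes k}P_k$ viewed as an operator on $\cH^{\land k}=\operatorname{ran}P_k$; recall also that $P_k^{\dag}=P_k=P_k^2$. From here parts (a) and (b) are immediate: using $(L^{\otimes k})^{\dag}=(L^{\dag})^{\otimes k}$ one gets $(\land^k L)^{\dag}=P_k(L^{\otimes k})^{\dag}P_k=P_k(L^{\dag})^{\otimes k}P_k=\land^k(L^{\dag})$, and using that $P_k$ commutes with both $L^{\otimes k}$ and $K^{\otimes k}$ together with $P_k^2=P_k$ one gets $(\land^k L)(\land^k K)=P_kL^{\otimes k}K^{\otimes k}P_k=P_k(LK)^{\otimes k}P_k=\land^k(LK)$.

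For (c) I would first note that $\land^k A=P_kA^{\otimes k}P_k\ge 0$ since $A^{\otimes k}\ge 0$, so $(\land^k A)^z$ is well defined via the functional calculus. Fixing an orthonormal eigenbasis $v_1,\dots,v_d$ of $A$ with $Av_i=a_iv_i$, for each $k$-subset $S=\{i_1<\dots<i_k\}$ of $[d]$ the normalized wedge $e_S:=\sqrt{k!}\,P_k(v_{i_1}\otimes\dots\otimes v_{i_k})$ is a unit vector, the family $\{e_S\}_{|S|=k}$ is an orthonormal basis of $\cH^{\land k}$, and each $e_S$ is a simultaneous eigenvector with
\begin{align}
(\land^k A)\,e_S=\Bigl(\prod_{i\in S}a_i\Bigr)e_S,\qquad
\bigl(\land^k(A^z)\bigr)\,e_S=\Bigl(\prod_{i\in S}a_i^{z}\Bigr)e_S=\Bigl(\prod_{i\in S}a_i\Bigr)^{\!z}e_S=(\land^k A)^z\,e_S.
\end{align}
Since the $e_S$ span $\cH^{\land k}$, this yields (c); for $z\in\{0,1,2,\dots\}$ one alternatively obtains it from (b) by induction.

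Finally, for (d) I would combine (a), (b) and (c) with $z=\tfrac12$ to compute
\begin{align}
|\land^k L|^2=(\land^k L)^{\dag}(\land^k L)=\land^k(L^{\dag})\,\land^k(L)=\land^k(L^{\dag}L)=\land^k(|L|^2),
\end{align}
so that $|\land^k L|=\bigl(\land^k(|L|^2)\bigr)^{1/2}=\land^k(|L|)$. Hence $\|\land^k L\|=\bigl\||\land^k L|\bigr\|=\|\land^k|L|\|$, and since $\land^k|L|\ge 0$ this equals its largest eigenvalue. By the eigenbasis description above applied with $A=|L|$, the eigenvalues of $\land^k|L|$ are precisely the products $\prod_{i\in S}\lambda_i(|L|)$ over $k$-subsets $S\subseteq[d]$; the largest is attained at $S=\{1,\dots,k\}$, giving $\|\land^k L\|=\prod_{i=1}^k\lambda_i(|L|)$.

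None of this is deep\,---\,the statement is classical\,---\,and the point requiring the most care is (c): one should fix the convention for $A^z$ when $0$ lies in the spectrum of $A$ (in the applications $z$ has positive real part, or $A$ is invertible, so this causes no trouble), and one must justify cleanly that the normalized wedges $e_S$ really form an orthonormal eigenbasis of $\land^k A$, i.e.\ that diagonalizing $A^{\otimes k}$ is compatible with restriction to the antisymmetric subspace. Parts (a) and (b) are purely formal, and (d) then follows from (a)--(c) by the computation indicated above.
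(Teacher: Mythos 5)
The paper offers no proof of this lemma at all---it simply cites \cite[Sec.~I.5 and p.~18]{bhatia97} for these standard facts. Your argument is correct and self-contained: reducing everything to the single observation that the antisymmetrizer $P_k$ commutes with $L^{\otimes k}$ is exactly the structural viewpoint taken in Bhatia, the orthonormality computation for the normalized wedges $e_S$ and the resulting simultaneous diagonalization of $\land^k A$ are standard and stated correctly, and the derivation of $|\land^k L|=\land^k|L|$ from (a)--(c) with $z=\tfrac12$, followed by reading off the top eigenvalue, is the usual route to (d). The only point worth flagging is one you already raised: for $z$ not of positive real part and $A$ with $0$ in its spectrum one must fix a convention for $A^z$ (the paper uses $0^z=0$ in Section~6), and under that convention the identity $\prod_{i\in S}a_i^z=\bigl(\prod_{i\in S}a_i\bigr)^z$ you use in (c) continues to hold since both sides vanish whenever some $a_i=0$.
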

In particular, we note that if $L\in\cL(\cH)$ is positive semi-definite, so is its antisymmetric tensor power~{$\land^k L \in \mathcal{L}(\cH^{\land k})$}.

\section{(Weak) majorization with integral average}
\label{sec:mainweak}

 Let $\Xi$ be a $\sigma$-compact metric space and $\nu$ a probability measure on the Borel
$\sigma$-field of $\Xi$. Let $A\in\cL(\cH)$ and $\xi\in\Xi\mapsto B_\xi\in\cL(\cH)$ be a continuous function such that $A$ and $B_\xi$ for all $\xi\in\Xi$ are self-adjoint and $\sup\bigl\{\|B_\xi\| : \xi\in\Xi\bigr\} < \infty$. We use the convention
\begin{align}
\int_\Xi \vec{\lambda}(B_\xi)\,\di\nu(\xi)
&:=\biggl(\int_\Xi\lambda_1(B_\xi)\,\di\nu(\xi),\dots,\int_\Xi\lambda_d(B_\xi)\,\di\nu(\xi)\biggr) . \label{eq:conventionintegralvec}
\end{align}
The following two theorems are characterizations of weak majorization and majorization in the setting with integral average. They will be used in Sections~\ref{sec:main} and~\ref{sec:logmajorizationintegral}. 

\begin{theorem}\label{th:mainweaksimple}
With $\Xi$, $\nu$, and self-adjoint $A,B_{\xi}\in\cL(\cH)$ given as above, the following statements are equivalent:   
\begin{enumerate}[(a)] 
\item 
$\vec{\lambda}(A)\prec_w\int_\Xi\vec{\lambda}(B_\xi)\di\nu(\xi)$;\label{it:weakconvmaj} 


\item for every non-decreasing convex function $f:\mathbb{R}\to[0,\infty)$ and for every
unitarily invariant norm $\unorm{\cdot}$, 
\begin{align}
\unorm{f(A)} \le \int_\Xi\, \unorm{f(B_\xi)}\,\di\nu(\xi) \,.\label{eq:funitaryinvmaj}
\end{align}\label{it:funitaryinvmaj}
\end{enumerate}
\end{theorem}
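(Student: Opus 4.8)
My plan is to prove the two implications separately; the direction \eqref{it:funitaryinvmaj}$\Rightarrow$\eqref{it:weakconvmaj} is the one requiring a specific choice of test function, while \eqref{it:weakconvmaj}$\Rightarrow$\eqref{it:funitaryinvmaj} is a chain of standard majorization facts together with two applications of Jensen's inequality.

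For \eqref{it:weakconvmaj}$\Rightarrow$\eqref{it:funitaryinvmaj}, set $\vec c:=\int_\Xi\vec{\lambda}(B_\xi)\,\di\nu(\xi)$, which has non-increasing entries because $\lambda_1(B_\xi)\ge\dots\ge\lambda_d(B_\xi)$ holds pointwise in $\xi$. Fix a non-decreasing convex $f\colon\bR\to[0,\infty)$ and a unitarily invariant norm $\unorm{\cdot}_\Phi$. Since $\dim\cH<\infty$ and $\sup_\xi\|B_\xi\|<\infty$, I can choose $N\ge 0$ with $A+NI\ge 0$ and $B_\xi+NI\ge 0$ for all $\xi$; as adding a constant to every coordinate preserves $\prec_w$, \eqref{it:weakconvmaj} becomes $\vec{\lambda}(A+NI)\prec_w\int_\Xi\vec{\lambda}(B_\xi+NI)\,\di\nu(\xi)$, a weak majorization between vectors in $\bR^d_+$. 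Now I would chain four facts. (i) Lemma~\ref{lm:convex}, applied to the non-decreasing convex function $\tilde f(y):=f(y-N)$ on $[0,\infty)$, gives $\tilde f\bigl(\vec{\lambda}(A+NI)\bigr)\prec_w\tilde f\bigl(\int_\Xi\vec{\lambda}(B_\xi+NI)\,\di\nu(\xi)\bigr)$; since $f$ is non-decreasing and $[0,\infty)$-valued, the left-hand side equals $\vec{\lambda}\bigl(f(A)\bigr)$. (ii) Convexity of $\tilde f$ and Jensen's inequality applied coordinatewise to the probability measure $\nu$ give $\tilde f\bigl(\int_\Xi\vec{\lambda}(B_\xi+NI)\,\di\nu(\xi)\bigr)\le\int_\Xi\vec{\lambda}\bigl(f(B_\xi)\bigr)\,\di\nu(\xi)$ coordinatewise, both sides being non-increasing, so by (i) and transitivity of $\prec_w$ one gets $\vec{\lambda}(f(A))\prec_w\int_\Xi\vec{\lambda}(f(B_\xi))\,\di\nu(\xi)$. (iii) Both vectors here are non-negative and sorted, so the equivalence ($\star$) (equivalently, Ky Fan's dominance property of symmetric gauge functions) upgrades this to $\Phi\bigl(\vec{\lambda}(f(A))\bigr)\le\Phi\bigl(\int_\Xi\vec{\lambda}(f(B_\xi))\,\di\nu(\xi)\bigr)$. (iv) Convexity of the gauge function $\Phi$ (it is a norm) and Jensen's inequality for the vector-valued integral give $\Phi\bigl(\int_\Xi\vec{\lambda}(f(B_\xi))\,\di\nu(\xi)\bigr)\le\int_\Xi\Phi\bigl(\vec{\lambda}(f(B_\xi))\bigr)\,\di\nu(\xi)=\int_\Xi\unorm{f(B_\xi)}\,\di\nu(\xi)$. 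Combining (i)--(iv) with $\Phi(\vec{\lambda}(f(A)))=\unorm{f(A)}$ yields \eqref{eq:funitaryinvmaj}. Continuity of $\xi\mapsto B_\xi$ together with Weyl's perturbation bound makes every integrand continuous and bounded, hence integrable, so all integrals make sense.

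For \eqref{it:funitaryinvmaj}$\Rightarrow$\eqref{it:weakconvmaj}, fix $k\in[d]$ and apply \eqref{eq:funitaryinvmaj} with the Ky Fan norm $\|\cdot\|_{(k)}$ and the truncated-linear test function $f_t(x):=(x-t)_+=\max\{x-t,0\}$, which is non-decreasing, convex, and $[0,\infty)$-valued. For any self-adjoint $M$ and any $t\le\lambda_d(M)$ one has $f_t(M)=M-tI\ge 0$, so $\|f_t(M)\|_{(k)}=\sum_{i=1}^k\lambda_i(M)-kt$. Since $\dim\cH<\infty$, the number $\lambda_d(A)$ is finite, and since $\|B_\xi\|\le M_0:=\sup_\xi\|B_\xi\|<\infty$ we have $\lambda_d(B_\xi)\ge -M_0$ for all $\xi$. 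Hence, choosing any fixed $t\le\min\{\lambda_d(A),-M_0\}$, inequality \eqref{eq:funitaryinvmaj} reads
\begin{align*}
\sum_{i=1}^k\lambda_i(A)-kt&\le\int_\Xi\Bigl(\sum_{i=1}^k\lambda_i(B_\xi)-kt\Bigr)\,\di\nu(\xi)\\
&=\sum_{i=1}^k\int_\Xi\lambda_i(B_\xi)\,\di\nu(\xi)-kt,
\end{align*}
and cancelling the common term $-kt$ gives $\sum_{i=1}^k\lambda_i(A)\le\sum_{i=1}^k\int_\Xi\lambda_i(B_\xi)\,\di\nu(\xi)$. As $k\in[d]$ was arbitrary, this is exactly \eqref{it:weakconvmaj}.

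I expect the second implication to be the crux. The key point is to realize that a single test function $f_t$ with $t$ placed below the entire spectrum \emph{simultaneously} linearizes the Ky Fan norm on both sides into partial sums of eigenvalues, and that the standing hypothesis $\sup_\xi\|B_\xi\|<\infty$ is precisely what allows one such $t$ to serve for all $\xi$ at once, so that the constant $-kt$ can be pulled out of the integral and cancelled. The first implication is comparatively routine bookkeeping; its only mild nuisance is that $A$ and the $B_\xi$ are merely self-adjoint rather than positive, which I sidestep by shifting to the positive case before invoking Lemma~\ref{lm:convex} (alternatively one could quote the standard extension of that lemma to non-decreasing convex functions defined on all of $\bR$).
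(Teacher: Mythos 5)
Your proof is correct and follows essentially the same route as the paper's. For (a)$\Rightarrow$(b) the paper invokes the lemma directly (citing a version valid for non-decreasing convex $f:\mathbb{R}\to[0,\infty)$) rather than shifting by $N$, and for (b)$\Rightarrow$(a) the paper takes $f(x)=\max\{x+\alpha,0\}$, which is exactly your $f_t$ with $t=-\alpha$; these are cosmetic differences only.
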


\begin{proof}  
Assume~\eqref{it:weakconvmaj} and let $f$ be as in~\eqref{it:funitaryinvmaj}. We have 
\begin{align}
\vec{\lambda}(f(A))=f(\vec{\lambda}(A))\prec_w f\left(\int_\Xi\vec{\lambda}(B_\xi)\di\nu(\xi)\right)
\end{align}
thanks to~\cite[Prop.~4.1.4(2)]{hiai10b}. 
Since
\begin{align}
f\biggl(\int_\Xi \vec{\lambda}(B_\xi)\,\di\nu(\xi)\biggr)\le\int_\Xi f( \vec{\lambda}(B_\xi))\,\di\nu(\xi)
=\int_\Xi \vec{\lambda}(f(B_\xi))\,\di\nu(\xi),
\end{align}
we have $\vec{\lambda}(f(A))\prec_w \int_\Xi \vec{\lambda}(f(B_\xi))\,\di\nu(\xi)$. Since both sides of this relation are non-negative vectors, 
 applying the gauge function~$\Phi$ to them yields (see~\cite[Lemma 4.4.2]{hiai10b}) 
\begin{align}
{\unorm{f(A)}_\Phi}&\le\Phi\biggl(\int_\Xi \vec{\lambda}(f(B_\xi))\,\di\nu(\xi)\biggr)\\
&\leq \int_\Xi \Phi(\vec{\lambda}(f(B_\xi)))\, d\nu(\xi)=\int_\Xi{\unorm{f(B_\xi)}_\Phi}\,\di\nu(\xi)\ .
\end{align}
Hence~\eqref{it:funitaryinvmaj} holds.

To prove the converse, assume~\eqref{it:funitaryinvmaj}. Since $B_\xi$ is uniformly bounded, there is an $\alpha>0$ such that $A+\alpha I\geq 0$ and $B_\xi+\alpha I\geq 0$ for all $\xi\in\Xi$. We evaluate inequality~\eqref{eq:funitaryinvmaj} for $f(x):=\max\{x+\alpha,0\}$ and the Ky Fan norm $\unorm{\cdot}=\|\cdot\|_{(k)}$ to get 
\begin{align}
\sum_{i=1}^k (\lambda_i(A)+\alpha)\leq \sum_{i=1}^k \int_\Xi (\lambda_i(B_\xi)+\alpha)\, d\nu(\xi)\ .
\end{align}
Therefore, $\sum_{i=1}^k\lambda_i(A)\leq \sum_{i=1}^k \int_\Xi \lambda_i(B_\xi)\,d\nu(\xi)$, which implies~\eqref{it:weakconvmaj}.
\end{proof} 

\begin{remark}\label{rem:punar}
In the case where $A,B_\xi\in\cP$ for all $\xi\in\Xi$, conditions~\eqref{it:weakconvmaj} and~\eqref{it:funitaryinvmaj} are also equivalent to the statement
\begin{enumerate}[(a)]\setcounter{enumi}{2}
\item for every unitarily invariant norm $\unorm{\cdot}$,
\begin{align}
\unorm{A} \le \int_\Xi \unorm{B_\xi}\,\di\nu(\xi) \,.
\end{align}\label{it:unitaryinvnormmaj}
\end{enumerate}
\end{remark}
Indeed,~\eqref{it:funitaryinvmaj}~$\implies$~\eqref{it:unitaryinvnormmaj} is obvious by letting $f(x):=\max\{x,0\}$ in~\eqref{it:funitaryinvmaj}, and~\eqref{it:unitaryinvnormmaj}~$\implies$~\eqref{it:weakconvmaj} is seen by evaluating~\eqref{it:unitaryinvnormmaj} for the Ky Fan norm $\unorm{\cdot}=\|\cdot\|_{(k)}$. The assumption~$B_\xi\in\cP$  is essential for the latter implication. Statements~\eqref{it:weakconvmaj} and~\eqref{it:unitaryinvnormmaj} constitute a generalization of~($\star$) when applied to $|A|$ and $|B_{\xi}|$ for $A, B_{\xi}\in\cL(\cH)$.

\begin{theorem}\label{th:mainmajsimple}
With $\Xi$, $\nu$, and self-adjoint $A,B_{\xi}\in\cL(\cH)$ given as above, the following statements are equivalent:   
\begin{enumerate}[(a)] \setcounter{enumi}{3}
\item 
$\vec{\lambda}(A)\prec\int_\Xi\vec{\lambda}(B_\xi)\di\nu(\xi)$;\label{it:strconvmaj} 

\item inequality~\eqref{eq:funitaryinvmaj} holds for every convex function $f:\mathbb{R}\to[0,\infty)$ and for every
unitarily invariant norm $\unorm{\cdot}$. \label{it:strfunitaryinvmaj}
\end{enumerate}
\end{theorem}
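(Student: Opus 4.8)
The plan is to follow the proof of Theorem~\ref{th:mainweaksimple} closely, making the two changes forced by allowing a non-monotone convex $f$ and adding one extra test function to upgrade weak majorization to majorization. Throughout write $\vec m:=\int_\Xi\vec\lambda(B_\xi)\,\di\nu(\xi)$, which is automatically arranged in non-increasing order since $\lambda_1(B_\xi)\ge\dots\ge\lambda_d(B_\xi)$ pointwise. For~\eqref{it:strconvmaj}~$\implies$~\eqref{it:strfunitaryinvmaj} I would fix a convex $f:\bR\to[0,\infty)$ and a gauge function $\Phi$. First I would translate: since $A$ and the $B_\xi$ are uniformly bounded, replacing $f(\cdot)$ by $f(\cdot-\alpha)$ and every eigenvalue vector by its shift by $\alpha\vec 1$ (for $\alpha$ large) changes neither the relation~\eqref{it:strconvmaj} nor any term of~\eqref{eq:funitaryinvmaj}, so one may assume $\vec\lambda(A),\vec\lambda(B_\xi),\vec m\in\bR^d_+$. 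Then the first assertion of Lemma~\ref{lm:convex} gives $f(\vec\lambda(A))\prec_w f(\vec m)$ from $\vec\lambda(A)\prec\vec m$, componentwise convexity (Jensen) gives $f(m_i)=f\!\bigl(\int_\Xi\lambda_i(B_\xi)\,\di\nu(\xi)\bigr)\le\int_\Xi f(\lambda_i(B_\xi))\,\di\nu(\xi)$ for each $i$, hence $f(\vec m)\le\int_\Xi f(\vec\lambda(B_\xi))\,\di\nu(\xi)$ entrywise, and therefore $f(\vec\lambda(A))\prec_w\int_\Xi f(\vec\lambda(B_\xi))\,\di\nu(\xi)$ (weak majorization among non-negative vectors survives increasing the right-hand side entrywise). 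Applying $\Phi$ — using in turn that $\prec_w$ among non-negative vectors yields the corresponding $\Phi$-inequality, then convexity of $\Phi$ with Jensen for the vector-valued integral, and finally the symmetry of $\Phi$ (so $\Phi(f(\vec\lambda(C)))=\unorm{f(C)}_\Phi$ for self-adjoint $C$, irrespective of the ordering of $f(\vec\lambda(C))$) — gives
\begin{align}
\unorm{f(A)}_\Phi=\Phi\bigl(f(\vec\lambda(A))\bigr)
&\le\Phi\biggl(\int_\Xi f(\vec\lambda(B_\xi))\,\di\nu(\xi)\biggr)\\
&\le\int_\Xi\Phi\bigl(f(\vec\lambda(B_\xi))\bigr)\,\di\nu(\xi)=\int_\Xi\unorm{f(B_\xi)}_\Phi\,\di\nu(\xi),
\end{align}
which is~\eqref{eq:funitaryinvmaj}.

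For the converse~\eqref{it:strfunitaryinvmaj}~$\implies$~\eqref{it:strconvmaj} I would first observe that~\eqref{it:strfunitaryinvmaj}, holding for \emph{all} convex $f$, in particular holds for all non-decreasing convex $f$, so it implies condition~\eqref{it:funitaryinvmaj} of Theorem~\ref{th:mainweaksimple} and hence $\vec\lambda(A)\prec_w\vec m$; in particular $\tr A=\sum_{i=1}^d\lambda_i(A)\le\sum_{i=1}^d m_i=\int_\Xi\tr B_\xi\,\di\nu(\xi)$. To obtain the matching lower bound on $\tr A$ I would pick $\alpha>0$ with $\alpha I\ge A$ and $\alpha I\ge B_\xi$ for all $\xi$ (possible by uniform boundedness) and evaluate~\eqref{eq:funitaryinvmaj} for the convex function $f(x):=\max\{\alpha-x,0\}$ and the trace norm $\unorm{\cdot}=\|\cdot\|_{(d)}$. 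On the spectra involved $f(x)=\alpha-x$, so $f(A)=\alpha I-A\ge0$ and $f(B_\xi)=\alpha I-B_\xi\ge0$, whence $\|f(A)\|_{(d)}=d\alpha-\tr A$ and $\|f(B_\xi)\|_{(d)}=d\alpha-\tr B_\xi$, and~\eqref{eq:funitaryinvmaj} becomes $d\alpha-\tr A\le d\alpha-\int_\Xi\tr B_\xi\,\di\nu(\xi)$, i.e.\ $\tr A\ge\int_\Xi\tr B_\xi\,\di\nu(\xi)$. Combining the two inequalities gives $\tr A=\sum_{i=1}^d m_i$, which together with $\vec\lambda(A)\prec_w\vec m$ is precisely $\vec\lambda(A)\prec\vec m$.

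I expect no serious obstacle, as the statement is a mild variant of Theorem~\ref{th:mainweaksimple}. The one point requiring care is in the forward direction: for non-monotone $f$ the vector $f(\vec\lambda(A))$ need not be the non-increasingly ordered eigenvalue vector of $f(A)$, so the argument must be run through the symmetric gauge function $\Phi$ (which ignores reordering) rather than through ordered eigenvalue vectors as in Theorem~\ref{th:mainweaksimple}, and Lemma~\ref{lm:convex} can only be invoked after the harmless translation to $\bR^d_+$. The backward direction adds only the observation that a single decreasing convex test function, evaluated in the trace norm, pins down the trace.
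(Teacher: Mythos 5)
Your proposal is correct and follows essentially the same approach as the paper's proof: Lemma~\ref{lm:convex} plus componentwise Jensen plus the gauge function $\Phi$ in the forward direction, and a reduction to Theorem~\ref{th:mainweaksimple} together with the test function $f(x)=\max\{\alpha-x,0\}$ in the trace norm for the converse. Your preliminary translation to $\bR^d_+$ is a harmless extra step that the paper skips, since the result underlying Lemma~\ref{lm:convex} (from Bhatia/Hiai) in fact holds for convex $f$ defined on all of~$\bR$.
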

\begin{proof}
Assume~\eqref{it:strconvmaj} and let $f$ be as in~\eqref{it:strfunitaryinvmaj}. 
It is obvious that $\vec{\lambda}(f(A))\approx f(\vec{\lambda}(A))$, where for 
$\a, \b\in\mathbb{R}^d$, $\a\approx\b$ means that the entries of~$\a$ coincide with those of~$\b$ up to a permutation.
Since Lemma~\ref{lm:convex} gives
\begin{align}
f(\vec{\lambda}(A))\prec_w f\biggl(\int_\Xi \vec{\lambda}(B_\xi)\,\di\nu(\xi)\biggr)\le\int_\Xi f( \vec{\lambda}(B_\xi))\,\di\nu(\xi)\, ,
\end{align}
we have 
\begin{align}
\unorm{f(A)}_\Phi &\le \Phi\biggl(\int_\Xi\, f(\vec{\lambda}(B_\xi))\,\di\nu(\xi)\biggr)\\
&\le \int_\Xi \Phi(f(\vec{\lambda}(B_\xi)))\,\di\nu(\xi)=\int_\Xi \unorm{f(B_\xi)}_\Phi\,\di\nu(\xi) \,.
\end{align}
Hence~\eqref{it:strfunitaryinvmaj} holds. 

Conversely, if~\eqref{it:strfunitaryinvmaj} is satisfied, then by Theorem~\ref{th:mainweaksimple} we have $\vec{\lambda}(A)\prec_w\int_\Xi \vec{\lambda}(B_\xi)\,\di\nu(\xi)$. Hence, to prove~\eqref{it:strconvmaj}, it suffices to show that $\tr A\geq \int_\Xi \tr B_\xi\,\di\nu(\xi)$. Choose an $\alpha>0$ such that $A\leq \alpha I$ and $B_\xi\leq \alpha I$ for all $\xi\in\Xi$. For $f(x):=\max\{\alpha-x,0\}$ and the trace norm $\unorm{\cdot}=\|\cdot\|_1$, condition~\eqref{it:strfunitaryinvmaj} implies that 
\begin{align}
\tr(\alpha I-A) \leq \int_\Xi \tr(\alpha I-B_\xi)\,\di\nu(\xi)\ ,
\end{align}
giving the desired inequality. 
\end{proof}

\section{Weak log-majorization with integral average}
\label{sec:main}
In the following, we assume that $A, B_\xi\in\cP$ for all $\xi\in\Xi$. The above equivalent conditions~\eqref{it:weakconvmaj},~\eqref{it:funitaryinvmaj} of Theorem~\ref{th:mainweaksimple} and~\eqref{it:unitaryinvnormmaj} of Remark~\ref{rem:punar} correspond to weak majorization, and~\eqref{it:strconvmaj},~\eqref{it:strfunitaryinvmaj} of Theorem~\ref{th:mainmajsimple} correspond to majorization. We now consider stronger conditions than those, corresponding to (weak) log-majorization. We have the following chain of implications, where the last condition~\eqref{it:fourthconditionchain} is~\eqref{it:funitaryinvmaj} of Theorem~\ref{th:mainweaksimple} and we use as in~\eqref{eq:conventionintegralvec} the convention 
\begin{align}
\int_\Xi \log \vec{\lambda}(B_\xi)\,\di\nu(\xi)
&:=\biggl(\int_\Xi \log \lambda_1(B_\xi)\,\di\nu(\xi),\dots,\int_\Xi \log\lambda_d(B_\xi)\,\di\nu(\xi)\biggr) . 
\end{align}

\begin{proposition}\label{lm:chainofimplications}
With $\Xi$, $\nu$, $A$ and $B_{\xi}$ {given} as above, consider the following statements:
\begin{enumerate}[(1)]
\item
$\vec{\lambda}(A) \prec_{\log}\exp \int_\Xi \log \vec{\lambda}(B_\xi)\,\di\nu(\xi)$\label{it:firstconditionchain}
\item
$\vec{\lambda}(A) \prec_{w\log} \exp \int_\Xi \log \vec{\lambda}(B_\xi)\,\di\nu(\xi)$\label{it:secondconditionchain}
\item
$\vec{\lambda}(A)\prec_{w\log}\int_\Xi \,\vec{\lambda}(B_\xi)\,\di\nu(\xi)$
\label{it:thirdconditionchain}
\item
$\vec{\lambda}(A)\prec_w\int_\Xi\vec{\lambda}(B_\xi)\,\di\nu(\xi)$.
\label{it:fourthconditionchain}
\end{enumerate}
Then~\eqref{it:firstconditionchain}~$\implies$~\eqref{it:secondconditionchain}~$\implies$~\eqref{it:thirdconditionchain}~$\implies$~\eqref{it:fourthconditionchain}. 
\end{proposition}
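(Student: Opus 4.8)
The plan is to verify the three implications in turn; each is short, and the substance of the section lies in the converse directions established later. The implication \eqref{it:firstconditionchain}~$\implies$~\eqref{it:secondconditionchain} is immediate from the definitions: log-majorization is weak log-majorization together with the extra equality of the full products at $k=d$, so dropping that single constraint yields~\eqref{it:secondconditionchain}.

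For \eqref{it:secondconditionchain}~$\implies$~\eqref{it:thirdconditionchain} the key point is that the two right-hand sides compare entrywise. Fix $i\in[d]$. Because $\xi\mapsto B_\xi$ is continuous and uniformly bounded, the scalar function $\xi\mapsto\lambda_i(B_\xi)$ is continuous and uniformly bounded, so $\int_\Xi\log\lambda_i(B_\xi)\,\di\nu(\xi)$ is well-defined in $[-\infty,\infty)$ (with $\log 0=-\infty$). Concavity of $\log$ and Jensen's inequality for the probability measure $\nu$ give
\begin{align}
\exp\int_\Xi\log\lambda_i(B_\xi)\,\di\nu(\xi)\le\int_\Xi\lambda_i(B_\xi)\,\di\nu(\xi),
\end{align}
where $e^{-\infty}=0$ covers the case in which $\lambda_i(B_\xi)$ vanishes on a set of positive $\nu$-measure. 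Moreover $\lambda_1(B_\xi)\ge\dots\ge\lambda_d(B_\xi)\ge0$ for each $\xi$, and both integration against $\nu$ and the increasing map $t\mapsto e^t$ preserve this ordering; hence the vectors $\exp\int_\Xi\log\vec{\lambda}(B_\xi)\,\di\nu(\xi)$ and $\int_\Xi\vec{\lambda}(B_\xi)\,\di\nu(\xi)$ both have non-increasing nonnegative entries, the former dominated entrywise by the latter. Consequently, for each $k\in[d]$,
\begin{align}
\prod_{i=1}^k\lambda_i(A)
&\le\prod_{i=1}^k\exp\int_\Xi\log\lambda_i(B_\xi)\,\di\nu(\xi) \\
&\le\prod_{i=1}^k\int_\Xi\lambda_i(B_\xi)\,\di\nu(\xi),
\end{align}
the first inequality being~\eqref{it:secondconditionchain} and the second the entrywise bound combined with nonnegativity of all factors; this is exactly~\eqref{it:thirdconditionchain}.

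Finally, \eqref{it:thirdconditionchain}~$\implies$~\eqref{it:fourthconditionchain} is the general implication ``weak log-majorization $\Rightarrow$ weak majorization'' recorded right after Lemma~\ref{lm:convex}: one passes to logarithms using the equivalence $\a\prec_{w\log}\b\iff\log\a\prec_w\log\b$ and applies Lemma~\ref{lm:convex} to the non-decreasing convex exponential, with $\a=\vec{\lambda}(A)$ and $\b=\int_\Xi\vec{\lambda}(B_\xi)\,\di\nu(\xi)$. I do not expect a genuine obstacle anywhere in this chain; the only care needed is measure-theoretic bookkeeping (measurability and uniform boundedness of $\xi\mapsto\lambda_i(B_\xi)$, so that all integrals over $\Xi$ are meaningful) together with consistent use of the conventions $\log 0=-\infty$ and $e^{-\infty}=0$, so that Jensen's inequality and the above equivalence remain valid when some eigenvalues vanish.
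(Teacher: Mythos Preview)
Your proof is correct and follows essentially the same approach as the paper: the paper's one-line justifications---``trivial'', ``Jensen's inequality'', and ``Lemma~\ref{lm:convex}''---are exactly the three arguments you spell out in detail. The only difference is that you include the measure-theoretic bookkeeping and the handling of the $\log 0=-\infty$ convention explicitly, which the paper leaves implicit.
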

\begin{proof}
The implication~\eqref{it:firstconditionchain}~$\implies$~\eqref{it:secondconditionchain} is trivial. Implication \eqref{it:secondconditionchain}~$\implies$~\eqref{it:thirdconditionchain} follows by Jensen's inequality,
and~\eqref{it:thirdconditionchain}~$\implies$~\eqref{it:fourthconditionchain} 
follows by Lemma~\ref{lm:convex}. 
\end{proof}

The following theorem  constitutes part of our main results and characterizes the second condition in this chain. We will give a similar characterization of the first condition  in Theorems~\ref{th:main} and~\ref{th:mainadditional} below. 

\begin{theorem}\label{th:mainweak}
With $\Xi$, $\nu$, $A$ and $B_{\xi}$ {given} as above, the following statements are equivalent:
\begin{enumerate}[(i)]
\item 
$\vec{\lambda}(A) \prec_{w\log} \exp \int_\Xi \log \vec{\lambda}(B_\xi)\,\di\nu(\xi)$, i.e.,\\
 $\log \vec{\lambda}(A) \prec_w \int_\Xi \log \vec{\lambda}(B_\xi)\,\di\nu(\xi)$;\label{it:loglambdaweak} 
 
\item for every continuous non-decreasing function $f:[0,\infty)\to[0,\infty)$ such that
$x \mapsto \log f(e^x)$ is convex on~$\bR$, and for every unitarily invariant norm $\unorm{\cdot}$,\label{it:nondecr}
\begin{align}
\unorm{f(A)} \le \exp\int_\Xi \log \unorm{ f(B_\xi) }\, \di\nu(\xi) \,;\label{F-II}
\end{align}  
\item for every continuous non-decreasing function $g:[0,\infty)\to[0,\infty)$ such that
$x \mapsto g(e^x)$ is convex on $\bR$, and for every unitarily invariant norm $\unorm{\cdot}$,
\begin{align}
\unorm{g(A)} \le\int_\Xi \unorm{g(B_\xi)}\, \di\nu(\xi)\,  \,.\label{eq:gIII}
\end{align}\label{it:gnondecrcond}
\end{enumerate}
\end{theorem}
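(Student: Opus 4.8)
The plan is to prove the four implications (i) $\Rightarrow$ (ii), (i) $\Rightarrow$ (iii), (ii) $\Rightarrow$ (i) and (iii) $\Rightarrow$ (i), which together give the equivalence (the first and third yield (i)$\Leftrightarrow$(ii), the second and fourth yield (i)$\Leftrightarrow$(iii)). Throughout I use that a non-decreasing $\phi$ satisfies $\vec{\lambda}(\phi(A))=\phi(\vec{\lambda}(A))$, and I reduce to the case $A,B_\xi\in\cPp$ with $\xi\mapsto\log\vec{\lambda}(B_\xi)$ uniformly bounded, recovering the general case by applying the statement to $A+\eps I$, $B_\xi+\eps I$ and letting $\eps\downarrow0$. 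The two points that go beyond routine manipulations with Lemma~\ref{lm:convex} and Jensen's inequality are a continuous H\"older inequality for gauge functions and a limiting argument that extracts the multiplicative log-majorization relation from additive Ky Fan norm inequalities.

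For (i) $\Rightarrow$ (ii): given $f$ as in~(ii), set $h(x):=\log f(e^x)$, which is convex and non-decreasing on $\bR$, so that $\log\vec{\lambda}(f(A))=h(\log\vec{\lambda}(A))$ and likewise for every $B_\xi$. Starting from~(i), Lemma~\ref{lm:convex} (in the form for real, possibly $-\infty$-valued vectors) applied to $h$, followed by Jensen's inequality for the convex $h$, gives $\log\vec{\lambda}(f(A))\prec_w\int_\Xi h(\log\vec{\lambda}(B_\xi))\,\di\nu(\xi)=\int_\Xi\log\vec{\lambda}(f(B_\xi))\,\di\nu(\xi)$; applying Lemma~\ref{lm:convex} once more with the convex non-decreasing exponential yields $\vec{\lambda}(f(A))\prec_w\exp\int_\Xi\log\vec{\lambda}(f(B_\xi))\,\di\nu(\xi)$. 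Since $f(A)\in\cP$, feeding this relation into the gauge function $\Phi$ of an arbitrary unitarily invariant norm (monotone under $\prec_w$ by~($\star$)) gives
\[
\unorm{f(A)}_\Phi\le\Phi\Bigl(\exp\int_\Xi\log\vec{\lambda}(f(B_\xi))\,\di\nu(\xi)\Bigr),
\]
and it remains to invoke the continuous H\"older inequality below with $C_\xi=f(B_\xi)$. The implication (i) $\Rightarrow$ (iii) is parallel but shorter: with $G(x):=g(e^x)$ convex and non-decreasing, Lemma~\ref{lm:convex} and Jensen's inequality give $\vec{\lambda}(g(A))=G(\log\vec{\lambda}(A))\prec_w\int_\Xi G(\log\vec{\lambda}(B_\xi))\,\di\nu(\xi)=\int_\Xi\vec{\lambda}(g(B_\xi))\,\di\nu(\xi)$; since $g(A),g(B_\xi)\in\cP$ and $\xi\mapsto g(B_\xi)$ is continuous, the equivalence of~\eqref{it:weakconvmaj} and~\eqref{it:unitaryinvnormmaj} in Remark~\ref{rem:punar} yields~\eqref{eq:gIII}.

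The continuous H\"older inequality I need asserts that for any gauge function $\Phi$ and any continuous, uniformly bounded $\xi\mapsto C_\xi\in\cP$,
\[
\Phi\Bigl(\exp\int_\Xi\log\vec{\lambda}(C_\xi)\,\di\nu(\xi)\Bigr)\le\exp\int_\Xi\log\Phi\bigl(\vec{\lambda}(C_\xi)\bigr)\,\di\nu(\xi)=\exp\int_\Xi\log\unorm{C_\xi}_\Phi\,\di\nu(\xi).
\]
This holds because the map $(t_1,\dots,t_d)\mapsto\log\Phi(e^{t_1},\dots,e^{t_d})$ is convex on $\bR^d$: taking $m=2$, $\a_1=e^{\mathbf{s}}$, $\a_2=e^{\mathbf{s}'}$ (componentwise exponentials), $\beta_1=\lambda$, $\beta_2=1-\lambda$ in Lemma~\ref{lm:hoelder} and passing to logarithms produces exactly the convexity inequality. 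Jensen's inequality for this convex function applied to the (in the reduced setting bounded, hence $\nu$-integrable) map $\xi\mapsto\log\vec{\lambda}(C_\xi)$ then gives the claim, completing (i) $\Rightarrow$ (ii).

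For the converses (ii) $\Rightarrow$ (i) and (iii) $\Rightarrow$ (i), fix $n>0$ and apply~\eqref{F-II}, respectively~\eqref{eq:gIII}, to the power function $x\mapsto x^n$ (admissible in both function classes, since $\log e^{nx}=nx$ and $e^{nx}$ are convex and $x\mapsto x^n$ is continuous and non-decreasing on $[0,\infty)$) and to the Ky Fan norm $\|\cdot\|_{(k)}$. Since $\|A^n\|_{(k)}=\sum_{i=1}^k\lambda_i(A)^n$, and using Jensen's inequality to bound the right-hand side of~\eqref{F-II} from below by $\int_\Xi\|B_\xi^n\|_{(k)}\,\di\nu(\xi)$, both conditions imply
\[
\sum_{i=1}^k\lambda_i(A)^n\le\int_\Xi\sum_{i=1}^k\lambda_i(B_\xi)^n\,\di\nu(\xi),\qquad k\in[d].
\]
Subtracting $k$, dividing by $n$, and letting $n\downarrow0$ — using that $t\mapsto(t^n-1)/n$ decreases to $\log t$ as $n\downarrow0$ for every $t\ge0$, that $\lambda_i(B_\xi)$ is uniformly bounded, and monotone convergence — gives $\sum_{i=1}^k\log\lambda_i(A)\le\sum_{i=1}^k\int_\Xi\log\lambda_i(B_\xi)\,\di\nu(\xi)$ for every $k\in[d]$, which is exactly~(i). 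I expect this last limiting step to be the main obstacle: it converts the additive Ky Fan inequalities into the multiplicative (weak) log-majorization relation by differentiating the power function at exponent zero, and carrying out the passage to the limit under the integral sign — uniform control of the convergence in $\xi$ and the treatment of vanishing eigenvalues — is precisely what the preliminary reduction to $A,B_\xi\in\cPp$ with bounded $\log\vec{\lambda}(B_\xi)$ is designed to make safe.
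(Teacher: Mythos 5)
Your proof is essentially the paper's proof, with a few presentational variants worth naming. The forward direction (i)~$\Rightarrow$~(ii) is identical in substance: the paper discretizes the integral and applies Lemma~\ref{lm:hoelder} to the resulting finite products, while you repackage this as convexity of $(t_1,\dots,t_d)\mapsto\log\Phi(e^{t_1},\dots,e^{t_d})$ (obtained from the $m=2$ case of Lemma~\ref{lm:hoelder}) followed by multivariate Jensen; the two are equivalent and yours is arguably cleaner. For the backward implications, the paper treats (ii)~$\Rightarrow$~(i) and (iii)~$\Rightarrow$~(i) with different test functions ($f(x)=x^p$ with $\tfrac1p\log\bigl(\tfrac1k\sum\lambda_i^p\bigr)\searrow\tfrac1k\sum\log\lambda_i$ for the former, $g(x)=\log(1+\eps^{-1}x)$ with $\eps\searrow0$ for the latter). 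You unify them by first applying Jensen to reduce~\eqref{F-II} to the same estimate as~\eqref{eq:gIII}, then using $x\mapsto x^n$ with the elementary termwise limit $\frac{t^n-1}{n}\searrow\log t$; this works and avoids the auxiliary convexity of $p\mapsto\log\bigl(\frac1k\sum a_i^p\bigr)$ that the paper invokes.

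One point is stated too loosely: your blanket reduction "apply the statement to $A+\eps I$, $B_\xi+\eps I$." For (ii)~$\Rightarrow$~(i) and (iii)~$\Rightarrow$~(i) you should perturb only $B_\xi$ (then~\eqref{F-II} and~\eqref{eq:gIII} are preserved because $f,g$ are non-decreasing, and you send $\eps\searrow0$ by monotone convergence); perturbing $A$ as well would change the hypothesis in an uncontrolled direction. For (i)~$\Rightarrow$~(ii) and (i)~$\Rightarrow$~(iii), perturbing both $A$ and $B_\xi$ by the \emph{same} $\eps$ does preserve~(i), but this is not immediate: one needs that $u\mapsto\log(e^u+\eps)$ is convex and non-decreasing, then Jensen gives $\int\log(\lambda_i(B_\xi)+\eps)\,\di\nu\ge\log\bigl(\exp\int\log\lambda_i(B_\xi)\,\di\nu+\eps\bigr)$ componentwise, and Lemma~\ref{lm:convex} applied to the same function transports the weak majorization. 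You assert this reduction without argument. The paper sidesteps the issue entirely by perturbing $B_\xi$ by $\eps$ (which strictly increases the right side of the weak log-majorization) and then choosing a smaller $\delta_\eps\in(0,\eps)$ for $A$ so that the inequality is trivially preserved; that route is more transparent and is worth adopting.
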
  

When $\Xi$ is a one-point set, Theorem~\ref{th:mainweak}
 reduces to \cite[Prop. 4.4.13]{hiai10b}, except condition~\eqref{it:nondecr}. The proof of \eqref{it:nondecr}~$\implies$ \eqref{it:loglambdaweak} given below
implies the next proposition, which appears to be a new characterization of weak log-majorization.
\begin{proposition}
\label{pr:charweaklog}
For $A,B\in\cP$, 
$\vec{\lambda}(A) \prec_{w\log} \vec{\lambda}(B)$ if and only if
\begin{align}
\|A^p\|_{(k)}\le\|B^p\|_{(k)},\qquad p>0,\ \ k \in [d] .
\end{align}
\end{proposition}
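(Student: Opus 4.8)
The plan is to prove the equivalence by showing that weak log-majorization is exactly the statement that $\|A^p\|_{(k)} \le \|B^p\|_{(k)}$ holds for all exponents $p > 0$ and all Ky Fan norms. In fact, I expect this to drop out of the $\Xi$-a-point special case of the argument for the implication $\eqref{it:nondecr} \implies \eqref{it:loglambdaweak}$ of Theorem~\ref{th:mainweak}; but since that argument is not yet in front of us, here is how I would run it directly.

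\emph{Forward direction.} Assume $\vec\lambda(A) \prec_{w\log} \vec\lambda(B)$, i.e.\ $\prod_{i=1}^k \lambda_i(A) \le \prod_{i=1}^k \lambda_i(B)$ for every $k\in[d]$. For fixed $p>0$, the vector $\vec\lambda(A^p) = (\lambda_1(A)^p,\dots,\lambda_d(A)^p)$ is again in decreasing order, and raising to the $p$-th power preserves the product inequalities, so $\vec\lambda(A^p)\prec_{w\log}\vec\lambda(B^p)$. By the conventions in Section~\ref{sec:not}, this is equivalent to $\log\vec\lambda(A^p)\prec_w\log\vec\lambda(B^p)$, and applying Lemma~\ref{lm:convex} with the non-decreasing convex function $x\mapsto e^x$ (or simply recalling that weak log-majorization implies weak majorization) gives $\vec\lambda(A^p)\prec_w\vec\lambda(B^p)$. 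Evaluating the $k$-th partial sum is precisely $\|A^p\|_{(k)}\le\|B^p\|_{(k)}$.

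\emph{Converse direction.} This is the step I expect to carry the actual content. Assume $\|A^p\|_{(k)}\le\|B^p\|_{(k)}$ for all $p>0$, $k\in[d]$; equivalently $\sum_{i=1}^k\lambda_i(A)^p\le\sum_{i=1}^k\lambda_i(B)^p$. Fix $k\in[d]$. I want $\prod_{i=1}^k\lambda_i(A)\le\prod_{i=1}^k\lambda_i(B)$, i.e.\ $\sum_{i=1}^k\log\lambda_i(A)\le\sum_{i=1}^k\log\lambda_i(B)$ (with the convention $\log 0 = -\infty$, under which the inequality is automatic when some $\lambda_i(A)=0$, so we may assume all relevant eigenvalues are positive). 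Write $s_k^A(p):=\sum_{i=1}^k\lambda_i(A)^p$ and similarly $s_k^B(p)$. The hypothesis is $s_k^A(p)\le s_k^B(p)$ for all $p>0$, and both are smooth in $p$ with $s_k^A(0)=s_k^B(0)=k$. Hence the one-sided derivative at $p=0^+$ satisfies
\begin{align}
\sum_{i=1}^k\log\lambda_i(A)=\frac{d}{dp}\Big|_{p=0^+}s_k^A(p)\le\frac{d}{dp}\Big|_{p=0^+}s_k^B(p)=\sum_{i=1}^k\log\lambda_i(B),
\end{align}
where the inequality of derivatives follows because $s_k^A-s_k^B$ is a function that is $\le 0$ for $p>0$ and vanishes at $p=0$, so its right derivative at $0$ is $\le 0$. (Differentiability at $0^+$ is clear since each $\lambda_i^p = e^{p\log\lambda_i}$.) This yields $\vec\lambda(A)\prec_{w\log}\vec\lambda(B)$ as desired.

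The main obstacle is purely a matter of care rather than depth: one must handle zero eigenvalues cleanly (the $\log 0=-\infty$ convention makes the target inequality trivial in that case, but one should note that if $\lambda_k(A)=0$ the $k$-th inequality holds vacuously, while if $\lambda_k(B)=0$ but $\lambda_k(A)>0$ then taking $p\to\infty$ in $s_k^A(p)\le s_k^B(p)$ — or rather comparing leading terms — forces a contradiction, so this case does not arise), and one must justify passing the limit/derivative through the finite sum, which is immediate. An alternative to the derivative argument, if one prefers to avoid calculus, is to use that $p\mapsto\big(s_k^A(p)\big)^{1/p}$ decreases to $\big(\prod_{i=1}^k\lambda_i(A)\big)^{1/k}$ as $p\to 0^+$ (a weighted power-mean limit), together with the same limit for $B$, and to take $p\to 0^+$ in $s_k^A(p)^{1/p}\le s_k^B(p)^{1/p}$.
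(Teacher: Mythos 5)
Your main argument is correct and takes essentially the same route as the paper: the converse direction is extracted from the Ky Fan inequalities $\|A^p\|_{(k)}\le\|B^p\|_{(k)}$ by sending $p\searrow 0$. The paper packages this limit as the monotone decrease $\tfrac1p\log\bigl(\tfrac1k\sum_{i=1}^k\lambda_i^p\bigr)\searrow\tfrac1k\sum_{i=1}^k\log\lambda_i$ — monotonicity that is actually needed in the integral-average version to invoke monotone convergence — whereas your right-derivative formulation of $s_k(p)=\sum_{i=1}^k\lambda_i^p$ at $p=0^+$ is a slightly more direct way of expressing the same computation and is perfectly adequate in this one-point case; the forward direction is routine and matches what the paper takes as known.

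Two small but real slips. First, the excluded case $\lambda_k(A)>0$ with $\lambda_j(B)=0$ for some $j\le k$ is ruled out by $p\searrow 0$, not by $p\to\infty$: as $p\searrow0$ one has $s_k^A(p)\to k$ while $s_k^B(p)\to r<k$, where $r$ is the number of strictly positive entries among $\lambda_1(B),\dots,\lambda_k(B)$, contradicting $s_k^A(p)\le s_k^B(p)$ for small $p>0$. By contrast $p\to\infty$ gives no contradiction when $\lambda_1(A)\le\lambda_1(B)$ (e.g.\ $\vec\lambda(A)=(1,1)$, $\vec\lambda(B)=(2,0)$ has $s_2^A(p)\le s_2^B(p)$ exactly for $p\ge1$). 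You do need to settle this case before taking the derivative, since when some $\lambda_j(B)=0$ the map $p\mapsto s_k^B(p)$ jumps from $k$ at $p=0$ to $r<k$ at $p=0^+$ and has no right derivative there. Second, in your calculus-free alternative the quantity converging to $\bigl(\prod_{i=1}^k\lambda_i(A)\bigr)^{1/k}$ is the normalized power mean $\bigl(\tfrac1k s_k^A(p)\bigr)^{1/p}$, not $\bigl(s_k^A(p)\bigr)^{1/p}$, which diverges for $k\ge2$. Both points are easily patched; the paper instead sidesteps the degenerate-$B$ issue by regularizing to $B+\eps I$ and letting $\eps\searrow0$ at the end.
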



Before presenting the proof of Theorem~\ref{th:mainweak}, we discuss the convexity conditions appearing in~\eqref{it:nondecr} and~\eqref{it:gnondecrcond}. We will use the following properties 
 in the proof of Theorem~\ref{th:mainweak} and again in Section~\ref{sec:logmajorizationintegral}. 
\begin{lemma}\label{lm:main1}
Let $f:(0,\infty)\to[0,\infty)$ be a continuous function such that $x\mapsto \log f(e^x)$ is convex on $\bR$. Then:
\begin{enumerate}[(1)]
\item $f(x)>0$ for any $x>0$ unless $f\equiv0$. \label{lm:main1(a)}
\item $f(0^+):=\lim_{x\searrow0}f(x)$ exists in $[0,\infty]$, and if $f(0^+)<\infty$ then $f$ is non-decreasing on
$(0,\infty)$. \label{lm:main1(b)}
\end{enumerate}
Similarly, if $g:(0,\infty)\to[0,\infty)$ is a  function such that $x\mapsto g(e^x)$ is convex on $\bR$, then $g$ is automatically continuous on $(0,\infty)$,   $g(0^+):=\lim_{x\searrow0}g(x)$ exists in $[0,\infty]$, and if $g(0^+)<\infty$ then $g$ is non-decreasing on $(0,\infty)$.
\end{lemma}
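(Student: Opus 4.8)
The plan is to reduce both halves of the statement to elementary facts about convex functions on $\bR$, exploiting the bijection $x \leftrightarrow e^x$ between $\bR$ and $(0,\infty)$. Write $F(x) := \log f(e^x)$, which by hypothesis is convex on $\bR$ (with values in $[-\infty,\infty)$, taking the value $-\infty$ exactly where $f(e^x) = 0$). A convex function on $\bR$ that takes the value $-\infty$ at even one point must be identically $-\infty$: indeed, if $F(x_0) = -\infty$ and $F(x_1) > -\infty$ for some $x_1 \ne x_0$, then convexity applied on the segment from $x_1$ past $x_0$ (writing $x_0$ as a proper convex combination of $x_1$ and some further point $x_2$) forces $F(x_0) > -\infty$ when $F(x_2)$ is finite, and one can always pick such an $x_2$ unless $F \equiv -\infty$ on a half-line, which again propagates. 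This yields part~\eqref{lm:main1(a)}: either $f \equiv 0$ on $(0,\infty)$, or $f(x) > 0$ for all $x > 0$.

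For part~\eqref{lm:main1(b)}, assume $f \not\equiv 0$, so $F$ is a finite-valued convex function on $\bR$. A finite convex function on $\bR$ has a limit in $[-\infty,\infty]$ as $x \searrow -\infty$ (it is monotone in that tail, being convex), and correspondingly $f(0^+) = \lim_{x \searrow 0} f(x) = \exp\bigl(\lim_{x \to -\infty} F(x)\bigr)$ exists in $[0,\infty]$. Now suppose $f(0^+) < \infty$, i.e.\ $\lim_{x\to-\infty} F(x) =: c \in [-\infty,\infty)$ is finite or $-\infty$; in either case $F$ is bounded above on some left half-line. A convex function on $\bR$ that is bounded above near $-\infty$ cannot be strictly decreasing anywhere, hence is non-decreasing on all of $\bR$: if $F(a) > F(b)$ for some $a < b$, convexity would force $F$ to increase without bound as $x \to -\infty$ along the line through $(a,F(a))$ and $(b,F(b))$, contradicting the upper bound. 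Therefore $F$ is non-decreasing on $\bR$, and since $x \mapsto e^x$ is increasing, $f = \exp \circ F \circ \log$ is non-decreasing on $(0,\infty)$.

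For the final sentence of the lemma, let $G(x) := g(e^x)$, which is convex on $\bR$ by hypothesis and finite-valued (it maps into $[0,\infty)$). A finite convex function on an open interval is automatically continuous, so $G$ is continuous on $\bR$ and hence $g = G \circ \log$ is continuous on $(0,\infty)$. The existence of $g(0^+)$ in $[0,\infty]$ follows exactly as above: $G$, being convex on $\bR$, is monotone on a left half-line and so $\lim_{x\to-\infty} G(x)$ exists in $[-\infty,\infty]$, but since $G \geq 0$ it exists in $[0,\infty]$, giving $g(0^+) = \lim_{x\to-\infty} G(x) \in [0,\infty]$. Finally, if $g(0^+) < \infty$ then $G$ is bounded above near $-\infty$, so by the same argument as in the previous paragraph $G$ is non-decreasing on $\bR$, whence $g$ is non-decreasing on $(0,\infty)$.

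I expect the only real subtlety to be the $-\infty$-valued case in part~\eqref{lm:main1(a)}: one must be slightly careful that ``convex'' for an $[-\infty,\infty)$-valued function still means the epigraph is convex (equivalently Jensen's inequality holds with the conventions on $-\infty$), and then the propagation-of-$-\infty$ argument goes through cleanly. Everything else is the standard dictionary between convex functions and monotonicity/continuity on $\bR$, transported through the exponential change of variables; no serious obstacle is anticipated there.
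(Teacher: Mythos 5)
Your overall strategy matches the paper's: transfer everything to convex functions on $\bR$ via the change of variables $x \leftrightarrow e^x$, then invoke elementary facts about monotonicity and limits of convex functions. Parts~\eqref{lm:main1(b)} and the $g$-statement are handled exactly as the paper does (the paper omits the $g$-argument as ``similar''; you supply it correctly, including the automatic-continuity observation from finite convexity). For part~\eqref{lm:main1(a)}, your route is slightly different and in fact cleaner than the paper's: the paper uses the continuity of $f$ to locate a boundary point $\alpha$ of the zero set and argues convexity must fail at $\log\alpha$, whereas you invoke the general fact that a convex $F:\bR\to[-\infty,\infty)$ which takes the value $-\infty$ anywhere must be $\equiv -\infty$. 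That fact is correct and does not even need continuity of $f$.

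However, your justification of that fact has the convexity inequality pointing the wrong way. Writing $x_0 = t x_1 + (1-t) x_2$ gives $F(x_0) \le t F(x_1) + (1-t) F(x_2)$, an \emph{upper} bound on $F(x_0)$; this is consistent with $F(x_0)=-\infty$ and cannot ``force $F(x_0)>-\infty$''. The correct arrangement is to write the \emph{finite-valued} point as the interior convex combination: with $F(x_0)=-\infty$ and $F(x_1)$ finite, choose $x_2$ so that $x_1$ lies strictly between $x_0$ and $x_2$, i.e.\ $x_1 = s x_0 + (1-s)x_2$ with $s\in(0,1)$. Since $F(x_2)<\infty$ always (the codomain excludes $+\infty$), convexity gives $F(x_1)\le s(-\infty)+(1-s)F(x_2)=-\infty$, contradicting finiteness of $F(x_1)$. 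This also removes the need for your side-case ``unless $F\equiv-\infty$ on a half-line'', which is a red herring since $F$ never takes $+\infty$. With that one-line correction the proof is complete and correct.
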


In particular,~\eqref{lm:main1(b)} implies that
a non-decreasing continuous function $f:(0,\infty)\rightarrow[0,\infty)$ with the 
property that $x\mapsto \log f(e^x)$ is convex on~$\mathbb{R}$ extends to a continuous function $f:[0,\infty)\rightarrow [0,\infty)$. This corresponds to~\eqref{it:nondecr} of Theorem~\ref{th:mainweak}. Analogously, if $g:(0,\infty)\rightarrow [0,\infty)$ is continuous and non-decreasing, and $x\mapsto g(e^x)$ is convex on~$\mathbb{R}$, then~$g$ extends to a continuous function $g:[0,\infty)\rightarrow [0,\infty)$. Such functions appear in~\eqref{it:gnondecrcond} of Theorem~\ref{th:mainweak}. In Section~\ref{sec:logmajorizationintegral}, we will drop the assumption that $f,g$ are non-decreasing and instead consider majorization instead of weak majorization. 

For instance, for any $\alpha\ge0$ and any $p>0$, $f(x):=(\alpha+x)^p$ is an increasing function on $[0,\infty)$
such that $\log f(e^x)$ is convex on $\bR$. For any $p>0$, $g(x):=\log(1+x^p)$ is an increasing function on
$[0,\infty)$ such that $g(e^x)$ is convex on $\bR$ but $\log g(e^x)$ is concave on $\bR$.

\begin{proof}
\eqref{lm:main1(a)}\enspace
Let $h(x):=\log f(e^x)$, $x\in\bR$. If the conclusion is not true, then there is an $\alpha>0$ such that
$f(\alpha)=0$ and $f(x)>0$ for $x\in(\alpha-\delta,\alpha)$ or $x\in(\alpha,\alpha+\delta)$ for some $\delta>0$.
Since $\lim_{x\to\log \alpha}h(x)=\log f(\alpha)=-\infty$, $h$ cannot be convex around~$\log \alpha$.

\hspace{-0.08cm}\eqref{lm:main1(b)}\enspace\hspace{-0.09cm}
From the convexity of $h$ on $\bR$ it follows that $h(-\infty):=\lim_{x\to-\infty}h(x)$ exists in $[-\infty,\infty]$. This
implies that $f(0^+)$ exists in $[0,\infty]$. Unless $h$ is non-decreasing on $\bR$, the convexity of $h$ implies
that $h(-\infty)=\infty$ and so $f(0^+)=\infty$. 

The proof of the statements for $g$ is similar and omitted here. 
\end{proof}

\subsection*{Proof of Theorem~\ref{th:mainweak}}

In comparing conditions~\eqref{it:nondecr} and~\eqref{it:gnondecrcond}, the convexity of $\log f(e^x)$ in~\eqref{it:nondecr} is stronger than
the convexity of $f(e^x)$ in~\eqref{it:gnondecrcond}. Correspondingly, the conclusion of~\eqref{it:nondecr} is stronger than
that of~\eqref{it:gnondecrcond}. Hence it is not clear how to pass directly between~\eqref{it:nondecr} and~\eqref{it:gnondecrcond}.
The proof is thus split into four parts, corresponding to the implications~\eqref{it:loglambdaweak}~$\implies$~\eqref{it:nondecr}, \eqref{it:nondecr}~$\implies$~\eqref{it:loglambdaweak}, \eqref{it:loglambdaweak}~$\implies$~\eqref{it:gnondecrcond}, and \eqref{it:gnondecrcond}~$\implies$~\eqref{it:loglambdaweak}.

\begin{proof}[Proof of \eqref{it:loglambdaweak}~$\implies$~\eqref{it:nondecr}]
First, assume that $A,B_\xi\in\cPp$ and $B_\xi\ge\eps I$ for all $\xi\in\Xi$ with some $\eps>0$. Because
$f$ is non-increasing on $[0,\infty)$, we have 
\begin{align}
\log \vec{\lambda}(f(A))&=\log f(e^{\log \vec{\lambda}(A)})\ .\label{eq:equalityordering}
\end{align}
Since $\log f(e^x)$ is convex on $\bR$, Lemma~\ref{lm:convex} yields
\begin{align}
\log f(e^{\log \vec{\lambda}(A)})
&\prec_w\log f\biggl(\exp\int_\Xi\log \vec{\lambda}(B_\xi)\,\di\nu(\xi)\biggr) \\
&\le\int_\Xi\log f( \vec{\lambda}(B_\xi))\,\di\nu(\xi)\label{eq:initialstepweakproof}
\end{align}
 from condition~\eqref{it:loglambdaweak}.  Therefore, we have with~\eqref{eq:equalityordering}
\begin{align}
\vec{\lambda}(f(A))\prec_w\exp\int_\Xi\log f(\vec{\lambda}(B_\xi))\,\di\nu(\xi)
\end{align}
so that
\begin{equation}\label{F-1}
\unorm{f(A)}_{\Phi} = \Phi(\vec{\lambda}(f(A)))
\le\Phi\biggl(\exp\int_\Xi\log f(\vec{\lambda}(B_\xi))\,\di\nu(\xi)\biggr) \,.
\end{equation}

By Lemma \ref{lm:main1}\,\eqref{lm:main1(a)} we may assume that $f(x)>0$ for any $x>0$,
so the function $\xi\mapsto\log f(\lambda_i(B_\xi))$, $i \in [d]$, as well as
$\xi\mapsto \log \unorm{f(B_\xi)}_{\Phi}$ are bounded and continuous on $\Xi$. Since, moreover,
$\nu(\Xi)=1=\sup\{\nu(K):K\subset\Xi\ \mbox{compact}\}$ due to the $\sigma$-compactness of $\Xi$,
a standard compactness argument shows that there are ${\xi_l}^{(m)}\in\Xi$ and ${\beta_l}^{(m)}>0$
for $l \in [m]$ and $m\in\bN$ with $\sum_{l=1}^m {\beta_l}^{(m)}=1$ such that
\begin{align}
\int_\Xi\log f(\lambda_i(B_\xi))\,\di\nu(\xi)
&=\lim_{m\to\infty}\sum_{l=1}^m\beta_l^{(m)}\log f\Bigl(
\lambda_i\Bigl(B_{\xi_l^{(m)}}\Bigr)\Bigr) \,,\qquad i \in [d], \\
\int_\Xi\log \unorm{f(B_\xi)}_{\Phi} \,\di\nu(\xi)
&=\lim_{m\to\infty} \sum_{l=1}^m \beta_l^{(m)} \log
 \unormb{f\Bigl(B_{\xi_l^{(m)}}\Bigr)}_{\Phi} \,.
\end{align}
Therefore,
\begin{align}
\Phi\biggl(\exp\int_\Xi\log f(\vec{\lambda}(B_\xi))\,\di\nu(\xi)\biggr)
&=\lim_{m\to\infty}\Phi\Biggl(\prod_{l=1}^m f\Bigl(
\vec{\lambda}\Bigl(B_{\xi_l^{(m)}}\Bigr)\Bigr)^{\beta_l^{(m)}}\Biggr), \\
\exp\int_\Xi\log \unorm{f(B_\xi)}_{\Phi} \,\di\nu(\xi)
&=\lim_{m\to\infty}\prod_{l=1}^m \unormb{f\Bigl(B_{\xi_l^{(m)}}\Bigr)}_{\Phi}^{\beta_l^{(m)}} \,.
\end{align}
Since Lemma~\ref{lm:hoelder} implies that
\begin{align}
\Phi\Biggl(\prod_{l=1}^m f\Bigl(
\vec{\lambda}\Bigl(B_{\xi_l^{(m)}}\Bigr)\Bigr)^{\beta_l^{(m)}}\Biggr)
&\le\prod_{l=1}^m\Phi^{\beta_l^{(m)}}
\Bigl(f\Bigl(\vec{\lambda}\Bigl(B_{\xi_l^{(m)}}\Bigr)\Bigr)\Bigr) \\
&=\prod_{l=1}^m\Phi^{\beta_l^{(m)}}
\Bigl(\vec{\lambda}\Bigl(f\Bigl(B_{\xi_l^{(m)}}\Bigr)\Bigr)\Bigr)\\
&=\prod_{l=1}^m \unormb{f\Bigl(B_{\xi_l^{(m)}}\Bigr)}_{\Phi}^{\beta_l^{(m)}} \,,
\end{align}
we have
\begin{equation}\label{F-2}
\Phi\biggl(\exp\int_\Xi\log f(\vec{\lambda}(B_\xi))\,\di\nu(\xi)\biggr)
\le\exp\int_\Xi\log \unorm{f(B_\xi)}_{\Phi} \,\di\nu(\xi) \,.
\end{equation}
Combining \eqref{F-1} and \eqref{F-2} gives inequality \eqref{F-II}.

Next, consider the general case where $A,B_\xi\in\cP$. For any $\eps>0$, since
\begin{align}
\prod_{i=1}^k\lambda_i(A)&\le\prod_{i=1}^k\exp\int_\Xi\log\lambda_i(B_\xi)\,\di\nu(\xi) \\
&<\prod_{i=1}^k\exp\int_\Xi\log(\lambda_i(B_\xi)+\eps)\,\di\nu(\xi),\qquad k \in [d],  
\end{align}
one can choose a $\delta_\eps\in(0,\eps)$ such that
\begin{align}
\prod_{i=1}^k(\lambda_i(A)+\delta_\eps)
\le\prod_{i=1}^k\exp\int_\Xi\log(\lambda_i(B_\xi)+\eps)\,\di\nu(\xi), {\qquad k \in [d],}\label{eq:epsilondeltaapprox}
\end{align}
i.e., we have the weak log-majorization
$\vec{\lambda}(A+\delta_\eps I)\prec_{w\log}\exp\int_\Xi\log \vec{\lambda}(B_\xi+\eps I)\,\di\nu(\xi)$.
 By applying the first case to $A+\delta_\eps I$ and $B_\xi+\eps I$, we have
\begin{equation}\label{F-3}
\unorm{f(A+\delta_\eps I)}_{\Phi} \le 
\exp\int_\Xi\log \unorm{f(B_\xi+\eps I)}_{\Phi} \ \di\nu(\xi).
\end{equation}
Since $\log \unorm{f(B_\xi+\eps I)}_{\Phi} \searrow\log \unorm{f(B_\xi)}_{\Phi}$ for every $\xi\in\Xi$ as $\eps\searrow0$,
the monotone convergence theorem gives
\begin{align}
\int_\Xi\log \unorm{f(B_\xi+\eps I)}_{\Phi} \,\di\nu(\xi)\searrow\int_\Xi\log \unorm{f(B_\xi)}_{\Phi} \ \di\nu(\xi).
\end{align}
Hence letting $\eps\searrow0$ in \eqref{F-3} gives the desired inequality.
\end{proof}


\begin{proof}[Proof of \eqref{it:nondecr}~$\implies$~\eqref{it:loglambdaweak}]
First, assume that $B_\xi\in\cPp$ for all $\xi\in\Xi$. Assume~\eqref{it:nondecr}, and for every
{$k \in [d]$} we prove that
\begin{equation}\label{F-4}
\prod_{i=1}^k\lambda_i(A)\le\prod_{i=1}^k\exp\int_\Xi\log\lambda_i(B_\xi)\,\di\nu(\xi).
\end{equation}
Since \eqref{F-4} is obvious if $\lambda_k(A)=0$, we may assume that $\lambda_k(A)>0$.
Applying inequality~\eqref{F-II} in~\eqref{it:nondecr} to $\unorm{\cdot}=\|\cdot\|_{(k)}$, and
$f(x)=x^p$ for each $p>0$ (which obviously satisfies the condition in~\eqref{it:nondecr}), we have
\begin{align}
\|A^p\|_{(k)}\le\exp\int_\Xi\log\|B_\xi^p\|_{(k)}\,\di\nu(\xi),
\end{align}
{i.e.,}
\begin{align}
\log\sum_{i=1}^k\lambda_i^p(A)\le\int_\Xi\log\sum_{i=1}^k\lambda_i^p(B_\xi)\,\di\nu(\xi).
\end{align}
Therefore,
\begin{equation}\label{F-5}
{\frac{1}{p}}\log\Biggl({\frac{1}{k}}\sum_{i=1}^k\lambda_i^p(A)\Biggr)
\le\int_\Xi{\frac{1}{p}}\log\Biggl({\frac{1}{k}}\sum_{i=1}^k\lambda_i^p(B_\xi)\Biggr)\,\di\nu(\xi).
\end{equation}
Since, for $a_i>0$, the function $p>0\mapsto\log\bigl({\frac{1}{k}}\sum_{i=1}^ka_i^p\bigr)$ is
convex, we find that, as $p\searrow0$, 
\begin{align}
{\frac{1}{p}}\log\Biggl({\frac{1}{k}} \sum_{i=1}^k\lambda_i^p(A)\Biggr)
&\searrow {\frac{1}{k}}\sum_{i=1}^k\log\lambda_i(A), \label{F-7}\\
{\frac{1}{p}}\log\Biggl({\frac{1}{k}}\sum_{i=1}^k\lambda_i^p(B_\xi)\Biggr)
&\searrow {\frac{1}{k}}\sum_{i=1}^k\log\lambda_i(B_\xi)\label{F-8}
\end{align}
for every $\xi\in\Xi$. 
This relies on the fact that $x\mapsto f(x)/x$ is non-decreasing if $f$ is convex and $f(0)=0$. 
Hence the monotone convergence theorem yields
\begin{align}
\int_\Xi{\frac{1}{p}}\log\Biggl({\frac{1}{k}}\sum_{i=1}^k\lambda_i^p(B_\xi)\Biggr)\,\di\nu(\xi)
\searrow{\frac{1}{k}}\int_\Xi\sum_{i=1}^k\log\lambda_i(B_\xi)\,\di\nu(\xi)\qquad
\mbox{as $p\searrow0$}.
\end{align}
Therefore, by letting $p\searrow0$ in \eqref{F-5} we have
\begin{align}
\sum_{i=1}^k\log\lambda_i(A)\le\int_\Xi\sum_{i=1}^k\log\lambda_i(B_\xi)\,\di\nu(\xi),
\end{align}
implying \eqref{F-4}.

Next, consider the general case where $B_\xi\in\cP$ for all $\xi\in\Xi$. Since the inequality~\eqref{F-II} in~\eqref{it:nondecr} holds with $B_\xi+\eps I$ instead of $B_\xi$ for any $\eps>0$, the above case implies
that
\begin{align}
\prod_{i=1}^k\lambda_i(A)\le\prod_{i=1}^k\exp\int_\Xi\log(\lambda_i(B_\xi)+\eps)\,\di\nu(\xi),
\qquad k \in [d].
\end{align}
Letting $\eps\searrow0$ gives \eqref{F-4} so that~\eqref{it:loglambdaweak} follows.
\end{proof}


\begin{proof}[Proof of \eqref{it:loglambdaweak} $\implies$~\eqref{it:gnondecrcond}]
Assume first that $A, B_\xi\in\cPp$
and $B_\xi\geq \varepsilon I$ for all $\xi\in\Xi$ with some $\varepsilon>0$. 
Since~\eqref{it:loglambdaweak} means that 
\begin{align}
\vec{\lambda}(\log A)=\log \vec{\lambda}(A) \prec_w\int_\Xi \log \vec{\lambda}(B_\xi)\,\di\nu(\xi)=\int_\Xi \vec{\lambda}(\log B_\xi)\,\di\nu(\xi)\, ,
\end{align}
one can apply~\eqref{it:weakconvmaj}~$\implies$~\eqref{it:funitaryinvmaj} of Theorem~\ref{th:mainweaksimple} to $\log A$, $\log B_\xi$ and $f(x):=g(e^x)$, where $g$ is as in~\eqref{it:gnondecrcond}. Inequality~\eqref{eq:gIII} then immediately follows. For the general case where $A, B_\xi\in\cP$, for any $\varepsilon>0$ choose a $\delta_\varepsilon\in (0,\varepsilon)$ satisfying~\eqref{eq:epsilondeltaapprox}. Since the above case gives $\unorm{g(A+\delta_\varepsilon I)}\leq \int_\Xi \unorm{g(B_\xi+\varepsilon I)}\,\di\nu(\xi)$, we have~\eqref{eq:gIII}
by letting~$\varepsilon\searrow0$.

\end{proof}


\begin{proof}[Proof of \eqref{it:gnondecrcond}~$\implies$~\eqref{it:loglambdaweak}]
For $k\in [d]$ let $\unorm{\cdot}=\|\cdot\|_{(k)}$ and $g(x):=\log(1+\eps^{-1}x)$ where $\eps>0$; then $g$ satisfies
the condition in~\eqref{it:gnondecrcond}. Since
\begin{align}
\|g(A)\|_{(k)} &=\sum_{i=1}^k\log(\eps+\lambda_i(A))-k\log\eps, \\
\int_\Xi\|g(B_\xi)\|_{(k)}\,\di\nu(\xi)
&= \sum_{i=1}^k\int_\Xi\log(\eps+\lambda_i(B_\xi))\,\di\nu(\xi)-k\log\eps,
\end{align} 
inequality~\eqref{eq:gIII} implies that
\begin{align}
\sum_{i=1}^k\log(\eps+\lambda_i(A))
\le\sum_{i=1}^k\int_\Xi\log(\eps+\lambda_i(B_\xi))\,\di\nu(\xi).
\end{align}
Letting $\eps\searrow0$ gives
\begin{align}
\sum_{i=1}^k\log\lambda_i(A)\le\sum_{i=1}^k\int_\Xi\log\lambda_i(B_\xi)\,\di\nu(\xi),
\end{align}
and hence~\eqref{it:loglambdaweak} follows.
\end{proof}

\section{Log-majorization with integral average\label{sec:logmajorizationintegral}}

Consider the strongest condition~\eqref{it:firstconditionchain} in the chain of implications 
in Proposition~\ref{lm:chainofimplications}. 
Our first main result concerning this condition is the following. 
\begin{theorem} \label{th:main}
With $\Xi$, $\nu$, $A$ and $B_{\xi}$ {given} as above, the following statements are equivalent:
\begin{enumerate}[(I)]
\item
{$\vec{\lambda}(A) \prec_{\log} \exp \int_\Xi \log \vec{\lambda}(B_\xi)\,\di\nu(\xi)$, i.e.,
$\log \vec{\lambda}(A) \prec \int_\Xi \log \vec{\lambda}(B_\xi)\,\di\nu(\xi)$;}\label{it:loglambda}
\item for every continuous  function $f:(0,\infty)\to[0,\infty)$ such that
$x \mapsto \log f(e^x)$ is convex on~$\bR$, and for every unitarily invariant norm $\unorm{\cdot}$,\label{it:fcond}
\begin{align}
\unorm{f(A)} \le \exp\int_\Xi \log \unorm{ f(B_\xi) }\, \di\nu(\xi) \, .\label{F-IInonweak} 
\end{align}
In this statement,  we extend $f$ to $[0,\infty)$ by continuity
and for any unitarily invariant norm $\unorm{\cdot}$ use the
convention  $\unorm{f(A)}=\infty$ when $f(0^+)=\infty$ and $A\in\cP$ is not invertible. 
\end{enumerate}
\end{theorem}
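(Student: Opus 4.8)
The plan is to follow the same four-step strategy already used for Theorem~\ref{th:mainweak}, but now pushing the weak-log-majorization arguments to full log-majorization. Concretely, I would split the proof into the two implications \eqref{it:loglambda}~$\implies$~\eqref{it:fcond} and \eqref{it:fcond}~$\implies$~\eqref{it:loglambda}, each time first treating the nondegenerate case $A,B_\xi\in\cPp$ with $B_\xi\geq\eps I$ and then passing to the general case in $\cP$ by the $\eps$--$\delta_\eps$ shift trick from~\eqref{eq:epsilondeltaapprox}, together with monotone convergence.

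For \eqref{it:loglambda}~$\implies$~\eqref{it:fcond}, I would argue exactly as in the proof of \eqref{it:loglambdaweak}~$\implies$~\eqref{it:nondecr}, except that $f$ is no longer assumed non-decreasing. The key point is that $\log\vec\lambda(A)\prec\int_\Xi\log\vec\lambda(B_\xi)\di\nu(\xi)$ is genuine (not weak) majorization, so instead of invoking Lemma~\ref{lm:convex} for the non-decreasing-convex case one invokes it for merely convex functions: since $x\mapsto\log f(e^x)$ is convex on $\bR$, one still gets
\begin{align}
\log f(e^{\log\vec\lambda(A)})\prec_w\log f\Bigl(\exp\int_\Xi\log\vec\lambda(B_\xi)\,\di\nu(\xi)\Bigr)\le\int_\Xi\log f(\vec\lambda(B_\xi))\,\di\nu(\xi).
\end{align}
The subtlety is that without monotonicity of $f$ the entries of $\log f(e^{\log\vec\lambda(A)})$ need no longer be in decreasing order, so $\vec\lambda(f(A))$ coincides with $f(\vec\lambda(A))$ only up to a permutation (the $\approx$ relation from the proof of Theorem~\ref{th:mainmajsimple}); but weak majorization is permutation-invariant, so $\vec\lambda(f(A))\prec_w\exp\int_\Xi\log f(\vec\lambda(B_\xi))\di\nu(\xi)$ still holds, and the rest of the argument (the Hölder–gauge estimate~\eqref{F-2} via discretization of the integral and Lemma~\ref{lm:hoelder}) goes through verbatim. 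Lemma~\ref{lm:main1} is what guarantees $f$ extends continuously to $[0,\infty)$ and that the convention $\unorm{f(A)}=\infty$ when $f(0^+)=\infty$ and $A$ singular is the natural one; in the approximation step one uses $A+\delta_\eps I\in\cPp$ so $\unorm{f(A+\delta_\eps I)}<\infty$ and then lets $\eps\searrow0$, with the convention making the limit statement correct even in the singular case.

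For \eqref{it:fcond}~$\implies$~\eqref{it:loglambda}, I would first run the argument from \eqref{it:nondecr}~$\implies$~\eqref{it:loglambdaweak} using $f(x)=x^p$, $p>0$, and the Ky Fan norms $\|\cdot\|_{(k)}$ to recover the weak log-majorization $\log\vec\lambda(A)\prec_w\int_\Xi\log\vec\lambda(B_\xi)\di\nu(\xi)$ (the power functions satisfy $x\mapsto\log(e^{px})=px$ convex, so they are admissible in~\eqref{it:fcond}). It then remains to upgrade $\prec_w$ to $\prec$, i.e.\ to prove the equality $\sum_{i=1}^d\log\lambda_i(A)=\int_\Xi\sum_{i=1}^d\log\lambda_i(B_\xi)\di\nu(\xi)$, equivalently $\det A=\exp\int_\Xi\log\det B_\xi\,\di\nu(\xi)$. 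I expect this to be the main obstacle. The natural move is to plug the reciprocal-type functions $f(x)=x^{-p}$, $p>0$, into~\eqref{it:fcond}: these satisfy $x\mapsto\log f(e^x)=-px$, which is (affine hence) convex, so they are admissible, and evaluating~\eqref{F-IInonweak} with a full-rank Ky Fan norm $\|\cdot\|_{(d)}$ (which on $\cPp$ is just $\tr$ of the $p$-th power up to reordering) gives $\tr A^{-p}\le\exp\int_\Xi\log\tr B_\xi^{-p}\,\di\nu(\xi)$. Taking $\tfrac1p\log(\tfrac1d\,\cdot)$ of both sides and letting $p\searrow0$ as in~\eqref{F-7}--\eqref{F-8} yields $-\tfrac1d\sum_i\log\lambda_i(A)\le-\tfrac1d\int_\Xi\sum_i\log\lambda_i(B_\xi)\di\nu(\xi)$, the reverse of the $k=d$ weak-log-majorization inequality; combining the two gives equality, hence~\eqref{it:loglambda}. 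The approximation from $\cPp$ to $\cP$ needs slight care here since $f(x)=x^{-p}$ has $f(0^+)=\infty$: one works throughout with $A+\delta_\eps I$ and $B_\xi+\eps I$, obtains the equality for the shifted operators in the limit $p\searrow0$, and then lets $\eps\searrow0$ using continuity of $\det$ — here no singular-case convention is needed for the conclusion since \eqref{it:loglambda} is a statement about $\vec\lambda(A)$ directly. I would double-check that $\sigma$-compactness of $\Xi$ and the uniform bound $\sup_\xi\|B_\xi\|<\infty$ (hence, after shifting, a uniform lower bound $B_\xi+\eps I\ge\eps I$) justify all interchanges of limit, integral and the monotone-convergence steps.
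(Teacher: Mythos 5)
The essential gap is in your approximation step for \eqref{it:loglambda}~$\implies$~\eqref{it:fcond}. You propose to pass from $\cPp$ to $\cP$ via the $\eps$--$\delta_\eps$ shift trick of~\eqref{eq:epsilondeltaapprox}, but that construction only yields the \emph{weak} log-majorization $\vec\lambda(A+\delta_\eps I)\prec_{w\log}\exp\int_\Xi\log\vec\lambda(B_\xi+\eps I)\,\di\nu(\xi)$ — there is no reason the products agree at $k=d$ once you shift both sides by different constants. This is fatal here, because in Theorem~\ref{th:main} the test function $f$ is no longer non-decreasing, so the first (nondegenerate) case needs to feed \emph{genuine} majorization $\log\vec\lambda(A)\prec\int_\Xi\log\vec\lambda(B_\xi)\,\di\nu(\xi)$ into the merely-convex part of Lemma~\ref{lm:convex}; weak majorization of the shifted operators is no longer good enough. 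The paper gets around this with a new, purely combinatorial tool, Lemma~\ref{lm:main2}: given $\a\prec_{\log}\b$ and a decreasing strictly positive approximation $\b^{(m)}\searrow\b$, one can construct strictly positive $\a^{(m)}\to\a$ with $\a\le\a^{(m)}\prec_{\log}\b^{(m)}$ (preserving the \emph{exact} determinant equality), then choose $A^{(m)}\in\cPp$ with $\vec\lambda(A^{(m)})=\a^{(m)}$ and apply the nondegenerate case to $A^{(m)}$ and $B_\xi+\eps_m I$. Without this lemma (or some replacement for it), your argument breaks exactly at the point where log-majorization must be preserved under the approximation.

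Two secondary points. First, once you have the inequality for $A^{(m)}$ and $B_\xi+\eps_m I$, passing to the limit when $f(0^+)=\infty$ is not a routine monotone-convergence step: the integrand is not monotone, and one has to produce explicit two-sided bounds $\alpha\le\unorm{f(B_\xi+\eps_m I)}_\Phi\le\unorm{f(B_\xi)}_\Phi+\beta$ (as in~\eqref{eq:upperlowerconstants}) to invoke the Lebesgue/Fatou machinery — you should address this. Second, in your sketch of \eqref{it:fcond}~$\implies$~\eqref{it:loglambda} you again invoke a shift, ``work throughout with $A+\delta_\eps I$ and $B_\xi+\eps I$,'' but condition~\eqref{it:fcond} is a hypothesis about $A$ and $B_\xi$ only and there is no way to deduce the analogous inequality for shifted operators. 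The paper instead reduces directly: assume $\int_\Xi\log\det B_\xi\,\di\nu(\xi)>-\infty$ (otherwise there is nothing to prove), so $B_\xi\in\cPp$ a.e., then \emph{rescales} $A,B_\xi\mapsto\alpha A,\alpha B_\xi$ (which \emph{is} compatible with~\eqref{it:fcond} by replacing $f$ with $f(\cdot/\alpha)$) to arrange $B_\xi\le I$, and finally uses the bound~\eqref{F-15} and dominated convergence as $p\searrow0$. Your choice of $f(x)=x^{-p}$ with the trace norm is exactly right; it is only the justification of the $\cP$-to-$\cPp$ reduction and the interchange of limit and integral that needs to be done as the paper does, not by shifting.
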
 
The proof requires a few auxiliary results. We first show that the right-hand side of~\eqref{F-IInonweak} is well-defined.

\begin{lemma}\label{lem:mainlemmamod}
Let $f:(0,\infty)\to[0,\infty)$ be a continuous function such that $x\mapsto \log f(e^x)$ is convex on $\bR$.
Then (with the extension and convention as stated in Theorem~\ref{th:main})
 $\int_\Xi\log\unorm{f(B_\xi)}\,\di\nu(\xi)$ exists in $[-\infty,\infty]$. \label{lm:main1(c)}
\end{lemma}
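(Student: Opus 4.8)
The plan is to reduce the claim to a separate analysis of the two ways $\unorm{f(B_\xi)}$ can blow up, namely when $f(0^+)=\infty$ meets a non-invertible $B_\xi$, and to otherwise establish local boundedness plus measurability so the integral is a legitimate element of $[-\infty,\infty]$. First I would invoke Lemma~\ref{lm:main1}: either $f\equiv 0$, in which case $\unorm{f(B_\xi)}=0$ and the integral is $-\infty$ (or one adopts the convention $\log 0=-\infty$ and the statement is trivial), or $f>0$ on $(0,\infty)$. In the latter case $\log f(e^x)$ is a finite convex function on $\bR$, hence continuous, so $f$ is continuous and strictly positive on $(0,\infty)$; moreover by Lemma~\ref{lm:main1}\,\eqref{lm:main1(b)} the limit $f(0^+)$ exists in $[0,\infty]$.

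Next I would split into the case $f(0^+)<\infty$ and $f(0^+)=\infty$. If $f(0^+)<\infty$, then $f$ extends to a continuous (in fact non-decreasing) function on $[0,\infty)$, so $\xi\mapsto f(B_\xi)$ is continuous and bounded: indeed $\|B_\xi\|\le M$ uniformly, so $\lambda_i(B_\xi)\in[0,M]$ and $f(\lambda_i(B_\xi))\in[0,f(M)]$, whence $\unorm{f(B_\xi)}\le \Phi(f(M),\dots,f(M))=:C<\infty$; continuity of $\xi\mapsto B_\xi$ together with continuity of $f$ and of $A\mapsto\unorm{A}$ makes $\xi\mapsto\log\unorm{f(B_\xi)}$ continuous on the $\sigma$-compact space $\Xi$, hence Borel measurable and bounded above by $\log C$. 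Since a Borel function on a probability space that is bounded above always has a well-defined integral in $[-\infty,\infty)$, we are done in this case.

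If $f(0^+)=\infty$, I would use monotonicity of $f$ on $(0,\infty)$ to write, for $\xi$ such that $B_\xi$ is invertible, $\unorm{f(B_\xi)}=\Phi(f(\vec\lambda(B_\xi)))$ with all arguments finite; and for $\xi$ with $B_\xi$ non-invertible, $\lambda_d(B_\xi)=0$ forces $f(\lambda_d(B_\xi))=f(0^+)=\infty$, so by the stated convention $\unorm{f(B_\xi)}=\infty$ and $\log\unorm{f(B_\xi)}=\infty$. The key point is then measurability: writing $\log\unorm{f(B_\xi)}$ as $\lim_{\eps\searrow 0}\log\unorm{f_\eps(B_\xi)}$ where $f_\eps(x):=\min\{f(x),f(\eps)\}$ (each $f_\eps$ bounded and continuous on $[0,\infty)$, and the $\log\unorm{f_\eps(B_\xi)}$ increasing in $1/\eps$), each term is continuous in $\xi$ by the previous paragraph, so the limit is Borel measurable with values in $[-\infty,\infty]$; an integral of a $[-\infty,\infty]$-valued Borel function is defined in $[-\infty,\infty]$ as soon as at least one of the positive/negative parts is integrable, and here the negative part is bounded (since $\unorm{f(B_\xi)}\ge \Phi(f(\lambda_1(B_\xi)),0,\dots,0)\ge$ a positive constant bounded below using $\lambda_1(B_\xi)\le M$ and, if needed, a further $\liminf$ lower bound), so the integral exists in $(-\infty,\infty]$.

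The main obstacle I expect is handling the case $f(0^+)=\infty$ cleanly: one must argue measurability of the $[-\infty,\infty]$-valued map $\xi\mapsto\log\unorm{f(B_\xi)}$ without assuming continuity (it genuinely can jump to $+\infty$ on the set where $B_\xi$ is non-invertible), and confirm that this map's negative part is integrable so that the integral is unambiguously an element of $[-\infty,\infty]$. The truncation trick $f_\eps=\min\{f,f(\eps)\}$ together with monotone convergence is the natural device here; the only subtlety is checking that $\unorm{f_\eps(B_\xi)}\nearrow\unorm{f(B_\xi)}$ pointwise, which follows from monotone convergence for the gauge function $\Phi$ applied entrywise and the order-preservation of $\Phi$.
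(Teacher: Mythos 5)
Your proposal follows the same two-case decomposition as the paper (split on whether $f(0^+)<\infty$ or $f(0^+)=\infty$) and reaches the same conclusion, so it is essentially correct; the main point of divergence is how the lower bound on $\unorm{f(B_\xi)}$ is obtained in the second case. The paper extracts an explicit polynomial lower bound directly from convexity: since $h(x):=\log f(e^x)$ is convex with $h(-\infty)=\infty$, it admits a supporting line $-ax+b$ with $a>0$, giving $f(t)\ge e^b t^{-a}$ on $(0,\infty)$, and together with $\lambda_1(B_\xi)\le M$ this immediately yields $\inf_\xi\unorm{f(B_\xi)}>0$. You instead argue via continuity, strict positivity (Lemma~\ref{lm:main1}\,\eqref{lm:main1(a)}), and $f(0^+)=\infty$ that $\inf_{(0,M]}f>0$; this is correct, but as written the justification is loose --- ``$\lambda_1(B_\xi)\le M$'' by itself does not give a lower bound on $f(\lambda_1(B_\xi))$ unless you also invoke that $f$ has a positive infimum on $(0,M]$, which in turn uses the blow-up at $0^+$ together with compactness of $[\delta,M]$; spell that out if you keep this route. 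You also devote explicit attention to measurability of the $[-\infty,\infty]$-valued map $\xi\mapsto\log\unorm{f(B_\xi)}$ via the truncations $f_\eps=\min\{f,f(\eps)\}$ and monotone convergence; the paper treats this as implicit, so your extra care here is a genuine (if modest) strengthening of the exposition rather than a correction of an error.
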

\begin{proof}
If $f(0^+)<\infty$, then by Lemma~\ref{lm:main1}~\eqref{lm:main1(b)} (and the uniform boundedness of $B_\xi$) we have
$\sup_\xi\unorm{f(B_\xi)}<\infty$, so the integral $\int_\Xi\log\unorm{f(B_\xi)}\,\di\nu(\xi)$ exists in $[-\infty,\infty)$.
If $f(0^+)=\infty$, then one can choose $a>0$ and $b\in\bR$ such that $\log f(e^x)\ge-ax+b$ on~$\bR$ by the convexity assumption.  This in turn
implies that $f(x)\ge e^bx^{-a}$ on $(0,\infty)$. Hence we have $\inf_\xi\unorm{f(B_\xi)}>0$, so the integral exists
in $(-\infty,\infty]$.
\end{proof}

Before proving the theorem we give another lemma.
\begin{lemma}\label{lm:main2}
Let $\a,\b\in\bR_+^d$ be such that $a_1\ge\dots\ge a_d$ and $b_1\ge\dots\ge b_d$, and assume that
$\a\prec_{\log}\b$. Furthermore, let $\b^{(m)}\in\bR_+^d$, $m\in\bN$, be such that
$b^{(m)}_1\ge\dots\ge b^{(m)}_d>0$ and $\b^{(m)}\searrow\b$ as $m\to\infty$. Then there exist $m_0\in\bN$
and $\a^{(m)}\in\bR_+^d$ for $m\ge m_0$ such that $a^{(m)}_1\ge\dots\ge a^{(m)}_d>0$, $\a^{(m)}\to\a$ as
$m\to\infty$, and
$$
\a\le\a^{(m)}\prec_{\log}\b^{(m)},\qquad m\ge m_0 \,.
$$

\end{lemma}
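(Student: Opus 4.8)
The plan is to build $\a^{(m)}$ explicitly: freeze the ``large'' coordinates of $\a$ and replace the whole bottom block of coordinates equal to $a_d$ by a single positive value chosen to enforce the equality constraint of log‑majorization against $\b^{(m)}$. Set $v:=a_d\ge 0$ and $j:=\min\{i\in[d]:a_i=a_d\}$, so that $a_1\ge\dots\ge a_{j-1}>v=a_j=\dots=a_d$ (with empty head when $j=1$), and define
\[
  w_m:=\Bigl(\frac{\prod_{i=1}^{d}b^{(m)}_i}{\prod_{i=1}^{j-1}a_i}\Bigr)^{1/(d-j+1)},
  \qquad
  a^{(m)}_i:=
  \begin{cases}
    a_i,& 1\le i<j,\\[2pt]
    w_m,& j\le i\le d.
  \end{cases}
\]
Here $\prod_{i=1}^{j-1}a_i$ is an empty product ($=1$) or has all factors strictly above $v\ge 0$, hence positive, and $\prod_{i=1}^{d}b^{(m)}_i>0$ since every $b^{(m)}_i>0$; so $w_m>0$ is well defined. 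I should point out that this single recipe covers uniformly the case $a_d>0$ (then $v>0$ and $\a,\b$ are both strictly positive, with $\prod_i a_i=\prod_i b_i>0$) and the case $a_d=0$ (then $v=0$, $j-1$ is the number of positive entries of $\a$, $w_m\to 0$, including the degenerate $\a=0$ with $j=1$).

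The routine part is to verify everything except the partial‑product inequalities. From $\prod_{i=1}^d a_i=\prod_{i=1}^d b_i$ (equality in $\a\prec_{\log}\b$) and $\b^{(m)}\ge\b$ one gets $w_m^{\,d-j+1}\ge\prod_{i=1}^d b_i/\prod_{i=1}^{j-1}a_i=v^{\,d-j+1}$, hence $w_m\ge v$ and so $\a\le\a^{(m)}$; and $\b^{(m)}\to\b$ forces $w_m^{\,d-j+1}\to v^{\,d-j+1}$ (both sides vanishing when $v=0$), hence $\a^{(m)}\to\a$. The ordering $a^{(m)}_1\ge\dots\ge a^{(m)}_{j-1}=a_{j-1}>v$ and $a^{(m)}_j=\dots=a^{(m)}_d=w_m>0$ holds once $w_m\le a_{j-1}$, which is true for all $m\ge m_0$ for a suitable $m_0$ because $w_m\to v<a_{j-1}$ (take $m_0=1$ when $j=1$). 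Finally $\prod_{i=1}^d a^{(m)}_i=\prod_{i=1}^{j-1}a_i\cdot w_m^{\,d-j+1}=\prod_{i=1}^d b^{(m)}_i$ by the choice of $w_m$, which is exactly the equality case of $\a^{(m)}\prec_{\log}\b^{(m)}$.

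It then remains to check $\prod_{i=1}^k a^{(m)}_i\le\prod_{i=1}^k b^{(m)}_i$ for $1\le k<d$. For $k<j$ this is immediate: $\prod_{i=1}^k a^{(m)}_i=\prod_{i=1}^k a_i\le\prod_{i=1}^k b_i\le\prod_{i=1}^k b^{(m)}_i$ using $\a\prec_{\log}\b$ and $\b^{(m)}\ge\b$. For $j\le k<d$ I would pass to logarithms and set $G(k):=\log\bigl(\prod_{i=1}^k b^{(m)}_i/\prod_{i=1}^{j-1}a_i\bigr)$ on $\{j-1,j,\dots,d\}$. Since $G(k)-G(k-1)=\log b^{(m)}_k$ is non‑increasing in $k$ (because $b^{(m)}_1\ge\dots\ge b^{(m)}_d$), $G$ is concave on that set; moreover $G(j-1)=\log\bigl(\prod_{i=1}^{j-1}b^{(m)}_i/\prod_{i=1}^{j-1}a_i\bigr)\ge 0$ (again from $\a\prec_{\log}\b$ and $\b^{(m)}\ge\b$), while $G(d)=(d-j+1)\log w_m$ by the definition of $w_m$. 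The inequality to be proved for the index $k$ is precisely $G(k)\ge\frac{k-j+1}{d-j+1}G(d)$, i.e.\ $h(k)\ge 0$ for $h(k):=G(k)-\frac{k-j+1}{d-j+1}G(d)$; but $h$ is concave on $\{j-1,\dots,d\}$ with $h(j-1)=G(j-1)\ge 0$ and $h(d)=0$, so $h\ge 0$ on the whole interval by concavity (a concave lattice function lies above the chord joining its endpoint values).

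The genuine idea — and the step I expect to be the main obstacle — is recognizing that this concavity of the logarithms of the \emph{leading} partial products of the decreasing vector $\b^{(m)}$ is exactly what reconciles the ``block‑scaling'' construction with all of the partial‑product constraints simultaneously; finding the right $w_m$ and the right block to scale is what makes the bookkeeping above go through, and once it is in place the rest is elementary.
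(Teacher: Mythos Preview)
Your argument is correct and takes a genuinely different route from the paper's. The paper splits into two cases. When $a_d>0$ it does not construct $\a^{(m)}$ explicitly but invokes an external majorization result (\cite[Prop.~4.1.3]{hiai10b}) to produce $\c^{(m)}$ with $\log\a\le\c^{(m)}\prec\log\b^{(m)}$, and then sets $\a^{(m)}:=\exp\c^{(m)}$. When $a_d=0$ it gives an explicit but more elaborate construction depending on \emph{both} the number $r$ of positive entries of~$\a$ and the number $s$ of positive entries of~$\b$: positions $r+1,\dots,s+1$ are filled with a common value~$\alpha^{(m)}$, while positions $s+2,\dots,d$ receive the tail $b^{(m)}_{s+2},\dots,b^{(m)}_d$, and the partial-product inequalities are checked index by index. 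Your single recipe (freeze $a_1,\dots,a_{j-1}$, set the rest equal to~$w_m$) handles both cases uniformly, is self-contained, and your concavity argument for the partial products---that $k\mapsto\sum_{i\le k}\log b^{(m)}_i$ is concave because $\log b^{(m)}_k$ is non-increasing---is more conceptual than the paper's case-by-case check. The paper's first case has the minor advantage of connecting to a standard majorization lemma, but your approach is more elementary and arguably cleaner overall.
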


\begin{proof}
Assume that $\a\prec_{\log}\b$. The proof is divided into two cases. First, assume that $a_d>0$ (hence $b_d>0$
as well). For each $m\in\bN$, since $\log\a,\log\b\in\bR^d$ and $\log\a\prec\log\b\le\log\b^{(m)}$ so that
$\log\a\prec_w\log\b^{(m)}$, it follows from \cite[Proposition 4.1.3]{hiai10b} that there exists a $\c^{(m)}\in\bR^d$
such that
\begin{align}
\log\a\le\c^{(m)}\prec\log\b^{(m)} \,.
\end{align}
Now define $\a^{(m)}:=\exp\c^{(m)}$; then $\a\le\a^{(m)}\prec_{\log}\b^{(m)}$.
It remains to prove that $\a^{(m)}\to\a$. For this, note that
\begin{align}
\sum_{i=1}^d\bigl(c_i^{(m)}-\log a_i\bigr)=\sum_{i=1}^d\bigl(\log b_i^{(m)}-\log a_i\bigr)
\longrightarrow\sum_{i=1}^d(\log b_i-\log a_i)=0
\end{align}
as $m\to\infty$. Therefore, we have $c_i^{(m)}\to\log a_i$ so that $a_i^{(m)}\to a_i$ for all $i\in[d]$.

Secondly, assume that $a_d=0$ (hence $b_d=0$ as well). Assume that
\begin{align*}
a_1\ge\dots\ge a_r&>0=a_{r+1}=\dots=a_d \,, \\
b_1\ge\dots\ge b_s&>0=b_{s+1}=\dots=b_d \,.
\end{align*}
Since $0<\prod_{i=1}^r a_i\le\prod_{i=1}^r b_i$, we have $r\le s$. For each $m\in\bN$ define
\begin{align}
\alpha^{(m)}:=\Biggl({\frac{\prod_{i=1}^{s+1}b_i^{(m)}}{\prod_{i=1}^ra_i}}\Biggr)^{\frac{1}{s-r+1}}>0 \,.
\end{align}
Since $b_{s+1}^{(m)}\to b_{s+1}=0$ and so $\alpha^{(m)}\to0$ as $m\to\infty$, choose an $m_0$ such that
$\alpha^{(m)}\le\min\{a_r,b_s\}$ for all $m\ge m_0$. Define for $m\ge m_0$,
\begin{align}
\a^{(m)}:=\Bigl(a_1,\dots,a_r,\underbrace{\alpha^{(m)},\dots,\alpha^{(m)}}_{s-r+1},
b_{s+2}^{(m)},\dots,b_d^{(m)}\Bigr) \,.
\end{align}
Since
\begin{align}
\alpha^{(m)}={\frac{\prod_{i=1}^{s+1}b_i^{(m)}}{\prod_{i=1}^ra_i\cdot\bigl(\alpha^{(m)}\bigr)^{s-r}}}
\ge{\frac{\prod_{i=1}^rb_i}{\prod_{i=1}^ra_i}}\biggl({\frac{b_s}{\alpha^{(m)}}}\biggr)^{s-r}b_{s+1}^{(m)}
\ge b_{s+1}^{(m)} \,,
\end{align}
we find that $\a^{(m)}$ is in decreasing order. We furthermore have $\a^{(m)}\to\a$ and
\begin{align}
\prod_{i=1}^ra_i\cdot(\alpha^{(m)})^k\le\prod_{i=1}^rb_i\cdot b_s^k\le\prod_{i=1}^{r+k}b_i^{(m)} \,,
\qquad1\le k\le s-r,
\end{align}
\begin{align}
\prod_{i=1}^ra_i\cdot(\alpha^{(m)})^{s-r+1}=\prod_{i=1}^{s+1}b_i^{(m)} \,,
\end{align}
so that $\a\le\a^{(m)}\prec_{\log}\b^{(m)}$ follows.
\end{proof}


\begin{proof}[Proof of  \eqref{it:loglambda}~$\implies$~\eqref{it:fcond}]
First, assume that $A,B_\xi\in\cPp$ and $B_\xi\ge\eps I$ for all $\xi\in\Xi$ with some $\eps>0$. Since $\vec{\lambda}(f(A))\approx f(\vec{\lambda}(A))$, the corresponding part of the proof of Theorem~\ref{th:mainweak}
can be adopted with the slight modification that
\begin{align}
\log \vec{\lambda}(f(A))&\approx\log f(e^{\log \vec{\lambda}(A)})
\end{align}
instead of~\eqref{eq:equalityordering} because the assumption that $f$ is non-decreasing has been dropped.

Next, consider the general case where $A,B_\xi\in\cP$. With $0<\eps_m\searrow0$, we have 
\begin{align}
\prod_{i=1}^k\lambda_i(A)&\le\prod_{i=1}^k\exp\int_\Xi\log\lambda_i(B_\xi)\,\di\nu(\xi) \\
&<\prod_{i=1}^k\exp\int_\Xi\log(\lambda_i(B_\xi)+\eps_m)\,\di\nu(\xi) \,,\qquad k \in [d]\ .
\end{align}
Since
$\int_\Xi\log(\vec{\lambda}(B_\xi)+\eps_m)\,\di\nu(\xi)\searrow\int_\Xi\log\vec{\lambda}(B_\xi)\,\di\nu(\xi)$ as
$m\to\infty$ by the monotone convergence theorem, by Lemma \ref{lm:main2} one can choose $\a^{(m)}$,
$m\ge m_0$, such that $a_1^{(m)}\ge\dots\ge a_d^{(m)}>0$, $\a^{(m)}\to\vec{\lambda}(A)$ and
\begin{align}
\a^{(m)}\prec_{\log}\exp\int_\Xi\log(\vec{\lambda}(B_\xi+\eps_mI))\,\di\nu(\xi) \,.
\end{align}
Choosing $A^{(m)}\in\cPp$ with $\vec{\lambda}(A^{(m)})=\a^{(m)}$ and applying the first case to $A^{(m)}$ and $B_\xi+\eps_mI$, we have
\begin{equation}\label{F-3x}
\unorm{f(A^{(m)})}_{\Phi} \le \exp\int_\Xi\log \unorm{f(B_\xi+\eps_mI)}_{\Phi} \ \di\nu(\xi)\,,\qquad m\ge m_0. 
\end{equation}
When $f(0^+)<\infty$ and hence $f$ is non-decreasing on $(0,\infty)$ by Lemma
\ref{lm:main1}\,\eqref{lm:main1(b)}, note that 
\begin{align}
\unorm{f(A^{(m)})}_{\Phi}=\Phi(f(\a^{(m)}))\longrightarrow\Phi(f(\vec{\lambda}(A)))=\unorm{f(A)}_{\Phi}
\end{align}
and similarly $\unorm{f(B_\xi+\eps_mI)}_{\Phi} \to \unorm{f(B_\xi)}_{\Phi}$ for every $\xi\in\Xi$ as $m\to\infty$.
Since $\xi\mapsto\unorm{f(B_\xi+\eps_mI)}_{\Phi}$ is uniformly bounded above (so
$-\log\unorm{f(B_\xi+\eps_mI)}_{\Phi}$ is uniformly bounded below), Fatou's lemma yields
\begin{align}
\limsup_{m\to\infty}\int_\Xi\log \unorm{f(B_\xi+\eps_mI)}_{\Phi} \,\di\nu(\xi)\le
\int_\Xi\log \unorm{f(B_\xi)}_{\Phi} \ \di\nu(\xi) \,,
\end{align}
and therefore, letting $m\rightarrow\infty$ in \eqref{F-3x} gives inequality \eqref{F-IInonweak}. Finally, when $f(0^+)=\infty$,
we may assume that $\int_\Xi\log\unorm{f(B_\xi)}_\Phi\,\di\nu(\xi)<\infty$. In this case, $f$ is decreasing on
$(0,\delta)$ for some $\delta>0$. 
We will argue below that  there are constants $\alpha,\beta>0$ such that
\begin{align}
\alpha\le\unorm{f(B_\xi+\eps_mI)}_\Phi\le\unorm{f(B_\xi)}_{\Phi}+\beta\label{eq:upperlowerconstants}
\end{align}
for all $\xi\in\Xi$ and $m\ge m_0$. 
Since $\int_\Xi\log(\unorm{f(B_\xi)}_{\Phi}+\beta)\,\di\nu(\xi)<\infty$, the
Lebesgue convergence theorem can be used to get~\eqref{F-IInonweak} by taking the limit of~\eqref{F-3x}.

It remains to show~\eqref{eq:upperlowerconstants}. By the uniform boundedness of the operators~$B_\xi$, there is a constant~$\gamma>0$ such that 
\begin{align}
0<B_\xi+\eps_m I\leq \gamma I\, ,\qquad\xi\in\Xi\,, m\geq m_0\ .\label{eq:upperlowerboundbxi}
\end{align} Because $f$ is decreasing on $(0,\delta)$ and $f(x)>0$ for all $x>0$ (see Lemma~\ref{lm:main1}~\eqref{lm:main1(a)}), this implies that $\xi \mapsto \unorm{f(B_\xi+\eps_mI)}_\Phi$ is uniformly bounded from below, as claimed in~\eqref{eq:upperlowerconstants}.  Observe that the upper bound in~\eqref{eq:upperlowerconstants} is trivial for 
$B_\xi\in\cP\backslash \cPp$ since $\unorm{f(B_\xi)}_{\Phi}=\infty$ when $B_\xi$ is not invertible. Hence assume that $B_\xi\in\cPp$. Using the spectral decomposition $B_\xi=\sum_{\lambda\in\mathsf{spec}(B_\xi)} \lambda P_\lambda$, where $\mathsf{spec}(B_\xi)$ is the set of  eigenvalues of $B_\xi$, we then have 
\begin{align} 
f(B_\xi+\eps_m I)&=\sum_{\substack{\lambda\in\mathsf{spec}(B_\xi)\\
\lambda+\eps_m<\delta}}f(\lambda+\eps_m) P_\lambda+\sum_{\substack{\lambda\in\mathsf{spec}(B_\xi)\\
\lambda+\eps_m\geq \delta}}f(\lambda+\eps_m) P_\lambda\\
&\leq\sum_{\substack{\lambda\in\mathsf{spec}(B_\xi)\\
\lambda+\eps_m<\delta}}f(\lambda) P_\lambda+\sum_{\substack{\lambda\in\mathsf{spec}(B_\xi)\\
\lambda+\eps_m\geq \delta}}f(\lambda+\eps_m) P_\lambda\\
&\le f(B_\xi)+\sum_{\substack{\lambda\in\mathsf{spec}(B_\xi)\\
\lambda+\eps_m\geq \delta}}f(\lambda+\eps_m) P_\lambda
\end{align}
The claim then follows by the triangle inequality for $\unorm{\cdot}_\Phi$ and the fact that $f(\lambda+\eps_m)\leq \sup_{\delta\leq x\leq \gamma}f(x)<\infty$ for all $\lambda\in\mathsf{spec}(B_\xi)$ with $\lambda+\eps_m\geq \delta$ and for all $\xi\in\Xi$. The last fact is immediately seen from~\eqref{eq:upperlowerboundbxi} and the continuity of $f$.  

\end{proof}


\begin{proof}[Proof of \eqref{it:fcond}~$\implies$~\eqref{it:loglambda}]
The weak majorization relation
\begin{align}
\prod_{i=1}^k\lambda_i(A)\le\prod_{i=1}^k\exp\int_\Xi\log\lambda_i(B_\xi)\,\di\nu(\xi)\, \ .\label{F-4s}
\end{align}
is obvious from \eqref{it:nondecr}~$\implies$~\eqref{it:loglambdaweak}
in Theorem~\ref{th:mainweak} since condition~\eqref{it:fcond} is stronger than \eqref{it:nondecr}.
It remains to prove that equality holds in \eqref{F-4s} when $k=d$. It suffices to prove that
\begin{align}\label{F-14}
\log(\det A)\ge\int_\Xi\log\bigl(\det B_\xi\bigr)\,\di\nu(\xi) \,.
\end{align}
For this, we may assume that $\int_\Xi\log\bigl(\det B_\xi\bigr)\,\di\nu(\xi)> -\infty$ and so $B_\xi\in\cPp$
for $\nu$-a.e.\ $\xi\in\Xi$. So we may assume that $B_\xi\in\cPp$ for all $\xi\in\Xi$. Moreover, replacing $A$,
$B_\xi$ with $\alpha A$, $\alpha B_\xi$ for some $\alpha>0$, we may assume that $B_\xi\le I$ and so
$\lambda_i(B_\xi)\le1$ for all $\xi\in\Xi$ and $i\in[d]$. For every $p>0$, since
\begin{align}
{\frac{1}{d}}\big\|B_\xi^{-p}\big\|_1\le\lambda_d(B_\xi)^{-p}\le\bigl(\det B_\xi)^{-p} \,,
\end{align}
we find that
\begin{align}\label{F-15}
{\frac{1}{p}}\log\biggl({\frac{1}{d}}\big\|B_\xi^{-p}\big\|_1\biggr)\le-\log\bigl(\det B_\xi\bigr) \,.
\end{align}
Applying inequality \eqref{F-IInonweak} to
$\unorm{\cdot}=\|\cdot\|_1$ and $f(x)=x^{-p}$ for any $p>0$ we get
\begin{align}
\log\|A^{-p}\|_1\le\int_\Xi\log\big\|B_\xi^{-p}\big\|_1\,\di\nu(\xi) \, .
\end{align}
This means that
\begin{align}
{\frac{1}{p}}\log\biggl({\frac{1}{d}}\|A^{-p}\|_1\biggr)\le
\int_\Xi{\frac{1}{p}}\log\biggl({\frac{1}{d}}\big\|B_\xi^{-p}\big\|_1\biggr)\,\di\nu(\xi) \,. \label{F-9}
\end{align}
Similarly to \eqref{F-7} and \eqref{F-8} we find that, as $p\searrow0$,
\begin{align}\label{F-12}
{\frac{1}{p}}\log\biggl({\frac{1}{d}}\|A^{-p}\|_1\biggr)&\searrow-{\frac{1}{d}}\log(\det A) \,, \\
{\frac{1}{p}}\log\biggl({\frac{1}{d}}\big\|B_\xi^{-p}\big\|_1\biggr)&\searrow-{\frac{1}{d}}\log\bigl(\det B_\xi\bigr)
\end{align}
for every $\xi\in\Xi$. Thanks to \eqref{F-15} the Lebesgue convergence theorem yields
\begin{align}
\lim_{p\searrow0}\int_\Xi{\frac{1}{p}}\log\biggl({\frac{1}{d}}\big\|B_\xi^{-p}\big\|_1\biggr)\,\di\nu(\xi)
=-{\frac{1}{d}}\int_\Xi\log\bigl(\det B_\xi\bigr)\,\di\nu(\xi) \,.
\end{align}
Therefore, letting $p\searrow0$ in \eqref{F-9} gives \eqref{F-14}, as desired.
\end{proof}

When $\Xi$ is a one-point set, Proposition~\ref{pr:charweaklog} and Theorem~\ref{th:main} with the above proof of the implication \eqref{it:fcond}~$\implies$~\eqref{it:loglambda} yield a new characterization of log-majorization.
\begin{proposition}
\label{pr:charlog}
For $A,B\in\cP$, 
$\vec{\lambda}(A) \prec_{\log} \vec{\lambda}(B)$ if and only if
\begin{align}
\|A^p\|_{(k)}\le\|B^p\|_{(k)},\qquad p \in \mathbb{R} \setminus \{0\} ,\ \ k \in [d] .
\end{align}
\end{proposition}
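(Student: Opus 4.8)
The plan is to obtain both directions by specializing the results already established, taking $\Xi$ to be a single point so that every integral average degenerates; then condition~\eqref{it:loglambda} of Theorem~\ref{th:main} becomes $\vec{\lambda}(A)\prec_{\log}\vec{\lambda}(B)$, and the right-hand side of~\eqref{F-IInonweak} becomes $\unorm{f(B)}$.

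For the forward implication, assume $\vec{\lambda}(A)\prec_{\log}\vec{\lambda}(B)$. For each $p\in\bR\setminus\{0\}$ the function $f(x):=x^p$ is continuous on $(0,\infty)$ and $x\mapsto\log f(e^x)=px$ is affine, hence convex, so the implication \eqref{it:loglambda}~$\implies$~\eqref{it:fcond} of Theorem~\ref{th:main}, applied with the unitarily invariant norm $\unorm{\cdot}=\|\cdot\|_{(k)}$, yields $\|A^{p}\|_{(k)}\le\|B^{p}\|_{(k)}$ for every $k\in[d]$. (When $p<0$ and $A$ is not invertible, log-majorization forces $\det B=\det A=0$, so $B$ is not invertible either and both sides equal $+\infty$ in agreement with the convention of Theorem~\ref{th:main}.)

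For the converse, suppose $\|A^{p}\|_{(k)}\le\|B^{p}\|_{(k)}$ for all $p\in\bR\setminus\{0\}$ and $k\in[d]$. Using only the inequalities with $p>0$, Proposition~\ref{pr:charweaklog} already gives $\vec{\lambda}(A)\prec_{w\log}\vec{\lambda}(B)$, i.e.\ $\prod_{i=1}^{k}\lambda_i(A)\le\prod_{i=1}^{k}\lambda_i(B)$ for all $k\in[d]$; what is missing is equality at $k=d$, equivalently $\det A=\det B$. If $B$ is not invertible this is immediate since then $\det A\le\det B=0$. If $B$ is invertible, I would reproduce the determinant step of the proof of \eqref{it:fcond}~$\implies$~\eqref{it:loglambda} of Theorem~\ref{th:main} for a one-point $\Xi$: the hypothesis with the trace norm $\|\cdot\|_{(d)}=\|\cdot\|_1$ and $f(x)=x^{-p}$ reads $\|A^{-p}\|_1\le\|B^{-p}\|_1$ for every $p>0$, and finiteness of the right-hand side forces $A$ to be invertible as well; dividing by $d$ inside the logarithm and letting $p\searrow0$, the monotone limit $\frac1p\log\bigl(\frac1d\sum_{i=1}^d a_i^{p}\bigr)\searrow\frac1d\sum_{i=1}^d\log a_i$ (secant-slope monotonicity of the convex map $p\mapsto\log(\frac1d\sum_i a_i^{p})$, which vanishes at $p=0$), taken with $a_i=\lambda_i(A)^{-1}$ and with $a_i=\lambda_i(B)^{-1}$, gives $-\tfrac1d\log\det A\le-\tfrac1d\log\det B$. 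Together with $\det A\le\det B$ from the weak log-majorization this yields $\det A=\det B$, hence $\vec{\lambda}(A)\prec_{\log}\vec{\lambda}(B)$.

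The only delicate point is this last step of the converse — the passage $p\searrow0$ in $\|A^{-p}\|_1\le\|B^{-p}\|_1$ combined with the invertibility case distinction — which is exactly the single-point specialization of the determinant argument in Theorem~\ref{th:main}. Because $\Xi$ is a single point, the monotone limits suffice and no dominated-convergence machinery (nor the rescaling $B\le I$ used in the general proof) is needed.
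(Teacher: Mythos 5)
Your proposal is correct and follows exactly the route the paper intends: specialize $\Xi$ to a one-point set, read off the forward implication from \eqref{it:loglambda}~$\implies$~\eqref{it:fcond} of Theorem~\ref{th:main} with $f(x)=x^p$ and Ky Fan norms, and get the converse by combining Proposition~\ref{pr:charweaklog} (for the weak log-majorization) with the $k=d$ determinant step from the proof of \eqref{it:fcond}~$\implies$~\eqref{it:loglambda}, now simplified to a single secant-slope limit since no integral average is present. The only thing the paper leaves implicit which you correctly make explicit is that the hypothesis for $p<0$ — interpreted with the convention $\unorm{f(A)}=\infty$ when $f(0^+)=\infty$ and $A$ is singular — forces $A$ to be invertible whenever $B$ is, and that the singular-$B$ case is disposed of directly via the weak log-majorization.
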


It is natural to wonder whether the generalized
log-majorization condition~\eqref{it:firstconditionchain}
of Proposition~\ref{lm:chainofimplications}  is equivalent to
a third condition analogous to~\eqref{it:nondecr} of Theorem~\ref{th:mainweak}. The following theorem shows that this is the case under one additional technical assumption. In the statement,  we use
 the same convention for $\unorm{g(A)}$ as introduced in Theorem~\ref{th:main}.
\begin{theorem}\label{th:mainadditional}
With $\Xi$, $\nu$, $A$ and $B_{\xi}$ {given} as above, consider the
additional statement
\begin{enumerate}[(I)]\setcounter{enumi}{2}
\item for every continuous function $g:(0,\infty)\to[0,\infty)$ such that
$x \mapsto g(e^x)$ is convex on $\bR$, and for every unitarily invariant norm $\unorm{\cdot}$,\label{it:gcond} 
\begin{align}
\unorm{g(A)} \le\int_\Xi \unorm{g(B_\xi)}\, \di\nu(\xi)\,  \,.\label{F-IIIineq}
\end{align}\label{F-III} 
\end{enumerate}
Then  
\eqref{it:loglambda}~$\implies$~\eqref{it:gcond}, and
if $\int_\Xi\big\|B_\xi^{-p}\big\|_1\,d\nu(\xi)<\infty$ for some $p>0$, then
\eqref{it:gcond}~$\implies$~\eqref{it:loglambda}.
\end{theorem}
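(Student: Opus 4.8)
The plan is to reduce both implications to results already proved, mirroring the structure of the proof of Theorem~\ref{th:mainweak} but using the \emph{majorization} results (Theorem~\ref{th:mainmajsimple}) and the \emph{two-sided} approximation Lemma~\ref{lm:main2} in place of their one-sided counterparts. The key observation connecting \eqref{it:loglambda} to \eqref{it:gcond} is that a function $g$ with $x\mapsto g(e^x)$ convex on $\mathbb{R}$ corresponds, via the substitution $f(x):=g(e^x)$, to a convex function on $\mathbb{R}$, and that \eqref{it:loglambda} says precisely $\vec{\lambda}(\log A)\prec\int_\Xi\vec{\lambda}(\log B_\xi)\,\di\nu(\xi)$ (for invertible operators). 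So for \eqref{it:loglambda}~$\implies$~\eqref{it:gcond} one first treats the invertible case with $B_\xi\ge\eps I$: here $\log A$ and $\log B_\xi$ are genuine self-adjoint operators, and applying \eqref{it:strconvmaj}~$\implies$~\eqref{it:strfunitaryinvmaj} of Theorem~\ref{th:mainmajsimple} to $\log A$, $\log B_\xi$ and the convex function $f(x)=g(e^x)$ yields~\eqref{F-IIIineq} immediately, using $g(A)=f(\log A)$ and $g(B_\xi)=f(\log B_\xi)$. The general case $A,B_\xi\in\cP$ follows by the now-familiar $\eps/\delta_\eps$ approximation from~\eqref{eq:epsilondeltaapprox} combined with Lemma~\ref{lm:main2} to produce $A^{(m)}\in\cPp$ with $\vec{\lambda}(A^{(m)})\searrow\vec{\lambda}(A)$ and $\a^{(m)}\prec_{\log}\exp\int_\Xi\log\vec{\lambda}(B_\xi+\eps_mI)\,\di\nu(\xi)$, then passing to the limit using monotone or Fatou/Lebesgue convergence, exactly as in the proof of \eqref{it:loglambda}~$\implies$~\eqref{it:fcond}. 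The case $g(0^+)=\infty$ again requires the uniform two-sided bound~\eqref{eq:upperlowerconstants}, whose proof carries over verbatim since it only used monotonicity of $f$ near $0$ and continuity on a compact interval.

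For the converse \eqref{it:gcond}~$\implies$~\eqref{it:loglambda} under the hypothesis $\int_\Xi\|B_\xi^{-p}\|_1\,\di\nu(\xi)<\infty$ for some $p>0$: first, since condition~\eqref{it:gcond} restricted to non-decreasing $g$ is exactly condition~\eqref{it:gnondecrcond} of Theorem~\ref{th:mainweak}, we already get the weak log-majorization $\vec{\lambda}(A)\prec_{w\log}\exp\int_\Xi\log\vec{\lambda}(B_\xi)\,\di\nu(\xi)$, i.e.\ the inequalities~\eqref{F-4s}. It remains to prove equality for $k=d$, equivalently~\eqref{F-14}: $\log\det A\ge\int_\Xi\log\det B_\xi\,\di\nu(\xi)$. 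The natural choice is $g(x)=x^{-p}$, which is convex with $g(e^x)=e^{-px}$ convex on $\mathbb{R}$, so \eqref{F-IIIineq} with $\unorm{\cdot}=\|\cdot\|_1$ gives $\|A^{-p}\|_1\le\int_\Xi\|B_\xi^{-p}\|_1\,\di\nu(\xi)<\infty$; in particular $A\in\cPp$. Then one argues as in the proof of \eqref{it:fcond}~$\implies$~\eqref{it:loglambda}: write $\frac1p\log(\frac1d\|A^{-p}\|_1)\le\frac1p\log\int_\Xi\frac1d\|B_\xi^{-p}\|_1\,\di\nu(\xi)$ (note this is \emph{not} an integral of logarithms, so a small extra step is needed — see below), use that $\frac1p\log(\frac1d\|A^{-p}\|_1)\searrow-\frac1d\log\det A$ as $p\searrow0$ by convexity of $p\mapsto\log(\frac1d\sum_i a_i^p)$, and handle the right-hand side.

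The step I expect to be the main obstacle is precisely the passage to the limit $p\searrow0$ on the right-hand side in the converse direction. Unlike in the proof of \eqref{it:fcond}~$\implies$~\eqref{it:loglambda}, condition~\eqref{it:gcond} gives $\|A^{-p}\|_1\le\int_\Xi\|B_\xi^{-p}\|_1\,\di\nu(\xi)$, an inequality with the integral \emph{outside} the norm rather than $\log\|A^{-p}\|_1\le\int_\Xi\log\|B_\xi^{-p}\|_1\,\di\nu(\xi)$. To extract $\log\det$ information one must still divide by $p$, take $\log$, and let $p\searrow0$; the cleanest route is to combine the monotone pointwise limits~\eqref{F-15}--type bounds $\frac1p\log(\frac1d\|B_\xi^{-p}\|_1)\searrow-\frac1d\log\det B_\xi$ with the hypothesis $\int_\Xi\|B_{\xi_0}^{-p_0}\|_1\,\di\nu<\infty$ (for the fixed $p_0$) to get an integrable dominating bound via $\|B_\xi^{-p}\|_1\le d^{1-p/p_0}\|B_\xi^{-p_0}\|_1^{p/p_0}$ for $0<p\le p_0$ (Hölder/power-mean on eigenvalues of $B_\xi^{-1}$), so that $\frac1p\log(\frac1d\|B_\xi^{-p}\|_1)$ is bounded above by an integrable function for small $p$; then apply Theorem~\ref{th:main}'s \eqref{it:fcond} with $f(x)=x^{-p}$ and $\unorm{\cdot}=\|\cdot\|_1$? — no: we are in the converse, so instead we invoke Jensen ($\log$ is concave, so $\log\int\le$ does not help) and must argue directly that $\limsup_{p\searrow0}\frac1p\log\bigl(\frac1d\int_\Xi\|B_\xi^{-p}\|_1\,\di\nu(\xi)\bigr)\le-\frac1d\int_\Xi\log\det B_\xi\,\di\nu(\xi)$ using the dominating function and the reverse Fatou lemma. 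Making this last estimate rigorous — in particular verifying the dominated-convergence hypothesis is where the assumption $\int_\Xi\|B_\xi^{-p}\|_1\,\di\nu(\xi)<\infty$ is genuinely needed and cannot be removed — is the technical heart of the argument.
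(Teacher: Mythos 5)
Your plan for \eqref{it:loglambda}~$\implies$~\eqref{it:gcond} is essentially the paper's proof: apply \eqref{it:strconvmaj}~$\implies$~\eqref{it:strfunitaryinvmaj} of Theorem~\ref{th:mainmajsimple} to $\log A$, $\log B_\xi$ and $f(x)=g(e^x)$ in the bounded-below case, and approximate via Lemma~\ref{lm:main2} in general (a small imprecision: Lemma~\ref{lm:main2} gives $\a^{(m)}\to\a$ with $\a\le\a^{(m)}$, not monotone decrease, and for condition~\eqref{it:gcond} only the \emph{upper} bound in~\eqref{eq:upperlowerconstants} is used, since here the integral sits outside the norm and one applies dominated convergence directly).

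The genuine gap is in the converse, in the limit step you correctly flag as the technical heart. Your reduction to showing $\log\det A\ge\int_\Xi\log\det B_\xi\,\di\nu$ is right, as is the choice $g(x)=x^{-p}$ with the trace norm, giving $\|A^{-p}\|_1\le\int_\Xi\|B_\xi^{-p}\|_1\,\di\nu(\xi)$. But your proposed route to
$\limsup_{p\searrow0}\frac1p\log\bigl(\frac1d\int_\Xi\|B_\xi^{-p}\|_1\,\di\nu\bigr)\le -\frac1d\int_\Xi\log\det B_\xi\,\di\nu$
does not go through as stated. Your dominating bound $\frac1p\log\bigl(\frac1d\|B_\xi^{-p}\|_1\bigr)\le\frac1{p_0}\log\bigl(\frac1d\|B_\xi^{-p_0}\|_1\bigr)$ controls the integrand of $\int_\Xi\frac1p\log(\cdot)\,\di\nu$, whereas the quantity you need is $\frac1p\log\bigl(\int_\Xi\cdot\,\di\nu\bigr)$; $\log$ and $\int$ cannot be swapped here (Jensen goes the wrong way, as you noticed), so ``reverse Fatou'' applied to the logarithms does not address the target quantity. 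The correct object is the difference quotient $\frac1p\bigl(\frac1d\|B_\xi^{-p}\|_1-1\bigr)$: since $\frac1d\int_\Xi\|B_\xi^{-p}\|_1\,\di\nu=1$ at $p=0$, the limit you want is the right derivative of $p\mapsto\frac1d\int_\Xi\|B_\xi^{-p}\|_1\,\di\nu$ at $0^+$, and it is the \emph{difference quotients} (not their logarithms) that are monotone in $p$, equal at $p=p_0$ to an integrable function by the hypothesis, and converge pointwise to $-\frac1d\log\det B_\xi$. This is exactly what the paper packages as Lemma~\ref{lm:main3}, via the product measure $\nu\otimes\mu$ on $\Xi\times[d]$, a mean-value-theorem bound, and Lebesgue dominated convergence. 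Your instinct that the integrability hypothesis is indispensable, and the power-mean inequality you derive, are both correct, but the bridge from those to the $\log\det$ inequality requires the derivative-at-zero argument of Lemma~\ref{lm:main3}, which your sketch does not supply.
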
 
\begin{remark}\rm
The integrability assumption is essential in the proof of the implication
\eqref{it:gcond}~$\implies$~\eqref{it:loglambda} with use of test functions $x^{-p}$ for $p>0$. Indeed, it is easy
to provide an example of $\Xi$, $\nu$ and $B_{\xi}$ such that
$-\int_\Xi\log\bigl(\det B_\xi\bigr)\,\di\nu(\xi)<\infty$ but $\int_\Xi\big\|B_\xi^{-p}\big\|_1\,d\nu(\xi)=\infty$ for all
$p>0$. Since there is no good test function other than $x^{-p}$, it seems difficult to remove or relax the
integrability assumption.
\end{remark}


\begin{proof}[Proof of \eqref{it:loglambda}~$\implies$~\eqref{it:gcond}]
Assume first that $A,B_\xi\in\cPp$ and $B_\xi\geq \varepsilon I$ for all $\xi\in\Xi$ with some $\varepsilon>0$. Since~\eqref{it:loglambda}  means that
$\vec{\lambda}(\log A)\prec \int_\Xi \vec{\lambda}(\log B_\xi)\,\di\nu(\xi)$, one can apply~\eqref{it:strconvmaj}~$\implies$~\eqref{it:strfunitaryinvmaj} of Theorem~\ref{th:mainmajsimple} to $\log A$, $\log B_\xi$ and $f(x):=g(e^x)$ for $g$~as in~\eqref{it:gcond}. Inequality~\eqref{F-III} then follows. For general $A, B_\xi\in\cP$ satisfying~\eqref{it:loglambda}, with $0<\varepsilon_m\searrow 0$ 
choose~$\a^{(m)}$ and $A^{(m)}$, $m\geq m_0$, as in the proof of~\eqref{it:loglambda}~$\implies$~\eqref{it:fcond}.  By the first case we have
\begin{align} 
\unorm{g(A^{(m)})}_\Phi\le\int_\Xi\unorm{g(B_\xi+\eps_mI)}_\Phi\,\di\nu(\xi) \,. \label{F-10}
\end{align}
When $g(0^+)<\infty$, letting $m\to\infty$ in \eqref{F-10} gives inequality \eqref{F-IIIineq} immediately. When
$g(0^+)=\infty$, the proof is similar to the last part of the proof \eqref{it:loglambda}~$\implies$~\eqref{it:fcond}
by noting that there is a constant $\beta>0$ such that
$\unorm{g(B_\xi+\eps_mI)}_\Phi\le\unorm{g(B_\xi)}_\Phi+\beta$ for all $\xi\in\Xi$ and $m\ge m_0$.
\end{proof}

\begin{proof}[Proof of \eqref{it:gcond}~$\implies$~\eqref{it:loglambda} under the integrability assumption]
The weak majorization relation
\begin{align}\label{F-11}
\sum_{i=1}^k\log\lambda_i(A)\le\sum_{i=1}^k\int_\Xi\log\lambda_i(B_\xi)\,\di\nu(\xi)\ , \qquad k\in [d]\,
\end{align}
is obvious from \eqref{it:gnondecrcond}~$\implies$~\eqref{it:loglambdaweak} in Theorem~\ref{th:mainweak} since
condition \eqref{it:gcond} is stronger than~\eqref{it:gnondecrcond}. It remains to prove that equality holds in \eqref{F-11} when $k=d$. Here, we use the assumption that
$\int_\Xi\big\|B_\xi^{-p_0}\big\|_1\,\di \nu(\xi)<\infty$ for some $p_0>0$. Inequality~\eqref{F-IIIineq} in~\eqref{F-III} is applied to
$\unorm{\cdot}=\|\cdot\|_1$ and $g(x)=x^{-p}$ for any $p>0$, so that
we have $\|A^{-p}\|_1\le\int_\Xi\big\|B_\xi^{-p}\big\|_1\,\di\nu(\xi)$. Therefore,
\begin{align}
{\frac{1}{p}}\log\biggl({\frac{1}{d}}\|A^{-p}\|_1\biggr)
\le{\frac{1}{p}}\log\int_\Xi{\frac{1}{d}}\big\|B_\xi^{-p}\big\|_1\,\di\nu(\xi)\,.
\end{align}
Thanks to Lemma \ref{lm:main3} separately shown below (and \eqref{F-12} as well), letting $p\searrow0$ in
\eqref{F-11} yields
\begin{align}
-{\frac{1}{d}}\log(\det A)\le-{\frac{1}{d}}\int_\Xi\log\bigl(\det B_\xi\bigr)\,\di\nu(\xi)\,,
\end{align}
which gives the desired equality.

\end{proof}

\begin{lemma}\label{lm:main3}
Let $\Xi$, $\nu$ and $B_{\xi}$ be as above. If $\int_\Xi\big\|B_\xi^{-p_0}\big\|_1\,d\nu(\xi)<\infty$ for some
$p_0>0$, then
\begin{align}
\lim_{p\searrow0}\biggl({\frac{1}{p}}\log\int_\Xi{\frac{1}{d}}\big\|B_\xi^{-p}\big\|_1\,\di\nu(\xi)\biggr)
=-{\frac{1}{d}}\int_\Xi\log\bigl(\det B_\xi\bigr)\,\di\nu(\xi)\,.
\end{align}
\end{lemma}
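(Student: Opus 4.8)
The plan is to exploit the convexity in $p$ of the logarithm of the normalized Schatten $p$-quasinorm of $B_\xi^{-1}$, together with a dominated-convergence argument justified by the hypothesis $\int_\Xi\|B_\xi^{-p_0}\|_1\,\di\nu(\xi)<\infty$. Write $F(p):=\log\int_\Xi \tfrac1d\|B_\xi^{-p}\|_1\,\di\nu(\xi)$ for $p\in(0,p_0]$. The key observation is that $p\mapsto \tfrac1p F(p)$ is monotone as $p\searrow0$, analogously to~\eqref{F-7}, \eqref{F-8} and~\eqref{F-12}: indeed, for each fixed $\xi$ the function $p\mapsto\log\bigl(\tfrac1d\|B_\xi^{-p}\|_1\bigr)=\log\bigl(\tfrac1d\sum_i\lambda_i(B_\xi)^{-p}\bigr)$ is convex in $p$ and vanishes at $p=0$, hence $p\mapsto \tfrac1p\log\bigl(\tfrac1d\|B_\xi^{-p}\|_1\bigr)$ is non-decreasing in $p$; an application of Hölder's inequality for the integral over $\Xi$ (or a direct convexity argument) then shows that $p\mapsto F(p)$ is itself convex with $F(0^+)=0$, so $p\mapsto\tfrac1p F(p)$ is non-decreasing as well. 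Consequently $\lim_{p\searrow0}\tfrac1p F(p)$ exists in $[-\infty,\infty)$ and equals the infimum over $p\in(0,p_0]$, which in particular is bounded above by $\tfrac1{p_0}F(p_0)<\infty$ thanks to the integrability hypothesis.

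Next I would evaluate the limit. For fixed $\xi$ we have, exactly as in~\eqref{F-12},
\begin{align}
\frac1p\log\Bigl(\frac1d\|B_\xi^{-p}\|_1\Bigr)\searrow-\frac1d\log\bigl(\det B_\xi\bigr)\qquad\text{as }p\searrow0,
\end{align}
and by the above monotonicity $\tfrac1p\log\bigl(\tfrac1d\|B_\xi^{-p}\|_1\bigr)\le\tfrac1{p_0}\log\bigl(\tfrac1d\|B_\xi^{-p_0}\|_1\bigr)$ for $p\le p_0$. This pointwise bound is $\nu$-integrable: using $\log x\le x$ we get $\tfrac1{p_0}\log\bigl(\tfrac1d\|B_\xi^{-p_0}\|_1\bigr)\le\tfrac1{p_0 d}\|B_\xi^{-p_0}\|_1$, whose integral is finite by hypothesis; a lower bound integrable uniformly is immediate from $\lambda_i(B_\xi)\le\sup_\xi\|B_\xi\|<\infty$. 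Hence the Lebesgue dominated convergence theorem gives
\begin{align}
\lim_{p\searrow0}\int_\Xi\frac1p\log\Bigl(\frac1d\|B_\xi^{-p}\|_1\Bigr)\di\nu(\xi)=-\frac1d\int_\Xi\log\bigl(\det B_\xi\bigr)\di\nu(\xi).
\end{align}

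It remains to pass from the integral of $\tfrac1p\log(\cdots)$ to $\tfrac1p F(p)=\tfrac1p\log\int_\Xi(\cdots)\,\di\nu(\xi)$. By Jensen's inequality (concavity of $\log$) one has $\tfrac1p F(p)\ge\int_\Xi\tfrac1p\log\bigl(\tfrac1d\|B_\xi^{-p}\|_1\bigr)\di\nu(\xi)$, giving $\liminf_{p\searrow0}\tfrac1p F(p)\ge-\tfrac1d\int_\Xi\log(\det B_\xi)\,\di\nu(\xi)$. For the reverse inequality I would argue that $\tfrac1p F(p)$, being non-decreasing in $p$, satisfies $\tfrac1p F(p)\le\tfrac1q F(q)$ for $p\le q$; applying Fatou's lemma (again legitimate because of the integrable domination) to $\exp$ of the quantities, or more simply bounding $\tfrac1d\|B_\xi^{-p}\|_1\le\bigl(\tfrac1d\|B_\xi^{-q}\|_1\bigr)^{p/q}$ (the same convexity) and taking $\di\nu$-averages with Hölder, yields $\tfrac1p F(p)\le\int_\Xi\tfrac1q\log\bigl(\tfrac1d\|B_\xi^{-q}\|_1\bigr)\di\nu(\xi)$ for all $0<p\le q\le p_0$; letting $p\searrow0$ and then $q\searrow0$ and using the dominated-convergence limit above gives $\limsup_{p\searrow0}\tfrac1p F(p)\le-\tfrac1d\int_\Xi\log(\det B_\xi)\,\di\nu(\xi)$. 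Combining the two bounds yields the claim. The main obstacle is the interchange of the limit with the two integrations—the inner eigenvalue average inside the logarithm and the outer $\di\nu$-integral—which is exactly where the integrability hypothesis $\int_\Xi\|B_\xi^{-p_0}\|_1\,\di\nu(\xi)<\infty$ is indispensable to supply an integrable majorant; without it the limit $\tfrac1p F(p)$ could fail to be finite even when $-\int_\Xi\log(\det B_\xi)\,\di\nu(\xi)<\infty$, as noted in the Remark following Theorem~\ref{th:mainadditional}.
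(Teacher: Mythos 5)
Your argument for the lower bound $\liminf_{p\searrow0}\tfrac1p F(p)\ge-\tfrac1d\int_\Xi\log(\det B_\xi)\,\di\nu(\xi)$ is correct: Jensen's inequality in the right direction, monotonicity of $p\mapsto\tfrac1p\log\bigl(\tfrac1d\|B_\xi^{-p}\|_1\bigr)$ from the convexity-with-zero-at-origin fact, the integrable majorant $\tfrac1{p_0 d}\|B_\xi^{-p_0}\|_1$, and dominated convergence. That half is sound.

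The reverse inequality as written contains a genuine gap. You claim that the pointwise bound $\tfrac1d\|B_\xi^{-p}\|_1\le\bigl(\tfrac1d\|B_\xi^{-q}\|_1\bigr)^{p/q}$ together with "taking $\di\nu$-averages with H\"older" yields
\begin{align}
\frac1p F(p)\le\int_\Xi\frac1q\log\Bigl(\frac1d\|B_\xi^{-q}\|_1\Bigr)\,\di\nu(\xi),\qquad 0<p\le q\le p_0 \,.
\end{align}
Writing $X_\xi:=\tfrac1d\|B_\xi^{-q}\|_1$ and $r:=p/q\in(0,1]$, this would require $\tfrac1r\log\int_\Xi X^r\,\di\nu\le\int_\Xi\log X\,\di\nu$. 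But Jensen for the convex exponential gives $\int_\Xi X^r\,\di\nu\ge\exp\bigl(r\int_\Xi\log X\,\di\nu\bigr)$, i.e. $\tfrac1r\log\int_\Xi X^r\,\di\nu\ge\int_\Xi\log X\,\di\nu$ --- the opposite direction. (The correct H\"older/Jensen conclusion from the pointwise bound is merely $\tfrac1p F(p)\le\tfrac1q F(q)$, which is the monotonicity you already established and does not identify the limit.) You gesture at "letting $p\searrow0$ and then $q\searrow0$," and indeed the limit $\lim_{r\searrow0}\tfrac1r\log\int_\Xi X^r\,\di\nu=\int_\Xi\log X\,\di\nu$ is a true classical fact, but it is a statement of the same flavour as the lemma being proved (a logarithmic limit of integrated $r$-th powers on the probability space $(\Xi,\nu)$), so invoking it without proof is essentially circular here.

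The clean fix, which is what the paper does, is to work with the unlogarithmed quantity $G(p):=\int_\Xi\tfrac1d\|B_\xi^{-p}\|_1\,\di\nu(\xi)=\int_{\Xi\times[d]}\lambda_i(B_\xi)^{-p}\,\di\widetilde\nu(\xi,i)$ on the product measure space, note that $G(0)=1$ so that $\lim_{p\searrow0}\tfrac1p F(p)=\lim_{p\searrow0}\tfrac1p\log G(p)=G'(0^+)$, and then differentiate under the integral. The mean value theorem gives $\tfrac{\lambda^{-p}-1}{p}=-\lambda^{-\theta p}\log\lambda\le-\lambda^{-p}\log\lambda$ for $\lambda\le1$, and for $p<p_1<p_0$ one bounds $-\lambda^{-p_1}\log\lambda\le C\,\lambda^{-p_0}$ with $C=\sup_{0<\lambda\le1}(-\lambda^{p_0-p_1}\log\lambda)<\infty$, which is $\widetilde\nu$-integrable by hypothesis. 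Dominated convergence then gives $G'(0^+)=-\tfrac1d\int_\Xi\log(\det B_\xi)\,\di\nu(\xi)$ in one stroke, with no need for separate $\liminf$ and $\limsup$ bounds. If you prefer to keep the two-sided structure, the upper bound can be salvaged from the elementary inequality $F(p)=\log G(p)\le G(p)-1$, giving $\tfrac1p F(p)\le\tfrac{G(p)-1}{p}$, whose $p\searrow0$ limit is again computed by the mean value theorem and dominated convergence --- but the H\"older/Jensen route you propose does not close.
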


\begin{proof}
The following proof is similar to that of \cite[Lemma 6.12]{BH14}.
The assumption implies that $B_\xi\in\cPp$ for $\nu$-a.e.\ $\xi\in\Xi$. So we may assume that $B_\xi\in\cPp$ for
all $\xi\in\Xi$. Moreover, replacing $B_\xi$ with $\alpha B_\xi$ for some $\alpha>0$, we may assume that
$B_\xi\le I$. Let $\widetilde\nu:=\nu\otimes\mu$ be the product measure of $\nu$ and the uniform probability
measure $\mu$ on $[d]$. Define
\begin{align}
\phi(\xi,i,p):=\lambda_i(B_\xi)^{-p},\qquad\xi\in\Xi \,,\ \ i\in[d],\ \ p>0.
\end{align}
It is clear that
\begin{align}\label{F-13}
\int_\Xi{\frac{1}{d}}\big\|B_\xi^{-p}\big\|_1\,\di\nu(\xi)
=\int_{\Xi\times[d]}\phi(\xi,i,p)\,\di\widetilde\nu(\xi,i)\,,\qquad p>0.
\end{align}
Hence $(\xi,i)\mapsto\phi(\xi,i,p_0)$ is integrable with respect to $\widetilde\nu$. According to the mean value theorem applied to the function $p\mapsto \lambda_i(B_\xi)^p$, we have 
\begin{align}
{\frac{\phi(\xi,i,p)-\phi(\xi,i,0)}{p}}=-\lambda_i(B_\xi)^{-\theta p}\log\lambda_i(B_\xi)
\le-\lambda_i(B_\xi)^{-p}\log\lambda_i(B_\xi) \,,
\end{align}
for some  $\theta\in(0,1)$ depending on $\xi,i,p$, and
\begin{align}
\lim_{p\searrow0}{\frac{\phi(\xi,i,p)-\phi(\xi,i,0)}{p}}=-\log\lambda_i(B_\xi) \,.
\end{align}
Furthermore, when $0<p<p_1<p_0$, we have
\begin{align}
-\lambda_i(B_\xi)^{-p}\log\lambda_i(B_\xi)&\le-\lambda_i(B_\xi)^{-p_1}\log\lambda_i(B_\xi) \\
&=\lambda_i(B_\xi)^{-p_0}\bigl\{-\lambda_i(B_\xi)^{p_0-p_1}\log\lambda_i(B_\xi)\bigr\} \,.
\end{align}
Since $\sup_{0<\lambda\le1}\bigl(-\lambda^{p_0-p_1}\log\lambda\bigr)<\infty$, we find that
\begin{align*}
(\xi,i)\mapsto-\lambda_i(B_\xi)^{-p_1}\log\lambda_i(B_\xi)
\end{align*} 
is integrable with respect to $\widetilde\nu$.
Hence the Lebesgue convergence theorem yields 
\begin{align}
{\frac{\di}{\di p}}\int_{\Xi\times[d]}\phi(\xi,i,p)\,\di\widetilde\nu(\xi,i)\bigg|_{p=0^+}
&=\lim_{p\searrow0}\int_{\Xi\times[d]}{\frac{\phi(\xi,i,p)-\phi(\xi,i,0)}{p}}\,\di\widetilde\nu(\xi,i) \\
&=-\int_{\Xi\times[d]}\log\lambda_i(B_\xi)\,\di\widetilde\nu(\xi,i) \\
&=-\int_\Xi{\frac{1}{d}}\sum_{i=1}^d\log\lambda_i(B_\xi)\,\di\nu(\xi) \\
&=-{\frac{1}{d}}\int_\Xi\log\bigl(\det B_\xi\bigr)\,\di\nu(\xi) \,,
\end{align}
where ${\frac{\di}{\di p}}(\cdot)\big|_{p=0^+}$ means the right derivative at $p=0$. Now we obtain the desired
equality since
\begin{align}
\lim_{p\searrow0}\biggl({\frac{1}{p}}\log\int_\Xi{\frac{1}{d}}\big\|B_\xi^{-p}\big\|_1\,\di\nu(\xi)\biggr)
&={\frac{\di}{\di p}}\int_{\Xi\times[d]}\phi(\xi,i,p)\,\di\widetilde\nu(\xi,i)\bigg|_{p=0^+}
\end{align}
as easily seen from \eqref{F-13}.
\end{proof}


\section{Application to multivariate norm inequalities}
\label{sec:app}

We recall the inequality~\cite[Thm.~3.2]{sutter16} specialized to the operator norm. For $A_\ell \in \cP$, $\ell \in [n]$ and $\theta \in (0, 1]$, we have
\begin{align}
\label{eq:sbt}
   \log \left\| \left| \prod_{\ell=1}^n A_\ell^\theta \right|^{\frac{1}{\theta}} \right\|
    \leq \int_{-\infty}^{\infty}   \log \left\| \prod_{\ell=1}^n A_\ell^{1+\ci t} \right\|\,\di\beta_{\theta}(t)  \,,
\end{align}
where
\begin{align}
  \di\beta_\theta(t):={\frac{\sin(\pi \theta)}{2{\theta}\bigl(\cos(\pi t)+\cos(\pi \theta)\bigr)}}\,\di t \,,
\end{align}
{and the functional calculus $A_\ell^z$ for any $z\in\mathbb{C}$ is defined with the convention that $0^z=0$.}
A concise proof of this special case is given in Appendix~\ref{app:shorter}. 
Using the rules of antisymmetric tensor power calculus presented in Lemma~\ref{lm:anti}, we find
\begin{align}
  \left| \prod_{\ell=1}^n (\land^{k} A_{\ell} )^\theta \right|^{\frac{1}{\theta}} = \land^{k} \left| \prod_{\ell=1}^n  A_{\ell}^\theta \right|^{\frac{1}{\theta}} 
  \quad \textrm{and} \quad \left| \prod_{\ell=1}^n \left( \land^{k} A_\ell\right)^{1+\ci t} \right| = \land^{k} \left| \prod_{\ell=1}^n A_\ell^{1+\ci t} \right| .
\end{align}
The inequality~\eqref{eq:sbt} applied to the matrices $\land^k A_\ell$ for all $k \in [d]$ thus immediately yields the {log-majorization} relation
\begin{equation}\label{F-6}
\log \vec{\lambda} \Biggl(\bigg|\prod_{\ell=1}^nA_\ell^\theta \bigg|^{\frac{1}{\theta}}\Biggr) \prec
 \int_{-\infty}^\infty \log
\vec{\lambda} \Biggl(\bigg|\prod_{\ell=1}^n A_\ell^{1+it}\bigg|\Biggr)\, \di \beta_\theta(t) \, ,
\end{equation}
where in particular the equality condition for {log-majorization} is satisfied since
\begin{align}
\det\bigg|\prod_{\ell=1}^nA_\ell^\theta\bigg|^{\frac{1}{\theta}}
= \det \bigg| \prod_{\ell=1}^n A_\ell^{1 + \ci t} \bigg|
= \prod_{\ell=1}^n\det A_\ell \,.
\end{align}
Hence we arrive at the following application of Theorem~\ref{th:main}
and Theorem~\ref{th:mainadditional}. Here  we again use the continuous extension and convention of Theorem~\ref{th:main}. %
\begin{corollary}
\label{cor:sbt}
Let $A_\ell \in \cP$ for $\ell \in [n]$, $\theta \in (0, 1]$ and $\unorm{\cdot}$ a unitarily invariant norm. Then, for any continuous function $f:(0,\infty)\to[0,\infty)$ such that $x \mapsto \log f ( {e^x})$ is convex on~$\mathbb{R}$, we have
\begin{align}
\log \Bigg|\!\Bigg|\!\Bigg| f\Biggl(\bigg|\prod_{\ell=1}^nA_\ell^\theta\bigg|^{\frac{1}{\theta}}\Biggr)\Bigg|\!\Bigg|\!\Bigg| \le
\int_{-\infty}^\infty  \log  \Bigg|\!\Bigg|\!\Bigg| f\Biggl(\bigg|\prod_{\ell=1}^n A_\ell^{1+it}\bigg|\Biggr)
\Bigg|\!\Bigg|\!\Bigg|\, \di\beta_\theta(t) \,. \label{eq:cor1}
\end{align}

Moreover, for any continuous  function $g:(0,\infty)\to[0,\infty)$ such that $x \mapsto g({e^x})$ is convex on~$\mathbb{R}$, we have
\begin{align} 
\Bigg|\!\Bigg|\!\Bigg| g\Biggl(\bigg|\prod_{\ell=1}^nA_\ell^\theta\bigg|^{\frac{1}{\theta}}\Biggr) \Bigg|\!\Bigg|\!\Bigg|
\le
\int_{-\infty}^\infty 
\Bigg|\!\Bigg|\!\Bigg| g\Biggl(\bigg|\prod_{\ell=1}^nA_\ell^{1+it}\bigg|\Biggr)
\Bigg|\!\Bigg|\!\Bigg|\, \di\beta_\theta(t) \,. \label{eq:cor2}
\end{align}
\end{corollary}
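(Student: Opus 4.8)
The plan is to combine the log-majorization relation~\eqref{F-6} with the characterizations established in Theorem~\ref{th:main} and Theorem~\ref{th:mainadditional}. Concretely, I would instantiate the abstract setup of Sections~\ref{sec:main}--\ref{sec:logmajorizationintegral} with $\Xi:=\mathbb{R}$ (which is $\sigma$-compact and metric), $\nu:=\beta_\theta$ (a probability measure by the standard integral $\int_{-\infty}^\infty \di\beta_\theta(t)=1$, which I would either cite from~\cite{sutter16} or verify), $A:=\bigl|\prod_{\ell=1}^n A_\ell^\theta\bigr|^{1/\theta}\in\cP$, and $B_t:=\bigl|\prod_{\ell=1}^n A_\ell^{1+\ci t}\bigr|\in\cP$. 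I would note that $t\mapsto B_t$ is continuous (since $z\mapsto A_\ell^z$ is continuous in the chosen functional calculus away from the zero eigenspaces, and products/absolute values preserve continuity) and uniformly norm-bounded, since $\|B_t\|=\bigl\||\prod A_\ell^{1+\ci t}|\bigr\|\le \prod_\ell \|A_\ell^{1+\ci t}\|=\prod_\ell\|A_\ell\|$, independently of $t$. Thus all hypotheses on $\Xi,\nu,A,B_\xi$ in the theorems are met.

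With this identification, the relation~\eqref{F-6} is precisely statement~\eqref{it:loglambda} of Theorem~\ref{th:main}, i.e.\ $\log\vec{\lambda}(A)\prec\int_\Xi\log\vec{\lambda}(B_\xi)\,\di\nu(\xi)$: the weak-log-majorization inequalities hold by~\eqref{F-6}, and the equality for $k=d$ is guaranteed by the displayed determinant computation $\det A=\det B_t=\prod_\ell\det A_\ell$ (note $\log$ of this is integrable, being constant in $t$). Then the implication~\eqref{it:loglambda}~$\implies$~\eqref{it:fcond} of Theorem~\ref{th:main}, applied with the given $f$ and an arbitrary unitarily invariant norm $\unorm{\cdot}$, yields exactly inequality~\eqref{eq:cor1} after substituting the definitions of $A$ and $B_t$. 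Similarly, the implication~\eqref{it:loglambda}~$\implies$~\eqref{it:gcond} of Theorem~\ref{th:mainadditional} (whose forward direction needs no integrability assumption) applied with the given $g$ yields~\eqref{eq:cor2}. The continuous extension of $f$ (resp.\ $g$) to $[0,\infty)$ and the convention $\unorm{f(A)}=\infty$ when $f(0^+)=\infty$ and $A$ is singular are already built into the statements of Theorems~\ref{th:main} and~\ref{th:mainadditional}, so they transfer verbatim.

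The only genuinely new work, and the step I expect to be the main (minor) obstacle, is verifying the two operator identities with antisymmetric tensor powers, namely
\begin{align}
\Bigl|\prod_{\ell=1}^n(\land^{k}A_\ell)^\theta\Bigr|^{1/\theta}=\land^{k}\Bigl|\prod_{\ell=1}^nA_\ell^\theta\Bigr|^{1/\theta},
\qquad
\Bigl|\prod_{\ell=1}^n(\land^{k}A_\ell)^{1+\ci t}\Bigr|=\land^{k}\Bigl|\prod_{\ell=1}^nA_\ell^{1+\ci t}\Bigr| ,
\end{align}
which are needed to pass from~\eqref{eq:sbt} applied to the matrices $\land^k A_\ell$ to the majorization~\eqref{F-6}. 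These follow by repeatedly invoking Lemma~\ref{lm:anti}: part~\eqref{it:asympow} gives $(\land^k A_\ell)^z=\land^k(A_\ell^z)$ for all $z\in\mathbb{C}$, part~\eqref{it:asymprod} turns the product $\prod_\ell\land^k(A_\ell^z)$ into $\land^k(\prod_\ell A_\ell^z)$, and then $|\land^k L|=\land^k|L|$ follows from parts~\eqref{it:asymfirst} and~\eqref{it:asymprod} together with uniqueness of the positive square root (so that $|\land^k L|=((\land^k L)^\dag(\land^k L))^{1/2}=(\land^k(L^\dag L))^{1/2}=\land^k((L^\dag L)^{1/2})=\land^k|L|$, using~\eqref{it:asympow} again for the exponent $1/2$), and finally one more application of~\eqref{it:asympow} handles the outer power $1/\theta$. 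Combined with Lemma~\ref{lm:anti}\eqref{it:asymeig}, which converts operator-norm inequalities for $\land^k$ into products of the top $k$ eigenvalues, this is exactly the argument sketched in the introduction for~\eqref{eq:three1}--\eqref{eq:three2}; here it is carried out for $n$ factors rather than three. Once these identities are in place, the corollary is immediate, so no further calculation is required.
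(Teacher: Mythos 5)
Your proposal is correct and follows essentially the same route as the paper: derive the log-majorization relation~\eqref{F-6} by applying~\eqref{eq:sbt} to the antisymmetric tensor powers $\land^k A_\ell$ via Lemma~\ref{lm:anti}, verify the $k=d$ equality condition by the determinant computation, and then invoke the implications \eqref{it:loglambda}~$\implies$~\eqref{it:fcond} of Theorem~\ref{th:main} and \eqref{it:loglambda}~$\implies$~\eqref{it:gcond} of Theorem~\ref{th:mainadditional} with $\Xi=\mathbb{R}$, $\nu=\beta_\theta$, $A=\bigl|\prod_\ell A_\ell^\theta\bigr|^{1/\theta}$, $B_t=\bigl|\prod_\ell A_\ell^{1+\ci t}\bigr|$. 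Your extra verifications (continuity and uniform boundedness of $t\mapsto B_t$, the $|\land^k L|=\land^k|L|$ identity, and the remark that the forward direction of Theorem~\ref{th:mainadditional} needs no integrability hypothesis) are all correct and merely make explicit what the paper leaves implicit.
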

These inequalities generalize and strengthen the results in~\cite{sutter16}.
For example, consider the function $f: x \mapsto x^{q}$ for $q\in\mathbb{R}\setminus\{0\}$ and the {trace norm} to find
\begin{corollary}
Let $A_\ell\in\cP$ for $\ell\in [n]$ and $\theta \in (0,1]$. Then 
we have
\begin{align}
\log \tr \bigg|\prod_{\ell=1}^nA_\ell^\theta \bigg|^{\frac{q}{\theta}}
 \le
\int_{-\infty}^\infty \log  \tr \bigg|\prod_{\ell=1}^n A_\ell^{1+it}\bigg|^q\,\di\beta_\theta(t) \,, \label{eq:unify}
\end{align}
for any $q\in\mathbb{R}\backslash\{0\}$. 
\end{corollary}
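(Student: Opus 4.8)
The plan is to read off \eqref{eq:unify} as the special case of inequality~\eqref{eq:cor1} in Corollary~\ref{cor:sbt} obtained by choosing $f(x):=x^q$ and the trace norm $\unorm{\cdot}=\|\cdot\|_1$. First I would verify the hypothesis of Corollary~\ref{cor:sbt} for this $f$: the function $f\colon(0,\infty)\to[0,\infty)$ is continuous, and $x\mapsto\log f(e^x)=qx$ is affine and hence convex on $\mathbb{R}$, for every $q\in\mathbb{R}\setminus\{0\}$. For $q>0$ the continuous extension to $[0,\infty)$ is the usual one with $f(0)=0$; for $q<0$ one has $f(0^+)=\infty$, and the convention $\unorm{f(A)}=\infty$ for non-invertible $A\in\cP$ from Theorem~\ref{th:main} — under which Corollary~\ref{cor:sbt} is stated — is the relevant one.

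Next I would evaluate the two norm expressions. Writing $M:=\big|\prod_{\ell=1}^nA_\ell^\theta\big|^{1/\theta}\in\cP$ and $N_t:=\big|\prod_{\ell=1}^nA_\ell^{1+\ci t}\big|\in\cP$, functional calculus gives $f(M)=M^q=\big|\prod_\ell A_\ell^\theta\big|^{q/\theta}\in\cP$ and $f(N_t)=N_t^q=\big|\prod_\ell A_\ell^{1+\ci t}\big|^{q}\in\cP$, so that $\|f(M)\|_1=\tr\big|\prod_\ell A_\ell^\theta\big|^{q/\theta}$ and $\|f(N_t)\|_1=\tr\big|\prod_\ell A_\ell^{1+\ci t}\big|^{q}$. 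Inserting these into \eqref{eq:cor1} gives exactly \eqref{eq:unify}.

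I do not anticipate a genuine obstacle here; the only point worth a remark is the degenerate case $q<0$ with $\prod_\ell A_\ell$ non-invertible. Then $M$ and each $N_t$ are non-invertible as well, since $\det M=\det N_t=\prod_\ell\det A_\ell=0$, so by the convention both sides of \eqref{eq:unify} equal $+\infty$ and the inequality is trivial; in all remaining cases every quantity in sight is finite and positive and \eqref{eq:unify} is the verbatim specialization of \eqref{eq:cor1}. All of the substantive work — lifting \eqref{eq:sbt} to the log-majorization \eqref{F-6} via antisymmetric tensor powers and then applying Theorem~\ref{th:main} and Theorem~\ref{th:mainadditional} to obtain Corollary~\ref{cor:sbt} — has already been carried out above.
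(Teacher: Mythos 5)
Your proposal is correct and follows exactly the paper's one-line derivation: specialize Corollary~\ref{cor:sbt} (inequality~\eqref{eq:cor1}) to $f(x)=x^q$ and the trace norm, noting that $x\mapsto\log f(e^x)=qx$ is affine hence convex. The additional care you take with the degenerate case $q<0$, $\det\prod_\ell A_\ell=0$ (both sides $+\infty$ by the stated convention) is a nice bit of due diligence that the paper leaves implicit.
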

Indeed, this is a strengthening of both~\cite[Thm.~3.2]{sutter16} (which establishes~\eqref{eq:unify} for $q\geq 1$) and~\cite[Thm.~2.3]{sutter16} (which establishes a looser bound for $q > 0$  where the integration on the right-hand side of~\eqref{eq:unify} is replaced by a supremum over $t$) to the case of arbitrary non-zero $q\in\mathbb{R}$.

Finally note that if $A_\ell \in \cPp$, all of these inequalities remain valid in the limit $\theta \to 0$, where the Lie-Trotter product formula asserts that
\begin{align}
  \bigg|\prod_{\ell=1}^nA_\ell^\theta\bigg|^{\frac{1}{\theta}} \longrightarrow \exp \left( \sum_{\ell=1}^n \log A_\ell \right) .
\end{align}
Equations~\eqref{eq:cor1} and~\eqref{eq:cor2} thus hold with this substitution and $\theta = 0$.

\begin{corollary}
\label{cor:sbt-gt}
Let $A_{\ell} \in \cPp$ for $\ell \in [n]$. With $\unorm{\cdot}$, $f$ and $g$ given as in Corollary~\ref{cor:sbt}, \begin{align}
\log \Bigg|\!\Bigg|\!\Bigg| f\Biggl(\exp \left( \sum_{\ell=1}^n \log A_\ell \right)\Biggr)\Bigg|\!\Bigg|\!\Bigg| &\le
\int_{-\infty}^\infty  \log  \Bigg|\!\Bigg|\!\Bigg| f\Biggl(\bigg|\prod_{\ell=1}^n A_\ell^{1+it}\bigg|\Biggr)
\Bigg|\!\Bigg|\!\Bigg|\,\di\beta_0(t) \,, \\
\Bigg|\!\Bigg|\!\Bigg| g\Biggl(\exp \left( \sum_{\ell=1}^n \log A_\ell \right)\Biggr) \Bigg|\!\Bigg|\!\Bigg|
&\le
\int_{-\infty}^\infty 
\Bigg|\!\Bigg|\!\Bigg| g\Biggl(\bigg|\prod_{\ell=1}^nA_\ell^{1+it}\bigg|\Biggr)
\Bigg|\!\Bigg|\!\Bigg|\, \di\beta_0(t) \,. 
\end{align}
\end{corollary}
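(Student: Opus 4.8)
\subsection*{Proof of Corollary~\ref{cor:sbt-gt}}

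The plan is to derive the corollary from Corollary~\ref{cor:sbt} by passing to the limit $\theta\searrow0$ in the inequalities~\eqref{eq:cor1} and~\eqref{eq:cor2}. Fix $A_\ell\in\cPp$ for $\ell\in[n]$, a unitarily invariant norm $\unorm{\cdot}$, and $f$ as in Corollary~\ref{cor:sbt}; I will describe the argument for~\eqref{eq:cor1}, the one for~\eqref{eq:cor2} being entirely analogous. If $f\equiv0$ both sides of the claimed inequality equal $-\infty$, so we may assume $f\not\equiv0$, whence $f>0$ on $(0,\infty)$ by Lemma~\ref{lm:main1}\,\eqref{lm:main1(a)}; since all operators that will appear are positive definite, the conventions for $\unorm{f(\cdot)}$ from Theorem~\ref{th:main} never intervene and every norm encountered is finite and strictly positive.

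First I would treat the left-hand side. Set $C_\theta:=\bigl|\prod_{\ell=1}^nA_\ell^\theta\bigr|^{1/\theta}$ for $\theta\in(0,1]$ and $C_0:=\exp\bigl(\sum_{\ell=1}^n\log A_\ell\bigr)$. The map $\theta\mapsto C_\theta$ is clearly continuous on $(0,1]$, and the Lie-Trotter product formula states that $C_\theta\to C_0$ in operator norm as $\theta\searrow0$, so $\theta\mapsto C_\theta$ is a continuous $\cPp$-valued map on $[0,1]$. Its image is therefore compact, and the spectra of all the $C_\theta$, $\theta\in[0,1]$, are contained in a single compact subinterval $[c,C]\subset(0,\infty)$. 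As $f$ is continuous (hence uniformly continuous on $[c,C]$), continuity of the functional calculus on this fixed compact set gives $f(C_\theta)\to f(C_0)$, and continuity of $\unorm{\cdot}$ then gives $\unorm{f(C_\theta)}\to\unorm{f(C_0)}$ as $\theta\searrow0$.

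Next I would treat the right-hand side, where $\theta$ enters only through the measure $\beta_\theta$ and not through the integrand $h(t):=\log\unorm{f\bigl(\bigl|\prod_{\ell=1}^nA_\ell^{1+\ci t}\bigr|\bigr)}$. The crucial point is that $h$ is a fixed \emph{bounded} continuous function on $\bR$: writing $A_\ell^{1+\ci t}=A_\ell A_\ell^{\ci t}$ with $A_\ell^{\ci t}$ unitary and using submultiplicativity of the largest singular value and of the reciprocal of the smallest, one gets $\lambda_d\bigl(\bigl|\prod_\ell A_\ell^{1+\ci t}\bigr|\bigr)\ge\prod_\ell\lambda_d(A_\ell)>0$ and $\lambda_1\bigl(\bigl|\prod_\ell A_\ell^{1+\ci t}\bigr|\bigr)\le\prod_\ell\lambda_1(A_\ell)$, so the eigenvalues of $\bigl|\prod_\ell A_\ell^{1+\ci t}\bigr|$ stay in a compact subinterval of $(0,\infty)$ uniformly in $t\in\bR$; continuity and positivity of $f$ there bound $h$ above and below. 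It then suffices to show $\int h\,\di\beta_\theta\to\int h\,\di\beta_0$ as $\theta\searrow0$. Since the density of $\beta_\theta$ converges pointwise to that of $\beta_0$ as $\theta\searrow0$ and, for all sufficiently small $\theta>0$, admits a fixed integrable majorant (readily read off from the explicit formula for $\beta_\theta$), this follows from the dominated convergence theorem applied to the bounded continuous $h$. Combining the two limits with~\eqref{eq:cor1} at parameter $\theta$ and letting $\theta\searrow0$ yields the first inequality of the corollary; repeating the argument verbatim for~\eqref{eq:cor2} (now with $g$, without the logarithm, using $g\ge0$ in place of strict positivity) yields the second.

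The step I expect to require the most care is justifying the exchange of limit and integral on the right-hand side, which is exactly where the two uniformities above are indispensable: the confinement of the eigenvalues of $\bigl|\prod_\ell A_\ell^{1+\ci t}\bigr|$ to a $t$-independent compact subset of $(0,\infty)$ (so that the integrand is a single fixed bounded function) and the existence of a common integrable majorant for the densities $\di\beta_\theta/\di t$ as $\theta\searrow0$.
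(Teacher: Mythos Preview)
Your proposal is correct and follows exactly the approach the paper takes: pass to the limit $\theta\searrow0$ in Corollary~\ref{cor:sbt}, invoking the Lie--Trotter formula for the left-hand side. In fact you supply considerably more care than the paper, which simply asserts that the inequalities ``remain valid in the limit $\theta\to0$'' without spelling out the uniform spectral bounds on $\bigl|\prod_\ell A_\ell^{1+\ci t}\bigr|$ or the dominated-convergence argument for $\int h\,\di\beta_\theta\to\int h\,\di\beta_0$.
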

\noindent These inequalities generalize~\cite[Cor.~3.3]{sutter16}, 
 where the result was shown for the norms $\|\cdot\|_p$ with $p\geq 1$ and
$f$ and~$g$ equal to the identity function. Using this inequality with~$n=4$ and $p=2$,
the authors of~\cite{sutter16} obtained the best currently known lower bound on the remainder term in the strong subadditivity inequality involving the universal rotated Petz recovery map introduced in~\cite{JungeRennerSutterWildeWinter16}. (The first such remainder terms involving recovery maps were recently presented in~\cite{fawzirenner14}.) It remains an open problem whether this application to quantum information can be extended using the  strengthened inequalities obtained here.

\subsection*{Acknowledgements} 
We thank the anonymous referees for their suggestions on this manuscript. FH and MT thank the Zentrum Mathematik at Technische Uni\-ver\-sit\"at M\"un\-chen for its hospitality while part of this work was completed. MT thanks Mario Berta and David Sutter for helpful discussions. {FH acknowledges support by Grant-in-Aid for Scientific Research (C)26400103.} MT is funded by an ARC Discovery Early Career Researcher Award (DECRA) fellowship and acknowledges support from the ARC Centre of Excellence for Engineered Quantum Systems (EQUS). RK is supported by the Technische {Universit\"at M\"unchen} - Institute for Advanced Study, funded by the German Excellence Initiative and the European Union Seventh Framework Programme under grant agreement no. 291763. He acknowledges additional support by DFG project no.~KO5430/1-1. 


\appendix

\section{A short proof of~\eqref{eq:sbt}}
\label{app:shorter}

Hirschman's strengthening of Hadamard's three line theorem~\cite{hirschman52} reads:
\begin{lemma}[Hirschman]
\label{lm:hirschman} Let $S:=\left\{  z\in\mathbb{C}:0\leq\operatorname{Re}(z) \leq1\right\}$ and let $g(z)$ be uniformly bounded on $S$, holomorphic on the interior of $S$ and continuous up to the boundary. Then for $\theta\in[0,1]$, we have
\begin{equation}
\log\left\vert g(\theta)\right\vert \leq\int_{-\infty}^{\infty}\log  \left\vert g(\ci t)\right\vert ^{1-\theta}\,\di\beta_{1-\theta}(t)
+\int_{-\infty}^{\infty}\log  \left\vert g(1+\ci t)\right\vert ^{\theta}
 \,\di\beta_{\theta}(t)\ .
\end{equation}
\end{lemma}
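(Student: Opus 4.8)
The plan is to derive Lemma~\ref{lm:hirschman} by reducing the general statement to the classical Hadamard three-line theorem after an explicit conformal change of variables mapping the strip $S$ onto the unit disk. First I would dispose of the trivial boundary cases $\theta=0$ and $\theta=1$, where the inequality reduces to a tautology, and also treat separately the degenerate case in which $g$ vanishes identically on a boundary line (or $g\equiv 0$); in the latter situation the right-hand side is $-\infty$ and there is nothing to prove. So I may assume $g$ is not identically zero, hence $\log|g|$ is subharmonic on the interior of $S$ and the boundary integrals make sense (possibly taking the value $-\infty$).

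The key step is to represent $\log|g(\theta)|$ as a boundary integral using the harmonic measure of the point $\theta\in(0,1)$ with respect to the strip. Concretely, I would use the conformal map $\varphi:S\to\overline{\mathbb{D}}$ sending the interior of $S$ to the open unit disk, under which the two boundary lines $\{\operatorname{Re}z=0\}$ and $\{\operatorname{Re}z=1\}$ map to complementary arcs of the unit circle. Pushing forward, $h:=\log|g\circ\varphi^{-1}|$ is subharmonic and bounded above on $\mathbb{D}$, so by the sub-mean-value / Poisson-integral inequality for subharmonic functions,
\begin{align}
h(\varphi(\theta)) \le \int_{\partial\mathbb{D}} h^*(\zeta)\, P_{\varphi(\theta)}(\zeta)\, \di\zeta \,,
\end{align}
where $h^*$ denotes the radial boundary values (which exist a.e.\ and agree with $\log|g|$ on the two lines by continuity of $g$ up to the boundary) and $P$ is the Poisson kernel. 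Splitting the circle into the two arcs corresponding to $\operatorname{Re}z=0$ and $\operatorname{Re}z=1$ and changing variables back to $t\in\mathbb{R}$ converts this into an inequality of the form
\begin{align}
\log|g(\theta)| \le \int_{-\infty}^\infty \log|g(\ci t)|\, \di\mu_0^{(\theta)}(t) + \int_{-\infty}^\infty \log|g(1+\ci t)|\, \di\mu_1^{(\theta)}(t) \,,
\end{align}
where $\mu_0^{(\theta)}$ and $\mu_1^{(\theta)}$ are the harmonic measures of the two lines seen from $\theta$, with total masses $1-\theta$ and $\theta$ respectively (these masses are just the real-part coordinate, since $\operatorname{Re}z$ is the harmonic function on $S$ with boundary values $0$ and $1$).

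The final step is the explicit computation of the harmonic measure. A standard calculation — e.g.\ composing the map $z\mapsto e^{\ci\pi z}$ (taking $S$ to the upper half-plane) with a Möbius map to the disk, or simply verifying that the proposed density solves the Dirichlet problem — shows that $\di\mu_1^{(\theta)}(t) = \theta\,\di\beta_\theta(t)$ with $\di\beta_\theta(t) = \frac{\sin(\pi\theta)}{2\theta(\cos(\pi t)+\cos(\pi\theta))}\,\di t$, and symmetrically $\di\mu_0^{(\theta)}(t)=(1-\theta)\,\di\beta_{1-\theta}(t)$; one checks $\int_{-\infty}^\infty \di\beta_\theta(t)=1$ so these are probability measures scaled by $\theta$ and $1-\theta$. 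Substituting yields exactly the claimed inequality. I expect the main obstacle to be the careful justification of the boundary-value step: one must ensure that the sub-mean-value inequality for the bounded-above subharmonic function $\log|g|$ holds with the honest boundary data (using continuity of $g$ up to $\partial S$ together with boundedness to rule out any loss of mass "at infinity" along the two lines), and one must confirm that the improper integrals over $\mathbb{R}$ converge or diverge to $-\infty$ consistently — a slightly delicate point because $\di\beta_\theta$ has only polynomial (in fact exponential) decay balancing possible logarithmic blow-up of $\log|g|$; boundedness of $g$ on $S$ makes the positive part of the integrand bounded, so only divergence to $-\infty$ can occur, which is harmless for an upper bound.
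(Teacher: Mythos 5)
The paper does not actually prove Lemma~\ref{lm:hirschman}; it is stated and cited from Hirschman's original article~\cite{hirschman52}, and then used as a black box in Appendix~A. So there is no in-paper proof to compare against, but your proposed route is in fact the standard proof of Hirschman's inequality: view $\log|g|$ as a subharmonic function bounded above on the strip, pass to the disk by a conformal map, apply the Poisson-integral inequality for bounded-above subharmonic functions, and then compute the harmonic measure of the two boundary lines explicitly (equivalently, push the half-plane Poisson kernel through $z\mapsto e^{\ci\pi z}$) to recover the densities $(1-\theta)\di\beta_{1-\theta}$ and $\theta\,\di\beta_\theta$. This is correct in outline, and you also correctly locate the one genuinely delicate step: justifying the Poisson-integral inequality with the a.e.\ radial boundary values, which is handled by applying the Poisson inequality on sub-disks of radius $r<1$ and letting $r\to1$, using boundedness of $g$ together with Fatou's lemma to control the limit and to rule out any escape of harmonic-measure mass at the two points of $\partial\mathbb{D}$ corresponding to $\pm\ci\infty$ (which indeed carry zero measure, consistent with $(1-\theta)+\theta=1$).

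One small inaccuracy worth noting: the degenerate case ``$g$ vanishes identically on a boundary line but $g\not\equiv 0$'' cannot occur. By the Schwarz reflection principle across that line and the identity theorem, a bounded holomorphic function on the strip that is continuous up to and vanishes on an entire boundary line must vanish identically. So the only genuinely degenerate case is $g\equiv 0$, where both sides are $-\infty$; your argument still goes through, but the case analysis collapses. Also, the formula you quote for $\di\beta_\theta$ has $\cos(\pi t)$ where it should read $\cosh(\pi t)$ in the denominator (this is a typo carried over from the paper's definition); the harmonic-measure computation via $z\mapsto e^{\ci\pi z}$ produces $\sin(\pi\theta)/\bigl(2(\cosh(\pi t)+\cos(\pi\theta))\bigr)\,\di t$, which integrates to $\theta$ as required.
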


Now let $G(z)$ be a uniformly bounded holomorphic function with values in $\mathbb{C}^{d \times d}$. Fix $\theta \in (0,1)$ and let $u, v \in \mathbb{C}^{d}$ be normalized vectors such that $\langle u, G(\theta) v \rangle = \| G(\theta) \|$. 
Consequently, $g(z) := \langle u, G(z) v \rangle$ can be bounded as $|g(z)| \leq \| G(z) \|$ for all~$z \in S$. It satisfies the assumptions of Hirschman's theorem, yielding
\begin{align}
   \log\| G(\theta) \|
    \leq\int_{-\infty}^{\infty}\log  \| G(\ci t) \|^{1-\theta}\,\di \beta_{1-\theta}(t)
+\int_{-\infty}^{\infty}\log  \| G(1+\ci t) \|^{\theta}\,\di\beta_{\theta}(t)
 \,. \label{eq:stein}
\end{align}

As in~\cite[Thm.~3.2]{sutter16}, consider now a set of $n$ matrices $A_{\ell} \in {\cP}$, $\ell \in [n]$ and set
$G(z) = \prod_{k=1}^n A_{\ell}^z$. Since $G(\ci t)$ is a product of isometries, the first term in the right-hand side of~\eqref{eq:stein} is non-positive and after dividing by $\theta$ we find
\begin{align}
   \log \left\| \left| \prod_{\ell=1}^n A_\ell^{\theta} \right|^{\frac{1}{\theta}} \right\|
    \leq \int_{-\infty}^{\infty} \log  \left\| \prod_{\ell=1}^n A_\ell^{1+\ci t} \right\| \di \beta_{\theta}(t)\,.
\end{align}


\end{document}